\journal{Communications in Mathematical Physics}
\DeclareMathOperator{\Span}{span}
\newtheorem{theorem}{\color{black}\indent \textbf{Theorem}}[section]
\newtheorem{lemma}{\color{black}\indent Lemma}[section]
\newtheorem{proposition}{\color{black}\indent Proposition}[section]
\newtheorem{definition}{\color{black}\indent Definition}[section]
\newtheorem{remark}{\color{black}\indent Remark}[section]
\newtheorem{corollary}{\color{black}\indent Corollary}[section]
\newtheorem{example}{\color{black}\indent Example}[section]
\newcommand{\iotaop}{\iota}
\newcommand{\R}{\mathbb{R}}
\begin{document}
	
	\begin{frontmatter}

		\title{q-Cosymplectic  Geometry, Integrability and  Reduction}
\author{
Melvin Leok$^{a}$,
Cristina Sardón$^{b}$,
Xuefeng Zhao$^{c}$ \\[1ex]
$^{a}$Department of Mathematics, University of California, San Diego, 
9500 Gilman Drive, Dept. 0112, La Jolla, CA 92093-0112, USA \\
$^{b}$Department of Applied Mathematics, Universidad Politécnica de Madrid, 
Av. Juan de Herrera 6, 28040, Madrid, Spain \\
$^{c}$College of Mathematics, Jilin University, Changchun, 130012, P. R. China \\[1ex]
\texttt{mleok@ucsd.edu}, \texttt{mariacristina.sardon@upm.es}, \texttt{zhaoxuef@jlu.edu.cn}
}

\date{}
\thanks{*Corresponding author at: Department of Applied Mathematics, 
Universidad Politécnica de Madrid, C/ José Gutiérrez Abascal, 2, 
28006, Madrid, Spain}

		\begin{abstract}
			In the present paper, we define the concept of a \( q \)-cosymplectic manifold, on which we study the Hamiltonian, gradient, local gradient, and \( q \)-evolution vector fields. Several Liouville--Arnold-type theorems and a \( q \)-cosymplectic Marsden--Weinstein reduction theorem are established. We also provide physical examples illustrating the application of the structure to multitime dynamics (Fast-slow dynamical system). To make our work more self-contained, we include detailed proofs for some results that may resemble those known for cosymplectic manifolds.

		\end{abstract}
		
		\begin{keyword}
			q-Cosymplectic manifold, Symmetry, Hamiltonian system, First integral.
		\end{keyword}
	\end{frontmatter}
	\section{Introduction}

In this article, we introduce the concept of a 
\( q \)-cosymplectic manifold and present the related properties of its 
\( q \)-cosymplectic structure. We also introduce the notions of gradient vector fields, local gradient vector fields, Hamiltonian vector fields, and 
\( q \)-evolution vector fields on such a manifold, and establish a corresponding Liouville--Arnold-type integrability theorem and a \( q \)-cosymplectic Marsden--Weinstein reduction theorem. Additionally, we explore in detail the relationship between symplectic manifolds and 
\( q \)-cosymplectic manifolds. 
	
Symplectic geometry has been effectively utilized in the study of mechanical systems in physics \cite{Arnold,Bogoyavlenskij,Bogoyavlenskij2,Guillemin,Zhao1,Zhao2}. Early contributions to this field date back to Lagrange and Poisson, who investigated the dynamics of rigid bodies and celestial mechanics \cite{Lagrange,Marle}. As is well known, the classical framework of Hamiltonian dynamics is a symplectic manifold \( (M,\Omega) \), where \( \Omega \) is a non-degenerate, closed 2-form \cite{Libermann,Ortega,Zhao3}. The associated Hamilton’s equations are the equations of the flow of a Hamiltonian vector field \( X_H \), determined 
by a Hamiltonian function \( H \) via 
the correspondence \( i_{X_H}\Omega = dH \). Among the class of Hamiltonian systems, the subclass of completely integrable Hamiltonian systems plays a central role.  
The study of completely integrable Hamiltonian systems on symplectic manifolds, which admit a complete sequence of first integrals, began with the pioneering work of Liouville in 1855 on finding local solutions by quadratures \cite{Liouville}. Today, the geometry of such systems is well understood, thanks to the modern formulation by Arnold \cite{Arnold3}, who gave the Liouville--Arnold theorem.

A Liouville--Arnold integrable Hamiltonian system on a \( 2n \)-dimensional symplectic manifold is defined by \( n - 1 \) additional first integrals \( f_i \) with the property that each integral (including \( H \)) is preserved by the Hamiltonian flow of the other 
integrals. This condition is classically known as the involutivity of the first integrals and can be expressed in terms 
of the Poisson bracket as \( \{f_i, f_j\} = 0 \). Since the twentieth century, the Marsden--Weinstein reduction theorem \cite{Marsden} has played a crucial role in describing Hamiltonian systems on manifolds admitting a Lie group of symmetries of the Hamiltonian and the symplectic manifold.

As is well known, the description of time-dependent mechanical systems cannot be directly approached using symplectic geometry \cite{Abraham,Albert,Cappelletti}. Instead, it is possible to modify symplectic geometry to accommodate them. For instance, one can use cosymplectic geometry \cite{Albert,Cappelletti,deon,Libermann2,Libermann3} to deal with time-dependent Hamiltonians within the symplectic framework \cite{deLucas,Zawora}. In the cosymplectic setting, time-dependent Hamiltonian systems are described via a closed differential two-form \( \Omega \) and a closed, non-vanishing one-form \( \lambda \), both defined on a manifold \( M \), such that \( \ker \Omega \oplus \ker \lambda = TM \). Hence, \( M \) is odd-dimensional, and \( (M, \Omega, \lambda) \) is called a \emph{cosymplectic manifold}.

Beyond the philosophical inquiry into the existence of multiple time dimensions, multi-time phenomena arise naturally in various applied and theoretical contexts. In communication theory, for example, network scheduling often necessitates the consideration of distinct temporal layers. Similarly, traffic flow models may involve multiple time scales, particularly in the presence of localized disruptions such as traffic jams. In such systems, processes initiated from a common configuration may evolve according to different optimization principles, with the coupling mechanism determined solely by the initial data. In this direction, one can refer to the work \cite{Gu}, which examines traffic flow problems on inhomogeneous lattices.

From a mathematical standpoint, traffic flow serves as a classical source of conservation laws, exemplified by the derivation of the Burgers equation. Additional instances of multi-time structures occur in general relativity, electromagnetism, and the Kepler problem, where the temporal framework often extends beyond a single evolution parameter. For further discussion, the reader is referred to \cite{Neagu,Stickforth}. A particularly striking example of theories involving multiple time dimensions is found in string theory, which, in some formulations, accommodates more than two temporal parameters; see \cite{Evans,Zwiebach}. For additional theories involving multi-time dependence, one may consult \cite{Bazan,Cardin,Davini}.

In the present paper, we study multi-time dependent mechanical systems defined on a $q$-cosymplectic manifold. Specifically, we utilize a closed differential two-form $\Omega$ together with $q$ closed, nowhere-vanishing one-forms $\lambda_i$, for $i = 1, \dots, q$, to describe the $q$-cosymplectic structure, which will be precisely defined later. We also investigate certain classes of systems on such manifolds, including Hamiltonian systems and evolution systems, focusing on properties such as integrability and Marsden–Weinstein reduction. Furthermore, a concrete application (fast-slow dynamical system)  of a multi-time system is provided to illustrate the theory.

An outline of the paper is as follows. In Section 2, we introduce the concept of a $q$-cosymplectic manifold, explore several of its intrinsic properties, and review some fundamental theorems which we will use. In Section 3, we define Hamiltonian vector fields, gradient vector fields, and $q$-evolution vector fields on $q$-cosymplectic manifolds. We study their key properties and introduce a Poisson bracket on the space of smooth functions defined on a $q$-cosymplectic manifold. Furthermore, we establish several integrability theorems and investigate the mutual relations and transformations between $q$-cosymplectic manifolds and symplectic manifolds.
In Section 4, we present the Marsden–Weinstein reduction theorem on $q$-cosymplectic manifolds. In Section 5, we discuss the application of the proposed $q$-cosymplectic manifold framework to slow-fast Hamiltonian systems.

    \section{$q$-cosymplectic manifold and Liouville torus }
\begin{definition}[\( q \)-cosymplectic manifolds]
Let \( n, q \) be positive integers, and let \( M \) be a smooth manifold of dimension \( 2n + q \). A \( q \)-cosymplectic structure on \( M \) is a collection consisting of a closed 2-form \( \Omega \) and a tuple of \( q \) (pointwise) linearly independent, non-vanishing, closed 1-forms \( \vec{\lambda} = (\lambda_1, \dots, \lambda_q) \), together with a splitting
\[
TM = \mathcal{R} \oplus \xi
\]
of the tangent bundle, satisfying the following conditions:

\begin{itemize}
    \item[(i)] \( \xi := \bigcap_{i=1}^q \ker \lambda_i; \)
    
    \item[(ii)] There exists a unique collection of linearly independent vector fields \( R_1, \dots, R_q \), tangent to \( \mathcal{R} \), satisfying the relations
    \[
    \lambda_i(R_j) = \delta_{ij}, \quad \text{for all } i,j = 1, \dots, q;
    \]
    
    \item[(iii)] \( \mathcal{R} = \mathrm{span}\{ R_1, \dots, R_q \}; \)
    
    \item[(iv)] \( \ker \Omega = \mathcal{R} \), and the restriction \( \Omega|_\xi \) is non-degenerate.
\end{itemize}

The vector fields \( R_i \), \( i = 1, \dots, q \), are called the \emph{Reeb vector fields}, and the forms \( \lambda_i \), \( i = 1, \dots, q \), are called the \emph{cosymplectic forms}.
\end{definition}

\begin{remark}
The linear independence of the \( \lambda_i \) implies that \( \xi \) has constant rank \( 2n \). Therefore, by condition (iv), \( (\xi, \Omega) \) defines a symplectic vector bundle over \( M \).
\end{remark}

\begin{proposition}
If the vector fields \( R_1, \dots, R_q \) satisfy conditions \emph{(ii)} and \emph{(iii)}, then
\[
[R_i, R_j] = 0, \quad \text{for all } i,j = 1, \dots, q.
\]
\end{proposition}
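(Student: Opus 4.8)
The plan is to show that the bracket $[R_i,R_j]$ lies simultaneously in the horizontal subbundle $\xi$ and in the Reeb distribution $\mathcal{R}=\ker\Omega$, and then to invoke the direct-sum decomposition $TM=\mathcal{R}\oplus\xi$ to conclude that it vanishes. The two containments are established independently: the first from the closedness of the $\lambda_k$, the second from the closedness of $\Omega$.

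First I would show $[R_i,R_j]\in\xi$. By the intrinsic (Cartan) formula for the exterior derivative of a one-form,
\[
d\lambda_k(R_i,R_j) = R_i\bigl(\lambda_k(R_j)\bigr) - R_j\bigl(\lambda_k(R_i)\bigr) - \lambda_k\bigl([R_i,R_j]\bigr).
\]
By condition (ii) the functions $\lambda_k(R_j)=\delta_{kj}$ and $\lambda_k(R_i)=\delta_{ki}$ are constant, so the first two terms vanish; since $\lambda_k$ is closed the left-hand side also vanishes. Hence $\lambda_k([R_i,R_j])=0$ for every $k=1,\dots,q$, which by condition (i) means $[R_i,R_j]\in\bigcap_{k}\ker\lambda_k=\xi$.

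Next I would show $[R_i,R_j]\in\mathcal{R}=\ker\Omega$, for which it suffices to check that $\iota_{[R_i,R_j]}\Omega=0$. Using the graded-commutator identity $\iota_{[X,Y]}=\Lie_X\iota_Y-\iota_Y\Lie_X$ applied to $\Omega$,
\[
\iota_{[R_i,R_j]}\Omega = \Lie_{R_i}\iota_{R_j}\Omega - \iota_{R_j}\Lie_{R_i}\Omega.
\]
Since $R_j\in\mathcal{R}=\ker\Omega$ by condition (iv), the first term vanishes. For the second, Cartan's magic formula gives $\Lie_{R_i}\Omega = d\,\iota_{R_i}\Omega + \iota_{R_i}d\Omega$, and both summands vanish because $\iota_{R_i}\Omega=0$ and $\Omega$ is closed. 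Thus $\iota_{[R_i,R_j]}\Omega=0$, so $[R_i,R_j]\in\ker\Omega=\mathcal{R}$.

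Finally, combining the two containments with $\mathcal{R}\cap\xi=\{0\}$, a consequence of $TM=\mathcal{R}\oplus\xi$, forces $[R_i,R_j]=0$. The computations are short, so the only real subtlety — and the step I expect to require the most care — is recognizing that neither closedness condition suffices in isolation: closedness of the $\lambda_k$ only confines the bracket to $\xi$, and one genuinely needs the closedness of $\Omega$ together with $\ker\Omega=\mathcal{R}$ to pin it down to the complementary summand.
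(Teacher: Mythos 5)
Your proof is correct and takes essentially the same route as the paper's: both arguments show that $[R_i,R_j]$ is annihilated by every $\lambda_k$ (hence lies in $\xi$) and that $\iota_{[R_i,R_j]}\Omega=0$, and then conclude vanishing from condition (iv). The only cosmetic differences are that you obtain the first vanishing from the intrinsic formula for $d\lambda_k(R_i,R_j)$ rather than the identity $\mathcal{L}_{R_i}\iota_{R_j}-\iota_{R_j}\mathcal{L}_{R_i}=\iota_{[R_i,R_j]}$, and that you phrase the final step as $[R_i,R_j]\in\xi\cap\ker\Omega=\{0\}$ instead of invoking the non-degeneracy of $\Omega|_\xi$ directly.
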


\begin{proof}
On the one hand, observe that for any \( R_i, R_j \) and \( \lambda_k \), with \( i,j,k = 1, \dots, q \), we have
\begin{align*}
0 = \mathcal{L}_{R_i}(i_{R_j} \lambda_k) = i_{[R_i, R_j]} \lambda_k + i_{R_j} \mathcal{L}_{R_i} \lambda_k = i_{[R_i, R_j]} \lambda_k,
\end{align*}
which implies that \( [R_i, R_j] \in \xi \).

On the other hand, since
\[
0 = \mathcal{L}_{R_i}(i_{R_j} \Omega) = i_{[R_i, R_j]} \Omega + i_{R_j} \mathcal{L}_{R_i} \Omega = i_{[R_i, R_j]} \Omega,
\]
and \( \Omega \) is non-degenerate on \( \xi \), it follows that \( [R_i, R_j] = 0 \).
\end{proof}

\begin{example}
Let \( T_i \), \( i = 1, \dots, q \), be one-dimensional manifolds, and let \( (P, \Omega) \) be a symplectic manifold. Then, we can define a \( q \)-cosymplectic structure on the product manifold
\[
M = T_1 \times \cdots \times T_q \times P.
\]
Let \( \pi_{T_i} : M \rightarrow T_i \) and \( \pi_P : M \rightarrow P \) be the projections onto the respective factors. The symplectic form \( \Omega \) on \( P \) induces a closed 2-form on \( M \) given by
\[
\Omega_P := \pi_P^* \Omega.
\]
Similarly, a non-vanishing 1-form \( \eta_i \) on \( T_i \) induces a closed 1-form \( \eta_{iT} := \pi_{T_i}^* \eta_i \) on \( M \). Then,
\[
\left( M = T_1 \times \cdots \times T_q \times P, \; \Omega_P, \; \vec{T} = (\eta_{1T}, \dots, \eta_{qT}) \right)
\]
defines a \( q \)-cosymplectic manifold.
\end{example}

A manifold endowed with such a structure is called a \emph{\( q \)-cosymplectic manifold} and is denoted by \( (M, \Omega, \vec{\lambda}) \), or simply by \( M \) when the context is clear. We refer to the collection \( \{ \lambda_i \} \) as an \emph{adapted coframe} for the \( q \)-cosymplectic structure, and the \( q \)-form
\[
\lambda := \lambda_1 \wedge \cdots \wedge \lambda_q \neq 0
\]
is called the \emph{characteristic form}. The bundles \( \mathcal{R} \) and \( \xi \) are called the \emph{Reeb distribution} and the \emph{\( q \)-cosymplectic distribution}, respectively. The elements of \( \xi \) are referred to as \emph{horizontal vector fields}.

\begin{definition}
Let \( (M, \Omega, \vec{\lambda}) \) be a \( q \)-cosymplectic manifold, and let \( X \) be a vector field on \( M \) with flow \( \psi_t \) (for all \( t \in \mathbb{R} \) such that the flow on \( M \) is well-defined).

\begin{itemize}
    \item We call \( X \) a \emph{\( q \)-cosymplectic vector field} (or \emph{infinitesimal automorphism of \( \xi \)}) if \( T\psi_t(\xi) = \xi \) for all \( t \) for which the flow is defined.

    \item We call \( X \) a \emph{strict \( q \)-cosymplectic vector field} (or \emph{infinitesimal automorphism of \( \lambda_1 \wedge \cdots \wedge \lambda_q \)}) if
    \[
    \psi_t^*(\lambda_1 \wedge \cdots \wedge \lambda_q) = \lambda_1 \wedge \cdots \wedge \lambda_q
    \]
    for all \( t \) such that the flow is well-defined.
\end{itemize}
\end{definition}

The following theorem provides a practical criterion for determining whether a given vector field on a \( q \)-cosymplectic manifold is (strictly) \( q \)-cosymplectic.

\begin{theorem} \label{T1}
Let \( (M, \Omega, \vec{\lambda}) \) be a \( q \)-cosymplectic manifold, and let \( X \) be a vector field on \( M \).

\begin{itemize}
    \item[i)] \( X \) is a \( q \)-cosymplectic vector field if and only if there exists a smooth function \( \mu : M \rightarrow \mathbb{R} \) such that
    \[
    \mathcal{L}_X(\lambda_1 \wedge \cdots \wedge \lambda_q) = \mu \, \lambda_1 \wedge \cdots \wedge \lambda_q,
    \]
    where \( \mathcal{L}_X \) denotes the Lie derivative with respect to \( X \).

    \item[ii)] \( X \) is a strict \( q \)-cosymplectic vector field if and only if
    \[
    \mathcal{L}_X(\lambda_1 \wedge \cdots \wedge \lambda_q) = 0.
    \]
\end{itemize}
\end{theorem}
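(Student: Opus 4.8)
The plan is to reduce both parts to the standard flow relation $\frac{d}{dt}\psi_t^*\omega=\psi_t^*\mathcal{L}_X\omega$ applied to the characteristic form $\lambda=\lambda_1\wedge\cdots\wedge\lambda_q$, exploiting that $\lambda$ is a nowhere-vanishing decomposable $q$-form whose associated line in $\bigwedge^q T^*M$ encodes exactly the subbundle $\mathrm{span}\{\lambda_1,\dots,\lambda_q\}=\mathrm{Ann}(\xi)$. I would dispose of part ii) first, since it is cleanest: if $\mathcal{L}_X\lambda=0$ then $\frac{d}{dt}\psi_t^*\lambda=\psi_t^*\mathcal{L}_X\lambda=0$, so $\psi_t^*\lambda\equiv\psi_0^*\lambda=\lambda$; conversely, differentiating $\psi_t^*\lambda=\lambda$ at $t=0$ returns $\mathcal{L}_X\lambda=0$. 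No further geometry is needed there.

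For part i) the first task is to translate the geometric condition $T\psi_t(\xi)=\xi$ into a statement about forms. Because $\xi=\bigcap_i\ker\lambda_i$ has codimension $q$ and the $\lambda_i$ are pointwise independent, the annihilator $\mathrm{Ann}(\xi)\subset T^*M$ equals $\mathrm{span}\{\lambda_1,\dots,\lambda_q\}$, and a diffeomorphism preserves $\xi$ iff its pullback preserves $\mathrm{Ann}(\xi)$. Thus $T\psi_t(\xi)=\xi$ for all $t$ is equivalent to $\psi_t^*\lambda_i\in\mathrm{span}\{\lambda_1,\dots,\lambda_q\}$ for every $i$. The algebraic lemma I would isolate is the decomposability statement: for a $1$-form $\alpha$, one has $\alpha\wedge\lambda=0$ if and only if $\alpha\in\mathrm{span}\{\lambda_1,\dots,\lambda_q\}$. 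This shows that the subbundle $\mathrm{Ann}(\xi)$ is recoverable from $\lambda$ alone.

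With the lemma in hand I would prove the central equivalence $T\psi_t(\xi)=\xi\iff\psi_t^*\lambda=f_t\,\lambda$ for some nowhere-vanishing $f_t$. The forward implication follows from the determinant: writing $\psi_t^*\lambda_i=\sum_j A_{ij}(t)\lambda_j$ yields $\psi_t^*\lambda=\det A(t)\,\lambda$. For the reverse implication I would invoke the lemma: if $\psi_t^*\lambda=f_t\lambda$ with $f_t$ nowhere zero (automatic, since both $\lambda$ and $\psi_t^*\lambda$ are nowhere-vanishing), then $(\psi_t^*\lambda_i)\wedge\lambda=f_t^{-1}\psi_t^*(\lambda_i\wedge\lambda)=0$, so each $\psi_t^*\lambda_i\in\mathrm{span}\{\lambda_j\}$ and $\xi$ is preserved. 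It then remains to connect $\psi_t^*\lambda=f_t\lambda$ with $\mathcal{L}_X\lambda=\mu\lambda$: for $(\Leftarrow)$, set $\omega_t=\psi_t^*\lambda$, so $\dot\omega_t=(\mu\circ\psi_t)\,\omega_t$ is a pointwise scalar linear ODE with solution $\omega_t=\exp\!\big(\int_0^t\mu\circ\psi_s\,\mathrm{d}s\big)\lambda$; for $(\Rightarrow)$, differentiate at $t=0$ to get $\mathcal{L}_X\lambda=\dot f_0\,\lambda$ and take $\mu=\dot f_0$, which is smooth because $f_t=(\psi_t^*\lambda)/\lambda$ depends smoothly on $(t,p)$.

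The step I expect to be the main obstacle is the reverse direction of the central equivalence — deducing invariance of the distribution $\xi$ from mere proportionality of the top-degree form $\psi_t^*\lambda$ to $\lambda$. This is precisely where the structural hypotheses enter, namely the pointwise linear independence of the $\lambda_i$ (equivalently, the constant rank $2n$ of $\xi$), through the decomposability lemma $\alpha\wedge\lambda=0\iff\alpha\in\mathrm{span}\{\lambda_i\}$; without it, scaling of the single $q$-form $\lambda$ would not control the individual $\lambda_i$, and the geometric conclusion would fail.
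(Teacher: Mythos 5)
Your proposal is correct, and its dynamical core coincides with the paper's proof: part ii) is the standard Lie-derivative/flow equivalence in both treatments, and for part i) both arguments hinge on differentiating $\psi_t^*(\lambda_1\wedge\cdots\wedge\lambda_q)$, using the group property of the flow to obtain the pointwise linear ODE $\frac{d}{dt}\psi_t^*\lambda=(\mu\circ\psi_t)\,\psi_t^*\lambda$, and solving it with the exponential factor $\eta_t=\exp\bigl(\int_0^t\mu\circ\psi_s\,ds\bigr)$. Where you genuinely go beyond the paper is the bridge between the geometric condition $T\psi_t(\xi)=\xi$ and the proportionality $\psi_t^*\lambda=\eta_t\,\lambda$: the paper simply asserts this equivalence in both directions (``This means that for every $t$, there exists a nowhere vanishing function $\eta_t$\dots'' in the forward direction, and ``Hence, $X$ is a $q$-cosymplectic vector field'' in the converse), whereas you isolate and prove the two facts that make it true --- that preserving $\xi$ is the same as the pullback preserving $\mathrm{Ann}(\xi)=\mathrm{span}\{\lambda_1,\dots,\lambda_q\}$, and the decomposability lemma $\alpha\wedge\lambda=0\iff\alpha\in\mathrm{span}\{\lambda_1,\dots,\lambda_q\}$, which together give $T\psi_t(\xi)=\xi\iff\psi_t^*\lambda=\eta_t\lambda$ via the determinant computation in one direction and the wedge argument $(\psi_t^*\lambda_i)\wedge\lambda=\eta_t^{-1}\psi_t^*(\lambda_i\wedge\lambda)=0$ in the other. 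Your version is therefore more complete: it is exactly the reverse implication (scaling of the single $q$-form forcing invariance of the corank-$q$ distribution) that is nontrivial and uses the pointwise independence of the $\lambda_i$, and the paper's proof leaves this step entirely implicit. The only price you pay is a slightly longer argument; the payoff is that the proof actually closes the logical gap rather than relying on the reader to supply the multilinear algebra.
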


\begin{proof}
\textit{(i)} Assume that \( T\psi_t(\xi) = \xi \), where \( \psi_t \) is the flow of \( X \). This means that for every \( t \), there exists a nowhere vanishing function \( \eta_t \) such that
\[
\psi_t^*(\lambda_1 \wedge \cdots \wedge \lambda_q) = \eta_t \, (\lambda_1 \wedge \cdots \wedge \lambda_q).
\]
Consequently,
\[
\mathcal{L}_X(\lambda_1 \wedge \cdots \wedge \lambda_q) 
= \left.\frac{d}{dt}\right|_{t=0} \left( \psi_t^*(\lambda_1 \wedge \cdots \wedge \lambda_q) \right) 
= \left.\frac{d}{dt}\right|_{t=0} \left( \eta_t \lambda_1 \wedge \cdots \wedge \lambda_q \right) 
= \mu \, \lambda_1 \wedge \cdots \wedge \lambda_q,
\]
where \( \mu = \left. \frac{d}{dt} \right|_{t=0} \eta_t \).

Conversely, suppose there exists a smooth function \( \mu \) such that
\[
\mathcal{L}_X(\lambda_1 \wedge \cdots \wedge \lambda_q) = \mu \, (\lambda_1 \wedge \cdots \wedge \lambda_q).
\]
This implies (differentiating at an arbitrary \( t \), not necessarily at \( t = 0 \)) that
\begin{align} \label{E1}
\frac{d}{dt} \left( \psi_t^*(\lambda_1 \wedge \cdots \wedge \lambda_q) \right) 
= \psi_t^*(\mu \, \lambda_1 \wedge \cdots \wedge \lambda_q) 
= (\mu \circ \psi_t) \, \psi_t^*(\lambda_1 \wedge \cdots \wedge \lambda_q).
\end{align}

Here we have used the group property of the flow: \( \psi_{t + t_0} = \psi_{t_0} \circ \psi_t \), which implies the identity
\[
\left. \frac{d}{dt} \right|_{t = t_0} \left( \psi_t^*(\lambda_1 \wedge \cdots \wedge \lambda_q) \right)
= \psi_{t_0}^* \left( \left. \frac{d}{dt} \right|_{t = 0} \left( \psi_t^*(\lambda_1 \wedge \cdots \wedge \lambda_q) \right) \right)
= \psi_{t_0}^* \left( \mathcal{L}_X(\lambda_1 \wedge \cdots \wedge \lambda_q) \right).
\]

Viewing equation \eqref{E1} as a differential equation, we can solve it to find
\[
\psi_t^*(\lambda_1 \wedge \cdots \wedge \lambda_q) = \eta_t \, \lambda_1 \wedge \cdots \wedge \lambda_q,
\]
where \( \eta_t \) is given by
\[
\eta_t = \exp\left( \int_0^t (\mu \circ \psi_s) \, ds \right).
\]
Hence, \( X \) is a \( q \)-cosymplectic vector field.

\medskip

\noindent \textit{(ii)} The vector field \( X \) is a strict \( q \)-cosymplectic vector field if and only if its flow preserves the characteristic form \( \lambda_1 \wedge \cdots \wedge \lambda_q \). By the defining property of the Lie derivative, this is equivalent to
\[
\mathcal{L}_X(\lambda_1 \wedge \cdots \wedge \lambda_q) = 0.
\]
\end{proof}

\begin{example}
Let \( (M, \Omega, \vec{\lambda}) \) be a \( q \)-cosymplectic manifold. By Cartan’s magic formula, for every Reeb vector field \( R_i \) (as defined earlier), we have
\[
\mathcal{L}_{R_i} \lambda_j = d(i_{R_i} \lambda_j) + i_{R_i} d\lambda_j = d(\delta_{ij}) + 0 = 0,
\]
and thus
\[
\mathcal{L}_{R_i}(\lambda_1 \wedge \cdots \wedge \lambda_q) 
= (\mathcal{L}_{R_i} \lambda_1) \wedge \lambda_2 \wedge \cdots \wedge \lambda_q + \cdots 
+ \lambda_1 \wedge \cdots \wedge \lambda_{q-1} \wedge (\mathcal{L}_{R_i} \lambda_q) = 0.
\]
Hence, the Reeb vector fields associated with \( \lambda_i \), \( i = 1, \dots, q \), are strict \( q \)-cosymplectic vector fields.
\end{example}

\begin{example}
There exists a simple \( q \)-cosymplectic structure on \( \mathbb{R}^{2n+q} \) with coordinates 
\[
(x_1, y_1, \dots, x_n, y_n, z_1, \dots, z_q),
\]
defined by
\[
\lambda_i := dz_i, \quad \Omega = \sum_{j=1}^n x_j \, dy_j,
\]
and
\[
\mathcal{R} = \mathrm{span} \left\{ \frac{\partial}{\partial z_1}, \dots, \frac{\partial}{\partial z_q} \right\}.
\]
The \( q \)-cosymplectic distribution is given by
\[
\xi = \mathrm{span} \left\{ \frac{\partial}{\partial x_1}, \frac{\partial}{\partial y_1}, \dots, \frac{\partial}{\partial x_n}, \frac{\partial}{\partial y_n} \right\}.
\]
We observe that the vector fields \( \frac{\partial}{\partial z_i} \), \( i = 1, \dots, q \), are Reeb vector fields. Moreover, they are strict \( q \)-cosymplectic vector fields, since
\[
\mathcal{L}_{\frac{\partial}{\partial z_i}} \lambda_j = 0, \quad \forall i,j = 1, \dots, q.
\]
\end{example}

We now state the following theorem:

\begin{theorem}[\cite{Bogoyavlenskij}] \label{T3}
Assume that on a manifold \( M \) there exist:
\begin{enumerate}
    \item A submersion \( F = (F_1, \dots, F_k) : M \rightarrow \mathbb{R}^k \) with compact and connected fibers, for some \( 1 \leq k < \dim M \).
    
    \item A collection of \( n = \dim M - k \) vector fields \( Y_1, \dots, Y_n \) that are everywhere linearly independent, pairwise commuting, and tangent to the fibers of \( F \), i.e.,
    \[
    [Y_i, Y_j] = 0, \quad \mathcal{L}_{Y_i} F_r = 0, \quad \forall i,j = 1, \dots, n, \; r = 1, \dots, k.
    \]
\end{enumerate}
Then:
\begin{itemize}
    \item[i.] The map \( F : M \rightarrow F(M) \subset \mathbb{R}^k \) defines a \( \mathbb{T}^n \)-bundle.
    
    \item[ii.] Any vector field \( X \) on \( M \) satisfying
    \[
    \mathcal{L}_X F_r = 0 \quad \text{and} \quad [X, Y_i] = 0, \quad \forall i = 1, \dots, n, \; r = 1, \dots, k,
    \]
    is conjugate to a constant vector field on \( \mathbb{T}^n \) via each bundle chart of \( F : M \rightarrow F(M) \).
\end{itemize}
\end{theorem}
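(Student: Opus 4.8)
The plan is to follow the classical Liouville--Arnold strategy: I will build from the commuting flows of the $Y_i$ a smooth $\mathbb{R}^n$-action that preserves each fiber and acts transitively on it, identify each fiber with a quotient $\mathbb{R}^n/\Lambda$ by its isotropy lattice, use compactness to force $\Lambda$ to have full rank, and finally assemble the fiberwise tori into a bundle over $F(M)$.

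First I would note that each $Y_i$ is complete on $M$: its integral curves are tangent to the fibers of $F$ (because $\mathcal{L}_{Y_i}F_r = 0$) and hence remain in the compact fiber through their initial point. Writing $\Phi^t_i$ for the flow of $Y_i$ and using $[Y_i,Y_j]=0$, the flows commute, so
\[
\Phi : \mathbb{R}^n \times M \to M, \qquad \Phi(\vec{t},p)=\Phi^{t_1}_1 \circ \cdots \circ \Phi^{t_n}_n(p),
\]
is a smooth action of the abelian group $\mathbb{R}^n$ preserving every fiber. Fix a fiber $N_c = F^{-1}(c)$ and a base point $p_0 \in N_c$. Since $Y_1,\dots,Y_n$ are linearly independent and $\dim N_c = n$, they frame $TN_c$, so the orbit map $\vec{t}\mapsto \Phi(\vec{t},p_0)$ is a local diffeomorphism; hence every orbit in $N_c$ is open, and (as the complement of an orbit is a union of open orbits) also closed. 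Connectedness of $N_c$ then makes the orbit all of $N_c$, so the action is transitive.

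Next I would study the isotropy group $\Lambda_{p_0}=\{\vec{t}\in\mathbb{R}^n : \Phi(\vec{t},p_0)=p_0\}$. Local injectivity of the orbit map shows $\Lambda_{p_0}$ is discrete, hence a lattice, and transitivity gives a diffeomorphism $\mathbb{R}^n/\Lambda_{p_0}\cong N_c$. Since $N_c$ is compact, $\Lambda_{p_0}$ must have full rank $n$, whence $N_c\cong \mathbb{R}^n/\mathbb{Z}^n=\mathbb{T}^n$: every fiber is an $n$-torus. To upgrade this to the bundle statement of part (i), I would pick a local section $\sigma:U\to M$ of $F$ over an open set $U\subset F(M)$ together with a smoothly varying basis $e_1(c),\dots,e_n(c)$ of the period lattice $\Lambda_{\sigma(c)}$; the assignment $(c,[\vec{\theta}])\mapsto \Phi\!\left(\sum_j \theta_j e_j(c),\,\sigma(c)\right)$ then furnishes a trivialization $U\times\mathbb{T}^n\xrightarrow{\ \sim\ }F^{-1}(U)$, and the resulting angle coordinates are the bundle charts. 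For part (ii), the condition $\mathcal{L}_XF_r=0$ makes $X$ tangent to the fibers, so on $N_c$ I may expand $X=\sum_i a_i Y_i$ with $a_i\in C^\infty(N_c)$. Imposing $[X,Y_j]=0$ and using $[Y_i,Y_j]=0$ yields $\sum_i (Y_j a_i)\,Y_i=0$, so $Y_j a_i=0$ for all $i,j$ by linear independence; thus each $a_i$ is constant along the orbit through any point, which is all of $N_c$ by transitivity, and hence $a_i$ is constant on $N_c$. In the angle coordinates above each $Y_i$ becomes the constant field $\partial/\partial\theta_i$, so $X=\sum_i a_i\,\partial/\partial\theta_i$ with constant $a_i$ is a constant vector field on $\mathbb{T}^n$, as claimed.

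I expect the main obstacle to be the smooth dependence of the period lattice on the base point $c$, that is, producing the smoothly varying basis $e_1(c),\dots,e_n(c)$ used in part (i). This is the technical heart of the argument: one applies the implicit function theorem to the defining relation $\Phi(\vec{t},\sigma(c))=\sigma(c)$ to track individual lattice vectors, and must verify that the lattice keeps full rank $n$ as $c$ varies so that $n$ independent generators persist on a neighborhood of each point.
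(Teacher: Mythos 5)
The paper does not prove this statement at all: it is quoted as an external result from Bogoyavlenskij (reference \cite{Bogoyavlenskij}) and used as a black box, so there is no internal proof to compare yours against. Judged on its own merits, your argument is the standard Liouville--Arnold proof, and its skeleton is sound: completeness of the $Y_i$ from compactness of the fibers, the commuting $\mathbb{R}^n$-action, openness (hence closedness) of orbits giving transitivity on each connected fiber, discreteness of the isotropy subgroup, and compactness forcing the period lattice to have full rank are all correct as written, as is the reduction of part (ii) to the constancy of the coefficients $a_i$ on each fiber.

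Two points deserve attention. First, the step you flag as the technical heart is described slightly incorrectly: for the map $(c,[\vec\theta]) \mapsto \Phi\bigl(\sum_j \theta_j e_j(c), \sigma(c)\bigr)$ to be a trivialization, it is not enough that the continued vectors $e_1(c),\dots,e_n(c)$ stay linearly independent (that is automatic by continuity); you must show they generate the \emph{entire} period lattice $\Lambda_{\sigma(c)}$ for $c$ near $c_0$, since otherwise the map is a nontrivial finite covering of $F^{-1}(U)$ rather than a diffeomorphism onto it. The missing ingredient is a uniform discreteness estimate: the map $(\vec t,c)\mapsto \Phi(\vec t,\sigma(c))$ is a local diffeomorphism near $(0,c_0)$, hence injective on some $B_\varepsilon\times U'$, so every nonzero period at every $c\in U'$ has norm at least $\varepsilon$; if extra periods existed at points $c_m\to c_0$, one could subtract suitable integer combinations of the $e_j(c_m)$ to produce nonzero periods tending to $0$, a contradiction. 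Second, a minor slip: in your angle coordinates one has $\partial/\partial\theta_j=\sum_i \bigl(e_j(c)\bigr)_i\,Y_i$, so each $Y_i$ is a constant-coefficient combination of the $\partial/\partial\theta_j$ on a given fiber, but not $\partial/\partial\theta_i$ itself. This does not harm part (ii) --- $X=\sum_i a_i Y_i$ with $a_i$ constant on the fiber is still a constant vector field in the angle coordinates --- but the claim as stated should be corrected.
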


In the above theorem, the tuple \( (Y_1, \dots, Y_n, F_1, \dots, F_k) \) is called an \emph{integrable system of type \( (n, k) \)} on \( M \). This notation simply emphasizes the commuting flows and first integrals. A system on a manifold \( M \) is called \emph{integrable} if there exists an integrable system \( (Y_1, \dots, Y_n, F_1, \dots, F_k) \) of some type \( (n, k) \) on \( M \) with \( Y_1 = Y \).

Note that the vector fields \( Y_1, \dots, Y_n \) are tangent to the fibers of \( F \). We say that the system \( (Y_1, \dots, Y_n, F_1, \dots, F_k) \) is \emph{regular} at a fiber \( L \) of \( F \) if
\[
Y_1 \wedge \cdots \wedge Y_n \neq 0 \quad \text{and} \quad dF_1 \wedge \cdots \wedge dF_k \neq 0 \quad \text{everywhere on } L.
\]
We say that the system is \emph{proper} if the map \( (F_1, \dots, F_k): M \rightarrow \mathbb{R}^k \) is a proper topological map (i.e., each level set is compact), and the system is regular on almost every fiber.

Given a manifold \( M \), the vector bundles on \( M \) which can be obtained from the tangent and cotangent bundles \( TM, T^*M \), and the trivial bundle \( \mathbb{R} \times M \), by operations such as direct sums and tensor products, will be called \emph{natural vector bundles} over \( M \).

\begin{theorem}[Liouville's theorem \cite{Liouville}] \label{L7}
Assume that the integrable system \( (Y_1, \dots, Y_n, F_1, \dots, F_k) \) is as described in Theorem~\ref{T3} and is regular at a compact level set \( L \) of \( F \). Then, in a tubular neighborhood \( \mathcal{U}(L) \), there exists, up to automorphisms of \( \mathbb{T}^n \), a unique free torus action
\[
\rho : \mathbb{T}^n \times \mathcal{U}(L) \rightarrow \mathcal{U}(L),
\]
which preserves the system (that is, the action preserves each \( Y_i \) and each \( F_j \)), and whose orbits are regular level sets of the system. In particular, \( L \) is diffeomorphic to \( \mathbb{T}^n \), and
\[
\mathcal{U}(L) \cong \mathbb{T}^n \times B^k,
\]
with periodic coordinates \( \theta_1 \, (\mathrm{mod}\, 1), \dots, \theta_n \, (\mathrm{mod}\, 1) \) on the torus \( \mathbb{T}^n \), and coordinates \( (z_1, \dots, z_k) \) on a \( k \)-dimensional ball \( B^k \), such that \( F_1, \dots, F_k \) depend only on the variables \( z_1, \dots, z_k \), and the vector fields \( Y_i \) are of the form
\[
Y_i = \sum_{j=1}^n a_{ij}(z_1, \dots, z_k) \, \frac{\partial}{\partial \theta_j}.
\]
\end{theorem}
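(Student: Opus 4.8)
The plan is to combine the torus-bundle structure supplied by Theorem~\ref{T3} with the commuting flows of $Y_1,\dots,Y_n$ in order to manufacture action--angle-type coordinates on a neighborhood of $L$. I would first record the local geometry. Regularity at $L$ means $dF_1\wedge\cdots\wedge dF_k\neq 0$ everywhere on $L$, so $F$ is a submersion along $L$; since $L$ is compact, shrinking gives a tubular neighborhood $\mathcal{U}(L)$ on which $F:\mathcal{U}(L)\to B^k$ is a proper submersion, hence by Ehresmann's theorem a locally trivial fibration. Theorem~\ref{T3}(i) identifies this fibration as a $\mathbb{T}^n$-bundle, so in particular $L\cong\mathbb{T}^n$. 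I would take $(z_1,\dots,z_k):=(F_1,\dots,F_k)$ as coordinates on $B^k$, so that the $F_r$ depend only on the $z$ variables by construction.

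Next I would build the group action. The vector fields $Y_i$ pairwise commute and are tangent to the compact fibers, so their flows $\phi^i_{t_i}$ are complete on $\mathcal{U}(L)$ and commute; setting $\Phi_t:=\phi^1_{t_1}\circ\cdots\circ\phi^n_{t_n}$ defines a smooth $\mathbb{R}^n$-action that preserves every fiber, because $\mathcal{L}_{Y_i}F_r=0$. On a fixed fiber the linear independence $Y_1\wedge\cdots\wedge Y_n\neq 0$ shows each orbit is open, and connectedness of the fiber forces transitivity. After fixing a base point, the isotropy group is a closed, cocompact, and (since the orbit map is a local diffeomorphism between equidimensional manifolds) discrete subgroup $\Gamma(z)\subset\mathbb{R}^n$, that is, a full-rank lattice, consistent with $L\cong\mathbb{R}^n/\Gamma\cong\mathbb{T}^n$.

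The central step, which I expect to be the main obstacle, is to organize these lattices smoothly. Choosing a smooth section $s:B^k\to\mathcal{U}(L)$ and using the continuity of $\Gamma(z)$ in $z$, I would select, after shrinking $B^k$, a smooth lattice frame $v_1(z),\dots,v_n(z)\in\mathbb{R}^n$ generating $\Gamma(z)$. Writing $t=\sum_j\theta_j\,v_j(z)$ re-expresses the $\mathbb{R}^n$-parameter in lattice coordinates, and the assignment $(\theta,z)\mapsto\Phi_{\sum_j\theta_j v_j(z)}(s(z))$ descends to a diffeomorphism $\mathbb{T}^n\times B^k\to\mathcal{U}(L)$, producing the periodic coordinates $\theta_j\,(\mathrm{mod}\,1)$. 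Differentiating at fixed $z$ gives $\partial/\partial\theta_j=\sum_i (v_j(z))_i\,Y_i$; inverting this $z$-dependent linear relation yields $Y_i=\sum_j a_{ij}(z)\,\partial/\partial\theta_j$ with coefficients $a_{ij}$ depending only on $z$, which is exactly the asserted normal form and recovers Theorem~\ref{T3}(ii). Establishing the smooth dependence of $\Gamma(z)$ and the existence of a smooth frame is the classical technical core of the Liouville--Arnold construction; here it is eased by Theorem~\ref{T3}, whose bundle charts already conjugate each $Y_i$ to a constant vector field, so the frame can be read off locally from those charts.

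Finally I would define $\rho$ as translation in the $\theta$-variables. It is free, and since both the $F_r$ and the coefficients $a_{ij}$ depend only on $z$, it preserves every $F_r$ and every $Y_i$; its orbits are precisely the fibers, which are the regular level sets. For uniqueness, any free $\mathbb{T}^n$-action that preserves the system must have these fibers as orbits, hence agrees with $\rho$ up to a fiberwise reparametrization; preservation of the system pins this reparametrization down to the choice of basis of $\Gamma(z)$, which is unique modulo $GL(n,\mathbb{Z})$, i.e.\ modulo an automorphism of $\mathbb{T}^n$.
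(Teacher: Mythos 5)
The paper never proves this statement: Theorem~\ref{L7} is imported as a known classical result (attributed to Liouville, in the modern formulation going back to \cite{Bogoyavlenskij} and \cite{Zung}), so there is no in-paper proof to compare yours against. Judged on its own merits, your reconstruction is the standard classical argument and is essentially sound. The chain Ehresmann fibration $\Rightarrow$ complete commuting flows $\Rightarrow$ transitive $\mathbb{R}^n$-action on each fiber $\Rightarrow$ discrete cocompact isotropy $\Gamma(z)$ (a full-rank lattice) $\Rightarrow$ quotient torus action is exactly how this theorem is proved, and your derivation of the normal form $Y_i=\sum_j a_{ij}(z)\,\partial/\partial\theta_j$ by inverting $\partial/\partial\theta_j=\sum_i (v_j(z))_i\,Y_i$ is correct, as is the fact that this frame is automatically translation-invariant because the $v_j$ and hence the $a_{ij}$ are killed by the fiber-tangent fields $Y_i$.

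Two points deserve comment. First, as you say, the genuine technical content is the smooth dependence of the period lattice $\Gamma(z)$; your fallback of reading the frame off the bundle charts of Theorem~\ref{T3}(ii) (applied to the $Y_i$ themselves, which satisfy its hypotheses) is legitimate and not circular, since the paper itself treats Theorem~\ref{T3} as a black box; alternatively one can get smoothness directly by applying the implicit function theorem to $\Phi_t(s(z))=s(z)$ in fiber-adapted coordinates, together with a discreteness argument showing the continued vectors still generate the whole lattice. Second, your uniqueness paragraph is correct in outline but compresses the key computation: for a competing free action $\rho'$ preserving the system with fiber orbits, one writes its generators as $Z_a=\sum_i c_{ai}Y_i$, uses $[Z_a,Y_i]=0$ to force $c_{ai}=c_{ai}(z)$, uses periodicity and freeness to force $(c_{a1}(z),\dots,c_{an}(z))$ to be a $\mathbb{Z}$-basis of $\Gamma(z)$, and then uses discreteness of $GL(n,\mathbb{Z})$ plus connectedness of $B^k$ to conclude the change-of-basis matrix relating $\rho'$ to $\rho$ is constant, i.e.\ an automorphism of $\mathbb{T}^n$. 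With those two steps spelled out, your proof is complete and matches the classical one that the paper cites.
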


A system of coordinates
\[
(\theta_1 \; (\mathrm{mod}\, 1), \dots, \theta_n \; (\mathrm{mod}\, 1), z_1, \dots, z_k)
\]
on \( \mathcal{U}(L) \cong \mathbb{T}^n \times B^k \), as given by the above theorem, is called a \emph{Liouville system of coordinates}. 

Due to the previous theorem, each \( n \)-dimensional compact level set \( L \) of an integrable system of type \( (n, k) \), on which the system is regular, is called a \emph{Liouville torus}, and the torus \( \mathbb{T}^n \)-action in a tubular neighborhood \( \mathcal{U}(L) \) of \( L \) that preserves the system is called the \emph{Liouville torus action}. Notice that this action is uniquely determined by the system, up to an automorphism of \( \mathbb{T}^n \).

\begin{theorem}[Fundamental conservation property {\cite{Zung}}] \label{L8}
Let \( L \) be a Liouville torus of an integrable system \( (Y_1, \dots, Y_n, F_1, \dots, F_k) \) on a manifold \( M \), and let \( \mathcal{G} \in \Gamma(\otimes^h TM \otimes^k T^*M) \) be a tensor field on \( M \) that is preserved by all the vector fields of the system:
\[
\mathcal{L}_{Y_i} \mathcal{G} = 0, \quad \forall i = 1, \dots, n.
\]
Then the Liouville torus \( \mathbb{T}^n \)-action on a tubular neighborhood \( \mathcal{U}(L) \subset M \) also preserves \( \mathcal{G} \).
\end{theorem}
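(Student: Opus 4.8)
The plan is to work in the Liouville coordinates $(\theta_1,\dots,\theta_n,z_1,\dots,z_k)$ on $\mathcal{U}(L)\cong\mathbb{T}^n\times B^k$ furnished by Theorem~\ref{L7}, in which the commuting vector fields take the form $Y_i=\sum_{j}a_{ij}(z)\,\partial/\partial\theta_j$ and the Liouville torus action is translation in the angle variables, with infinitesimal generators the coordinate fields $\partial/\partial\theta_1,\dots,\partial/\partial\theta_n$. Since these generators are coordinate vector fields, for any tensor field $\mathcal{G}$ the Lie derivative $\mathcal{L}_{\partial/\partial\theta_j}\mathcal{G}$ is computed by differentiating the coordinate components of $\mathcal{G}$ in $\theta_j$. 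Hence the whole theorem reduces to the single assertion that \emph{every component of $\mathcal{G}$ in the coordinate frame is independent of the angles $\theta$}. I would also record at the outset that the frequency matrix $A(z)=(a_{ij}(z))$ is invertible on $\mathcal{U}(L)$: on $L$ one has $Y_1\wedge\cdots\wedge Y_n=\det A(z)\,\partial/\partial\theta_1\wedge\cdots\wedge\partial/\partial\theta_n\neq 0$ by regularity, so $\det A\neq0$ near $L$ after shrinking $\mathcal{U}(L)$.

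Next I would expand the hypotheses $\mathcal{L}_{Y_i}\mathcal{G}=0$ in coordinates. Writing a component as $\mathcal{G}^A_B$ with upper multi-index $A$ and lower multi-index $B$, and using $Y_i^{\theta_j}=a_{ij}(z)$, $Y_i^{z_l}=0$ together with $\partial a_{ij}/\partial\theta_m=0$, the standard component formula for the Lie derivative splits as
\[
(\mathcal{L}_{Y_i}\mathcal{G})^A_B=\sum_{j}a_{ij}(z)\,\frac{\partial}{\partial\theta_j}\mathcal{G}^A_B \;+\;(\ast\ast)^A_{B,i},
\]
where the transport term carries all the $\theta$-derivatives and the connection term $(\ast\ast)^A_{B,i}$ collects the contributions $\partial a_{ij'}/\partial z_l$ coming from the non-constancy of the frequencies. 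Inspecting that formula, each summand of $(\ast\ast)^A_{B,i}$ either replaces an upper $\theta$-index of $\mathcal{G}$ by a $z$-index or replaces a lower $z$-index by a $\theta$-index. Assigning to each component the weight $w(\mathcal{G}^A_B):=\#\{\text{upper }\theta\text{-indices}\}+\#\{\text{lower }z\text{-indices}\}$, every component appearing in $(\ast\ast)^A_{B,i}$ has weight exactly $w(\mathcal{G}^A_B)-1$, and its coefficient $\partial a_{ij'}/\partial z_l$ depends only on $z$.

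With this filtration in hand I would induct on the weight $w$. For the base case $w=0$ the connection term is empty, so $\sum_j a_{ij}\,\partial_{\theta_j}\mathcal{G}^A_B=0$ for all $i$, and invertibility of $A(z)$ gives $\partial_{\theta_j}\mathcal{G}^A_B=0$. For the inductive step, suppose every component of weight $<w$ is $\theta$-independent. Then for a weight-$w$ component the equation $\mathcal{L}_{Y_i}\mathcal{G}=0$ reads
\[
\sum_{j}a_{ij}(z)\,\frac{\partial}{\partial\theta_j}\mathcal{G}^A_B=-(\ast\ast)^A_{B,i}=:g_i(z),
\]
and the right-hand side is $\theta$-independent, being built from the $z$-functions $\partial a_{ij'}/\partial z_l$ and from lower-weight components that are $\theta$-independent by hypothesis. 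Integrating this identity over the torus $\mathbb{T}^n$ in the angles, the left-hand side integrates to zero, since $\int_{\mathbb{T}^n}(\partial/\partial\theta_j)\mathcal{G}^A_B\,d\theta=0$ for each $j$; this forces $g_i(z)\equiv0$, whence $\sum_j a_{ij}\,\partial_{\theta_j}\mathcal{G}^A_B=0$, and invertibility of $A(z)$ again yields $\partial_{\theta_j}\mathcal{G}^A_B=0$. This closes the induction, so all components are $\theta$-independent and therefore $\mathcal{L}_{\partial/\partial\theta_j}\mathcal{G}=0$ for every $j$, i.e. the Liouville torus action preserves $\mathcal{G}$.

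The main obstacle, and the reason the statement is not immediate, is precisely the connection term $(\ast\ast)$: because the torus generators $\partial/\partial\theta_j$ differ from the $Y_i$ by $z$-dependent coefficients (the inverse frequency matrix), $\mathcal{L}_{\partial/\partial\theta_j}\mathcal{G}$ is \emph{not} a pointwise function-multiple of the $\mathcal{L}_{Y_i}\mathcal{G}$, so one cannot conclude invariance by naive linearity. The device that resolves this is the averaging step: integrating over the compact orbit annihilates the transport term and, combined with the weight induction guaranteeing that the connection term is already $\theta$-independent, forces it to vanish. One can sanity-check the mechanism on low-weight cases: functions are $\theta$-independent immediately, and for a vector field the transverse components $V^{z_l}$ are $\theta$-independent at once, after which averaging removes the frequency-derivative term and yields $\theta$-independence of the $V^{\theta_m}$.
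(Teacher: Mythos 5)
Your proof is correct, but there is nothing in the paper to compare it against: Theorem~\ref{L8} is imported verbatim from \cite{Zung} (it is Zung's ``fundamental conservation property'') and the authors give no proof, using it as a black box in the theorem that follows. Your argument is therefore a valid self-contained replacement for the citation, and its details check out. Passing to Liouville coordinates is licensed by Theorem~\ref{L7}, and reading the torus action as translation in the angles is exactly how the paper itself uses that theorem; moreover, since Theorem~\ref{L7} asserts that the orbits of the action are \emph{regular} level sets, one has $\det A(z)\neq 0$ for every $z\in B^k$, so your shrinking of $\mathcal{U}(L)$ is not even needed. The structural claim at the heart of your induction is right: in the coordinate expression of $(\mathcal{L}_{Y_i}\mathcal{G})^A_B$, the only non-transport terms carry a factor $\partial a_{ij}/\partial z_l$, and each such term either trades an upper $\theta$-index for an upper $z$-index or a lower $z$-index for a lower $\theta$-index, hence strictly lowers your weight $w$; the system of equations is thus triangular in $w$, and averaging over the compact torus annihilates the transport term while the inductive hypothesis makes the connection term $\theta$-independent, forcing it to vanish, after which invertibility of $A(z)$ finishes the step. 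You also put your finger on the genuine difficulty --- $\mathcal{L}_{fX}\mathcal{G}\neq f\,\mathcal{L}_X\mathcal{G}$ for tensors of positive rank, so invariance under the $Y_i$ does not transfer to $\partial/\partial\theta_j=\sum_i (A^{-1})_{ji}(z)\,Y_i$ by pointwise linearity --- which is precisely why the statement needs a proof at all. The same mechanism could be phrased via Fourier expansion in $\theta$ (your averaging just extracts the zero mode), but that is a cosmetic variant; as a proof to insert where the paper merely cites \cite{Zung}, yours is sound.
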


\begin{definition}
An integrable system \( (Y_1, \dots, Y_n, F_1, \dots, F_k) \) on a \( 2m + q \)-dimensional \( q \)-cosymplectic manifold \( (M, \vec{\lambda}, \mathcal{R} \oplus \xi) \) is called a \emph{\( q \)-cosymplectic integrable system} if the vector fields \( Y_1, \dots, Y_n \) are strict \( q \)-cosymplectic vector fields.
\end{definition}

\begin{theorem}
If \( (Y_1, \dots, Y_n, F_1, \dots, F_k) \) is a \( q \)-cosymplectic integrable system on a \( 2m + q \)-dimensional \( q \)-cosymplectic manifold \( (M, \Omega, \vec{\lambda}, \mathcal{R} \oplus \xi) \), then in a neighborhood \( \mathcal{U}(N) \cong \mathbb{T}^n \times B^k \) of any Liouville torus \( N \subset M \), the forms \( \lambda_i \) are \( \mathbb{T}^n \)-invariant under the Liouville coordinate system \( (\theta_i \; (\mathrm{mod}\; 1), z_j) \), and are expressed as
\[
\lambda_i = \sum_{j=1}^n a_{ij}(z) \, d\theta_j + \sum_{j=1}^k b_{ij}(z) \, dz_j, \quad i = 1, \dots, q.
\]

\end{theorem}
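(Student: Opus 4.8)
The plan is to reduce everything to the Liouville normal form and then feed the conserved objects of the system into the Fundamental Conservation Property. First I would invoke Theorem~\ref{L7}: since \((Y_1,\dots,Y_n,F_1,\dots,F_k)\) is integrable of type \((n,k)\) and regular at the compact level set \(N\), it supplies Liouville coordinates \((\theta_1,\dots,\theta_n,z_1,\dots,z_k)\) on \(\mathcal{U}(N)\cong\mathbb{T}^n\times B^k\), a free Liouville \(\mathbb{T}^n\)-action preserving the system, with \(F_r=F_r(z)\) and \(Y_i=\sum_j a_{ij}(z)\,\partial_{\theta_j}\), the \(n\times n\) matrix \((a_{ij}(z))\) being invertible. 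In these coordinates the \(\mathbb{T}^n\)-action is generated by the \(\partial_{\theta_l}\), and the assertion of the theorem is exactly that \(\mathcal{L}_{\partial_{\theta_l}}\lambda_i=0\) for all \(i,l\); once this invariance is known, expanding \(\lambda_i\) in the basis \(d\theta_j,dz_j\) gives coefficients independent of \(\theta\), and closedness of \(\lambda_i\) further forces the \(a_{ij}\) to be the (constant) periods. So the entire statement reduces to proving the \(\mathbb{T}^n\)-invariance of each \(\lambda_i\).

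Next I would extract invariance from Theorem~\ref{L8}. The characteristic form \(\lambda=\lambda_1\wedge\cdots\wedge\lambda_q\) is conserved by every \(Y_i\), because the \(Y_i\) are strict \(q\)-cosymplectic, so \(\mathcal{L}_{Y_i}\lambda=0\) by Theorem~\ref{T1}(ii). Applying Theorem~\ref{L8} with \(\mathcal{G}=\lambda\) shows that the Liouville torus action preserves \(\lambda\), i.e. \(\mathcal{L}_{\partial_{\theta_l}}\lambda=0\). Moreover \(\xi=\{v:\iota_v\lambda=0\}\) (the \((q-1)\)-forms \(\lambda_1\wedge\cdots\widehat{\lambda_i}\cdots\wedge\lambda_q\) are linearly independent), so preservation of \(\lambda\) automatically yields preservation of the distribution \(\xi\) and hence of its annihilator \(\xi^{\circ}=\mathrm{span}\{\lambda_1,\dots,\lambda_q\}\). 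Combining this with closedness and Cartan's formula gives, for each \(i,l\),
\[
\mathcal{L}_{\partial_{\theta_l}}\lambda_i=d\big(\iota_{\partial_{\theta_l}}\lambda_i\big)\in\xi^{\circ},
\]
so we may write \(\mathcal{L}_{\partial_{\theta_l}}\lambda_i=\sum_m C^{l}_{im}\lambda_m\). Because the \(\partial_{\theta_l}\) commute, the matrices \(C^{l}\) define a flat connection on the rank-\(q\) bundle \(\xi^{\circ}\), and \(1\)-periodicity of the action makes its holonomy around every \(\theta\)-cycle trivial; the invariance \(\mathcal{L}_{\partial_{\theta_l}}\lambda=0\) gives only the trace constraint \(\sum_i C^{l}_{ii}=0\).

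The hard part — and the crux of the whole proof — is to upgrade the invariance of the product \(\lambda\) to the invariance of each factor, i.e. to show \(C^{l}_{im}=0\). Strictness alone conserves only \(\lambda\), and a connected abelian group can in principle rotate the \(\lambda_i\) inside \(\xi^{\circ}\) while fixing their wedge, so the trace-free relation by itself does not rule this out. To close the gap I would use that the parallel frame of the flat, trivial-holonomy connection must here be realized by the \emph{closed} forms \(\lambda_i\), whose periods \(\oint_{\gamma_l}\lambda_i\) are constant (by Stokes, since \(\lambda_i\) is closed on the cylinders swept out as \(z\) varies). Concretely, I would write \(\lambda_i=\sum_j P_{ij}(\theta,z)\,d\theta_j+\sum_j Q_{ij}(\theta,z)\,dz_j\), substitute \(Y_i=\sum_j a_{ij}(z)\partial_{\theta_j}\) into \(\mathcal{L}_{Y_i}\lambda=0\), and read off the \(d\theta\wedge d\theta\)- and \(d\theta\wedge dz\)-components; using the invertibility of \((a_{ij}(z))\), these equations together with the closedness relations \(\partial_{\theta_s}P_{ij}=\partial_{\theta_j}P_{is}\) and \(\partial_{z_s}P_{ij}=\partial_{\theta_j}Q_{is}\) should force \(\partial_{\theta_l}P_{ij}=\partial_{\theta_l}Q_{ij}=0\), that is \(C^{l}_{im}=0\).

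Granting \(\mathcal{L}_{\partial_{\theta_l}}\lambda_i=0\), the proof finishes at once: \(\mathbb{T}^n\)-invariance means the coefficients of \(\lambda_i\) depend on \(z\) alone, so \(\lambda_i=\sum_{j=1}^n a_{ij}(z)\,d\theta_j+\sum_{j=1}^k b_{ij}(z)\,dz_j\) as claimed, and \(d\lambda_i=0\) shows the \(a_{ij}\) are in fact the constant periods. I expect the genuine obstacle to be precisely the promotion step of the third paragraph: verifying that the flat \(\xi^{\circ}\)-connection with trivial holonomy, constrained to admit a parallel frame of closed forms with constant periods, is trivial in the given frame \(\{\lambda_i\}\) and not merely gauge-trivial.
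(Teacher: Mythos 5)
You follow the paper's route exactly through its first two steps: Theorem~\ref{L7} supplies the Liouville coordinates, and Theorem~\ref{L8} applied to $\mathcal{G}=\lambda_1\wedge\cdots\wedge\lambda_q$ gives $\mathbb{T}^n$-invariance of the characteristic form. You also correctly identify the crux --- promoting invariance of the wedge to invariance of each factor --- which is precisely the step the paper does not justify (its proof simply asserts ``this implies that each $\lambda_i$ must be of the form\dots''). The genuine gap is that your proposed completion of this step (the ``should force $\partial_{\theta_l}P_{ij}=\partial_{\theta_l}Q_{ij}=0$'' claim in your third paragraph) cannot be carried out: the promotion is false under the stated hypotheses. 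Take $q=2$, $m=1$, $(n,k)=(1,3)$, $M=\mathbb{T}^1_\theta\times\mathbb{R}^3_{(z_1,z_2,z_3)}$, and set
\[
h_1=\cos(2\pi\theta)\,z_1-\sin(2\pi\theta)\,z_2,\qquad h_2=\sin(2\pi\theta)\,z_1+\cos(2\pi\theta)\,z_2,
\]
\[
\lambda_1=dh_1,\qquad \lambda_2=dh_2,\qquad \Omega=d\theta\wedge dz_3 .
\]
Since $(z_1,z_2)\mapsto(h_1,h_2)$ is a rotation for each fixed $\theta$, the tuple $(\theta,h_1,h_2,z_3)$ is a global coordinate system in which $\lambda_i=dh_i$ and $\Omega=d\theta\wedge dz_3$; hence $R_i=\partial/\partial h_i$, $\ker\Omega=\operatorname{span}\{R_1,R_2\}$, and $\Omega$ restricts nondegenerately to $\xi=\ker\lambda_1\cap\ker\lambda_2$, so this is a genuine $2$-cosymplectic manifold. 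The system $Y_1=\partial_\theta$ (in the $(\theta,z)$ chart), $F_j=z_j$, is integrable of type $(1,3)$ with circle fibers, the Liouville torus action is the $\theta$-rotation, and $(\theta,z)$ are already Liouville coordinates. Moreover $Y_1$ is strict, because
\[
\lambda_1\wedge\lambda_2=dz_1\wedge dz_2+2\pi z_1\,dz_1\wedge d\theta+2\pi z_2\,dz_2\wedge d\theta
\]
has coefficients depending only on $z$. Yet
\[
\mathcal{L}_{\partial_\theta}\lambda_1=d\big(i_{\partial_\theta}dh_1\big)=d(-2\pi h_2)=-2\pi\lambda_2\neq0,
\]
so neither $\lambda_1$ nor $\lambda_2$ is $\mathbb{T}^1$-invariant, and neither has the asserted form $\sum_j a_{ij}(z)\,d\theta_j+\sum_j b_{ij}(z)\,dz_j$.

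This example defeats every tool you hoped to use in the promotion step: the $\lambda_i$ are closed (indeed exact), their periods over the $\theta$-cycle are constant (zero), and the matrix $(a_{ij}(z))=(1)$ is invertible, yet the coefficients $P_{ij},Q_{ij}$ are genuinely $\theta$-dependent. What happens is exactly the rotation inside $\xi^{\circ}$ that you warned about: here $C^{1}=2\pi\begin{pmatrix}0&-1\\1&0\end{pmatrix}$ is trace-free, the associated flat connection has trivial holonomy around the cycle, but the frame $\{\lambda_1,\lambda_2\}$ is not parallel. So your diagnosis of the crux was correct, but the step is not merely hard --- it is impossible, and consequently the theorem as stated (and the paper's proof, which makes the same leap silently) is incorrect: strictness of the $Y_i$ constrains only the wedge, and Theorem~\ref{L8} controls nothing more. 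A repaired statement needs a stronger hypothesis, e.g.\ that each $Y_i$ preserves each $\lambda_j$ separately (as holds for Reeb fields and for infinitesimal automorphisms); then Theorem~\ref{L8} applied to each $\mathcal{G}=\lambda_j$ individually yields the desired normal form at once.
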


\begin{proof}
By Theorem~\ref{L8}, the characteristic form \( \lambda_1 \wedge \cdots \wedge \lambda_q \) is \( \mathbb{T}^n \)-invariant under the Liouville coordinate system \( (\theta_i \; (\mathrm{mod}\; 1), z_j) \). Therefore, the characteristic form must be of the form
\[
\lambda_1 \wedge \cdots \wedge \lambda_q = \sum_{\substack{h + l = q \\ 0 \leq h \leq n, \; 0 \leq l \leq k}} f_{h,l}(z) \, d\theta_{i_1} \wedge \cdots \wedge d\theta_{i_h} \wedge dz_{j_1} \wedge \cdots \wedge dz_{j_l},
\]
for some smooth functions \( f_{h,l}(z) \) depending only on \( z \)-variables.

This implies that each \( \lambda_i \) must be of the form
\[
\lambda_i = \sum_{j=1}^n a_{ij}(z) \, d\theta_j + \sum_{j=1}^k b_{ij}(z) \, dz_j,
\]
for some functions \( a_{ij}(z), b_{ij}(z) \), which confirms that each \( \lambda_i \) is \( \mathbb{T}^n \)-invariant.
%
\end{proof}

\section{Hamiltonian Systems, Completely Integrability, q-Cosymplectization and Symplectization}
Given a \( q \)-cosymplectic manifold \( (M, \Omega, \vec{\lambda}) \), we consider the vector bundle morphism
\[
b: TM \rightarrow T^*M, \quad v_x \in T_x M \mapsto b(v_x) := \left( i_v \Omega + \sum_{i=1}^q \lambda_i(v) \lambda_i \right)\big|_x, \quad x \in M,
\]
which is a vector bundle isomorphism.

One can observe that \( b \) is induced by the non-degenerate covariant 2-tensor field
\[
\Omega + \sum_{i=1}^q \lambda_i \otimes \lambda_i
\]
on \( M \).

For every \( f \in C^\infty(M) \), we can define four vector fields. The first is:

\medskip

\noindent\textbullet\; A \emph{gradient vector field}, defined by
\begin{align}
\nabla f := b^{-1}(df),
\end{align}
which equivalently satisfies
\[
i_{\nabla f} \Omega = df - \sum_{i=1}^q (i_{\nabla f} \lambda_i)\, \lambda_i.
\]

We can verify that
\[
i_{\nabla f} \lambda_i = R_i(f), \quad i = 1, \dots, q.
\]
Indeed, for any Reeb vector field \( R_j \), we have
\[
0 = i_{R_j} i_{\nabla f} \Omega = i_{R_j} df - i_{R_j} \left( \sum_{i=1}^q (i_{\nabla f} \lambda_i) \lambda_i \right) = R_j(f) - i_{\nabla f} \lambda_j,
\]
which proves the claim.

Moreover, we observe the following properties:
\begin{align*}
&0 = i_{\nabla f} i_{\nabla f} \Omega = \nabla f(f) - \sum_{i=1}^q (i_{\nabla f} \lambda_i)^2 
\quad \Longleftrightarrow \quad \nabla f(f) = \sum_{i=1}^q (i_{\nabla f} \lambda_i)^2 \geq 0, \\
&0 = i_{\nabla f} i_{\nabla f} \lambda_i = \nabla f(R_i f), \quad i = 1, \dots, q.
\end{align*}

Hence, \( \nabla f(f) \geq 0 \), and the functions \( R_i f \), for \( i = 1, \dots, q \), are first integrals of \( \nabla f \).

	In particular, 
\[
\nabla f(f) = 0 \; \Longleftrightarrow \; R_i f = 0, \; i = 1, \dots, q \; \Longleftrightarrow \; i_{\nabla f} \lambda_i = 0, \; i = 1, \dots, q.
\]
In this case, \( f \) is a first integral of \( \nabla f \), and locally there exist functions \( g_1, \dots, g_q \) such that \( dg_i = \lambda_i \) and
\[
\nabla f(g_i) = i_{\nabla f} \lambda_i = 0,
\]
which implies that, locally, \( g_1, \dots, g_q \) are also first integrals of \( \nabla f \).

\medskip

\noindent\textbullet\; A \emph{local gradient vector field}, which satisfies
\begin{align} \label{LG}
d\left( i_X \Omega + \sum_{i=1}^q \lambda_i(X) \lambda_i \right) = 0.
\end{align}

\medskip

\noindent\textbullet\; A \emph{Hamiltonian vector field} \( X_f \), defined by
\begin{align}
X_f := b^{-1} \left( df - \sum_{i=1}^q (R_i f) \lambda_i \right),
\end{align}
which is equivalent to
\[
i_{X_f} \Omega = df - \sum_{i=1}^q (R_i f) \lambda_i.
\]
We can verify that
\[
i_{X_f} \lambda_i = 0, \quad i = 1, \dots, q.
\]
Indeed, by the definition of the map \( b \), we have
\[
b(X_f) = i_{X_f} \Omega + \sum_{i=1}^q \lambda_i(X_f) \lambda_i = df - \sum_{i=1}^q (R_i f) \lambda_i.
\]
Therefore, for any \( R_j \), it follows that
\begin{align*}
\lambda_j(X_f) 
&= i_{R_j} i_{X_f} \Omega + i_{R_j} \left( \sum_{i=1}^q \lambda_i(X_f) \lambda_i \right) \\
&= i_{R_j} df - i_{R_j} \left( \sum_{i=1}^q (R_i f) \lambda_i \right) 
= R_j(f) - R_j(f) = 0.
\end{align*}

Moreover, we observe that \( X_f(f) = 0 \), since
\[
0 = i_{X_f} i_{X_f} \Omega = X_f(f) - \sum_{i=1}^q (R_i f) \lambda_i(X_f) = X_f(f).
\]

Additionally, since the \( \lambda_i \) are closed 1-forms and \( i_{X_f} \lambda_i = 0 \), \( i = 1, \dots, q \), it follows that locally there exist functions \( f_1, \dots, f_q \) such that \( df_i = \lambda_i \) and
\[
i_{X_f} df_i = i_{X_f} \lambda_i = 0.
\]
Thus, locally, \( f_1, \dots, f_q \) are also first integrals of \( X_f \).

\medskip

\noindent\textbullet\; A \emph{\( q \)-evolution vector field} associated to the function \( f \), defined by
\begin{equation}\label{evolutionvf}
E_f := \sum_{i=1}^q R_i + X_f.
\end{equation}

 \begin{remark}
Condition \eqref{LG} is equivalent to
\begin{align} \label{LX}
\mathcal{L}_X \Omega = \sum_{i=1}^q \lambda_i \wedge \mathcal{L}_X \lambda_i.
\end{align}
\end{remark}

\begin{proof}
Indeed, we compute:
\begin{align*}
\mathcal{L}_X \Omega &= i_X d\Omega + d i_X \Omega \\
&= \sum_{i=1}^q \lambda_i \wedge d(i_X \lambda_i) \\
&= \sum_{i=1}^q \lambda_i \wedge \left( d(i_X \lambda_i) + i_X d\lambda_i \right) \\
&= \sum_{i=1}^q \lambda_i \wedge \mathcal{L}_X \lambda_i,
\end{align*}
which proves the claim.
\end{proof}

\begin{definition}
An \emph{infinitesimal automorphism} of the \( q \)-cosymplectic manifold \( (M, \Omega, \vec{\lambda}) \) is a vector field \( X \) on \( M \) such that
\begin{align} \label{Ome}
\mathcal{L}_X \Omega = 0, \quad \mathcal{L}_X \lambda_i = 0, \quad i = 1, \dots, q.
\end{align}
\end{definition}

\begin{remark}
We observe that the Reeb vector fields \( R_i \), \( i = 1, \dots, q \), and Hamiltonian vector fields are all infinitesimal automorphisms. Moreover, every infinitesimal automorphism preserves the volume form
\[
\Omega^n \wedge \lambda_1 \wedge \cdots \wedge \lambda_q,
\]
but the converse is not necessarily true.
\end{remark}

\begin{theorem} \label{LGG}
\begin{itemize}
    \item[(1)] The Lie bracket of an infinitesimal automorphism \( X \) and a local gradient vector field \( Y \) is again a local gradient vector field.
    
    \item[(2)] The Lie bracket of an infinitesimal automorphism \( X \) and a Hamiltonian vector field \( X_f \) is again a Hamiltonian vector field.
\end{itemize}
\end{theorem}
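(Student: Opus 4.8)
The plan is to reduce both parts to a single computation expressing the Lie bracket through the bundle isomorphism \( b \). Concretely, I claim that for any infinitesimal automorphism \( X \) and any vector field \( Z \) on \( M \),
\[
b([X,Z]) = \mathcal{L}_X\big(b(Z)\big).
\]
Once this is available, part (1) follows by applying \( d \) and part (2) by an explicit evaluation, so establishing this identity is the first and central step.

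To prove the identity I would use the standard operator relation \( i_{[X,Z]} = \mathcal{L}_X \circ i_Z - i_Z \circ \mathcal{L}_X \) acting on forms. Applied to \( \Omega \) it gives \( i_{[X,Z]}\Omega = \mathcal{L}_X\, i_Z\Omega - i_Z \mathcal{L}_X\Omega = \mathcal{L}_X\, i_Z\Omega \), using \( \mathcal{L}_X\Omega = 0 \); applied to \( \lambda_i \) it gives \( \lambda_i([X,Z]) = X(\lambda_i(Z)) \), using \( \mathcal{L}_X\lambda_i = 0 \). Substituting these into the definition of \( b \) and comparing with the Leibniz expansion \( \mathcal{L}_X(b(Z)) = \mathcal{L}_X\, i_Z\Omega + \sum_i \big( X(\lambda_i(Z))\,\lambda_i + \lambda_i(Z)\,\mathcal{L}_X\lambda_i \big) \), in which the final terms vanish, yields the claimed equality.

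For part (1), a local gradient vector field \( Y \) is characterized by \( d(b(Y)) = 0 \). Since \( d \) commutes with \( \mathcal{L}_X \), the identity gives \( d(b([X,Y])) = d\,\mathcal{L}_X(b(Y)) = \mathcal{L}_X\, d(b(Y)) = 0 \), so \( [X,Y] \) is again a local gradient vector field. For part (2), I would start from \( b(X_f) = df - \sum_i (R_i f)\,\lambda_i \) and compute, using \( \mathcal{L}_X\, df = d(Xf) \) and \( \mathcal{L}_X\lambda_i = 0 \),
\[
b([X,X_f]) = \mathcal{L}_X\big(b(X_f)\big) = d(Xf) - \sum_{i=1}^q X(R_i f)\,\lambda_i.
\]
The goal is to recognize the right-hand side as \( b(X_{Xf}) = d(Xf) - \sum_i (R_i(Xf))\,\lambda_i \), which requires \( X(R_i f) = R_i(Xf) \), i.e. \( [X,R_i] = 0 \).

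The main obstacle is precisely this commutation \( [X,R_i] = 0 \); everything else is routine Cartan calculus. I would prove it by showing \( [X,R_i] \) lies in both summands of \( TM = \mathcal{R}\oplus\xi \). On one hand \( \lambda_j([X,R_i]) = \mathcal{L}_X(\lambda_j(R_i)) - (\mathcal{L}_X\lambda_j)(R_i) = X(\delta_{ij}) = 0 \), so \( [X,R_i]\in\xi \); on the other hand \( i_{[X,R_i]}\Omega = \mathcal{L}_X\, i_{R_i}\Omega - i_{R_i}\mathcal{L}_X\Omega = 0 \) since \( i_{R_i}\Omega = 0 \) and \( \mathcal{L}_X\Omega = 0 \), so \( [X,R_i]\in\ker\Omega = \mathcal{R} \). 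As \( \mathcal{R}\cap\xi = \{0\} \), it follows that \( [X,R_i] = 0 \). Substituting back gives \( b([X,X_f]) = b(X_{Xf}) \), and applying the isomorphism \( b^{-1} \) shows \( [X,X_f] = X_{Xf} \) is Hamiltonian, completing the plan.
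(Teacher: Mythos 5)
Your proposal is correct, and it takes a genuinely different route from the paper. The paper treats the two statements separately and works componentwise with \( \Omega \) and the \( \lambda_i \): for part (1) it uses the equivalent characterization \( \mathcal{L}_Y\Omega = \sum_i \lambda_i \wedge \mathcal{L}_Y\lambda_i \) of local gradient fields together with \( \mathcal{L}_{[X,Y]} = \mathcal{L}_X\mathcal{L}_Y - \mathcal{L}_Y\mathcal{L}_X \), and for part (2) it establishes \( [X,R_i]=0 \) by contracting the volume form \( \Omega^n\wedge\lambda_1\wedge\cdots\wedge\lambda_q \) with \( R_i \), then computes \( i_{[X,X_f]}\Omega \) and \( i_{[X,X_f]}\lambda_i \) directly. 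You instead organize everything around the single naturality identity \( b([X,Z]) = \mathcal{L}_X\big(b(Z)\big) \), which does not appear in the paper; part (1) then collapses to the one-line observation that \( d \) commutes with \( \mathcal{L}_X \) applied to the defining condition \( d(b(Y))=0 \), avoiding any need for the equivalent Lie-derivative characterization, and part (2) reduces to identifying \( b([X,X_f]) \) with \( b(X_{Xf}) \) and invoking injectivity of \( b \). Your proof of the key commutation \( [X,R_i]=0 \) (showing the bracket lies in \( \xi \cap \ker\Omega = \xi\cap\mathcal{R} = \{0\} \)) differs from the volume-form trick in the paper's proof of this theorem, but it is exactly the argument the paper itself uses in Proposition 2.1 and in the remark on \( q \)-evolution fields immediately following this theorem, so it is certainly in the spirit of the text. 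The trade-off: your route is more conceptual and unified, making transparent that an infinitesimal automorphism acts naturally on the musical isomorphism and hence on every class of vector fields defined through it, while the paper's componentwise computations stay closer to the raw definitions and produce the intermediate volume-form identity as a by-product.
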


\begin{proof}
Using \eqref{LX} and \eqref{Ome}, we compute:
\begin{align*}
\mathcal{L}_{[X, Y]} \Omega &= \mathcal{L}_X \mathcal{L}_Y \Omega \\
&= \mathcal{L}_X \left( \sum_{i=1}^q \lambda_i \wedge \mathcal{L}_Y \lambda_i \right) \\
&= \sum_{i=1}^q \lambda_i \wedge \mathcal{L}_{[X, Y]} \lambda_i,
\end{align*}
which shows that \( [X, Y] \) satisfies the condition for being a local gradient vector field.

Next, for any Reeb vector field \( R_i \), \( i = 1, \dots, q \), we calculate:
\begin{align*}
0 &= \mathcal{L}_X \left( (-1)^{i-1} \Omega^n \wedge \lambda_1 \wedge \cdots \wedge \widehat{\lambda_i} \wedge \cdots \wedge \lambda_q \right) \\
&= \mathcal{L}_X \left( i_{R_i} (\Omega^n \wedge \lambda_1 \wedge \cdots \wedge \lambda_q) \right) \\
&= i_{[X, R_i]} (\Omega^n \wedge \lambda_1 \wedge \cdots \wedge \lambda_q) + i_{R_i} \mathcal{L}_X (\Omega^n \wedge \lambda_1 \wedge \cdots \wedge \lambda_q) \\
&= i_{[X, R_i]} (\Omega^n \wedge \lambda_1 \wedge \cdots \wedge \lambda_q),
\end{align*}
which implies that \( [X, R_i] = 0 \) for all \( i = 1, \dots, q \).

Now we compute:
\begin{align*}
i_{[X, X_f]} \Omega &= \mathcal{L}_X (i_{X_f} \Omega) \\
&= \mathcal{L}_X \left( df - \sum_{i=1}^q (R_i f) \lambda_i \right) \\
&= d(Xf) - \sum_{i=1}^q (i_{[X, R_i]} df) \lambda_i - \sum_{i=1}^q (R_i (Xf)) \lambda_i - \sum_{i=1}^q (R_i f) \mathcal{L}_X \lambda_i \\
&= d(Xf) - \sum_{i=1}^q (R_i (Xf)) \lambda_i,
\end{align*}
since \( [X, R_i] = 0 \) and \( \mathcal{L}_X \lambda_i = 0 \).

Moreover,
\[
i_{[X, X_f]} \lambda_i = \mathcal{L}_X (i_{X_f} \lambda_i) - i_{X_f} \mathcal{L}_X \lambda_i = 0.
\]
Thus, \( [X, X_f] \) is a Hamiltonian vector field with Hamiltonian function \( X(f) \), as claimed.
\end{proof}

 \begin{remark}
Since every gradient vector field is also a local gradient vector field, gradient vector fields also satisfy conclusion (1) of Theorem~\ref{LGG}.
\end{remark}

\begin{remark}
For an infinitesimal automorphism \( X \) and a \( q \)-evolution vector field \( E_f \) associated to a function \( f \), the vector field
\[
\sum_{i=1}^q R_i + [X, E_f]
\]
is the \( q \)-evolution vector field \( E_{Xf} \) associated to the function \( Xf \).
\end{remark}

\begin{proof}
First, we show that \( [X, R_i] = 0 \) for all \( i = 1, \dots, q \). Indeed, for any \( i,j = 1, \dots, q \), we compute:
\[
0 = \mathcal{L}_X (i_{R_i} \lambda_j) = i_{[X, R_i]} \lambda_j + i_{R_i} \mathcal{L}_X \lambda_j = i_{[X, R_i]} \lambda_j,
\]
which implies \( [X, R_i] \in \xi \). Moreover,
\[
0 = \mathcal{L}_X (i_{R_i} \Omega) = i_{[X, R_i]} \Omega + i_{R_i} \mathcal{L}_X \Omega = i_{[X, R_i]} \Omega.
\]
Since \( \Omega \) is non-degenerate on \( \xi \), we conclude \( [X, R_i] = 0 \) for all \( i \).

Next, by Theorem~\ref{LGG}, we know that \( [X, X_f] \) is the Hamiltonian vector field associated to the function \( X(f) \). Hence, the \( q \)-evolution vector field associated to \( Xf \) is given by
\[
E_{Xf} = \sum_{i=1}^q R_i + X_{X(f)} = \sum_{i=1}^q R_i + [X, X_f].
\]
\end{proof}

We now observe that a \( q \)-cosymplectic manifold is a corank-\( q \) Poisson manifold, with the Poisson bracket defined as follows.

\begin{theorem} \label{D2}
Let \( (M, \Omega, \vec{\lambda}) \) be a \( q \)-cosymplectic manifold. Define the bracket
\[
\{ \cdot, \cdot \} : C^\infty(M) \times C^\infty(M) \to C^\infty(M), \quad (f, g) \mapsto \{f, g\} := \Omega(X_f, X_g),
\]
then this bracket is a Poisson bracket.
\end{theorem}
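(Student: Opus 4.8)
The plan is to check the four axioms of a Poisson bracket—$\R$-bilinearity, antisymmetry, the Leibniz rule, and the Jacobi identity—concentrating the effort on the last. First I would rewrite the bracket in a workable form. Since $i_{X_f}\Omega = df - \sum_{i=1}^q (R_i f)\lambda_i$ and $\lambda_i(X_g) = 0$ for all $i$ (both established above for Hamiltonian vector fields), evaluating $i_{X_f}\Omega$ on $X_g$ gives
\[
\{f,g\} = \Omega(X_f, X_g) = \Big( df - \sum_{i=1}^q (R_i f)\,\lambda_i \Big)(X_g) = X_g(f),
\]
and by the antisymmetry of $\Omega$ this equals $-X_f(g)$ as well. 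From the identity $X_{\alpha f + \beta g} = \alpha X_f + \beta X_g$ for constants $\alpha,\beta$ (immediate from the linearity of $d$, of each $R_i$, and of $b^{-1}$), the bracket is $\R$-bilinear; antisymmetry is inherited directly from $\Omega$; and the Leibniz rule $\{f,gh\} = g\{f,h\} + h\{f,g\}$ follows at once by writing $\{f,\cdot\} = -X_f(\cdot)$ and using that the vector field $X_f$ is a derivation.

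The crux is the Jacobi identity, and the key intermediate step I would isolate is the relation
\[
[X_f, X_g] = -\,X_{\{f,g\}}.
\]
To obtain it, I would invoke that every Hamiltonian vector field is an infinitesimal automorphism, so in particular $X_g$ is one, and then apply Theorem~\ref{LGG}(2), which identifies the bracket of an infinitesimal automorphism with a Hamiltonian field as the Hamiltonian field of the derived function: $[X_g, X_f] = X_{X_g(f)}$. Since $X_g(f) = \{f,g\}$ by the previous paragraph, this yields $[X_g, X_f] = X_{\{f,g\}}$, which is the displayed identity.

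Finally I would close the Jacobi identity by a short cyclic computation built on this relation. Writing $\{u,v\} = -X_u(v)$ and expanding the commutator,
\[
\{\{f,g\},h\} = -X_{\{f,g\}}(h) = [X_f, X_g](h) = X_f X_g h - X_g X_f h.
\]
Since $X_g h = -\{g,h\}$ and $X_f h = -\{f,h\}$, the right-hand side becomes $\{f,\{g,h\}\} - \{g,\{f,h\}\}$, and rearranging with antisymmetry gives the cyclic sum $\{f,\{g,h\}\} + \{g,\{h,f\}\} + \{h,\{f,g\}\} = 0$. The main obstacle throughout is the Jacobi identity; everything hinges on the commutator relation above, which in turn depends essentially on Theorem~\ref{LGG}(2) and on the vanishing $\lambda_i(X_g)=0$ that collapses the bracket to a directional derivative. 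The one point demanding care is bookkeeping the sign conventions between $\{f,g\} = X_g(f)$ and $\{f,g\} = -X_f(g)$, so that the commutator identity and the final cyclic cancellation remain consistent.
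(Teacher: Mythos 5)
Your overall skeleton coincides with the paper's: you reduce the bracket to $\{f,g\}=X_g(f)$ (valid because $\lambda_i(X_g)=0$), you isolate $X_{\{f,g\}}=-[X_f,X_g]$ as the crux, and your bilinearity, antisymmetry, Leibniz, and cyclic Jacobi computations are all correct given that identity. The genuine gap is in how you establish the identity itself.

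You justify $[X_g,X_f]=X_{X_g(f)}$ by invoking ``every Hamiltonian vector field is an infinitesimal automorphism'' and then applying Theorem~\ref{LGG}(2) with $X=X_g$. That premise is false, even though the paper itself asserts it in a remark. An infinitesimal automorphism must satisfy both $\mathcal{L}_X\lambda_i=0$ and $\mathcal{L}_X\Omega=0$; the first does hold for Hamiltonian fields, but the second does not, since
\[
\mathcal{L}_{X_g}\Omega \;=\; d\,i_{X_g}\Omega \;=\; d\Big(dg-\sum_{i=1}^q (R_i g)\,\lambda_i\Big)\;=\;-\sum_{i=1}^q d(R_i g)\wedge\lambda_i,
\]
which is nonzero for generic $g$. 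Concretely, take $q=1$, $M=\mathbb{R}^3$ with $\Omega=dx\wedge dy$, $\lambda=dz$, $R=\partial/\partial z$, and $g=xz$: then $i_{X_g}\Omega=z\,dx$, so $X_g=-z\,\partial/\partial y$ and $\mathcal{L}_{X_g}\Omega=dz\wedge dx\neq 0$. Hence Theorem~\ref{LGG}(2) cannot be applied with $X=X_g$; its proof uses $\mathcal{L}_X\Omega=0$ precisely to discard the term $i_{X_f}\mathcal{L}_X\Omega$, and that is exactly the term that does not vanish here. So as written, your derivation of the commutator relation collapses, and with it the Jacobi identity.

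The relation $X_{\{f,g\}}=-[X_f,X_g]$ is nevertheless true, but it requires the direct computation that the paper performs: write $i_{[X_g,X_f]}\Omega=\mathcal{L}_{X_g}(i_{X_f}\Omega)-i_{X_f}\mathcal{L}_{X_g}\Omega$, keep the second term (which equals $\sum_i X_f(R_i g)\,\lambda_i$), and cancel it against part of the first term using the auxiliary identity $[R_i,X_g]=X_{R_i g}$ --- which is legitimate, because the Reeb fields genuinely are infinitesimal automorphisms. Combined with $i_{[X_f,X_g]}\lambda_i=0$ and the fact that a vector field is determined by its contractions with $\Omega$ and the $\lambda_i$, this yields the identity. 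Your second paragraph should be replaced by this computation; the lemma you leaned on is a trap set by the paper's own remark, which the paper's own proof of this theorem carefully avoids using.
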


\begin{proof}
It is clear that the bracket is bilinear and antisymmetric. We now verify the Jacobi identity and the Leibniz rule.

First, we show that
\[
X_{\{f, g\}} = -[X_f, X_g].
\]

For any \( j = 1, \dots, q \), we compute:
\begin{align*}
i_{[R_j, X_f]} \Omega &= \mathcal{L}_{R_j}(i_{X_f} \Omega) 
= \mathcal{L}_{R_j} \left( df - \sum_{i=1}^q R_i(f) \lambda_i \right)
= d(R_j(f)) - \sum_{i=1}^q R_i(R_j(f)) \lambda_i, \\
i_{[R_j, X_f]} \lambda_i &= \mathcal{L}_{R_j} (i_{X_f} \lambda_i) - i_{X_f} \mathcal{L}_{R_j} \lambda_i = 0,
\end{align*}
so we conclude
\begin{align} \label{Ri}
[R_j, X_f] = X_{R_j(f)}, \quad j = 1, \dots, q.
\end{align}

Now, using
\begin{align}
\{f, g\} &= (i_{X_f} \Omega)(X_g) = df(X_g) - \sum_{i=1}^q R_i(f) \lambda_i(X_g) = X_g(f), \label{Xg} \\
\mathcal{L}_{X_f} \lambda_i &= d(i_{X_f} \lambda_i) + i_{X_f} d\lambda_i = 0, \quad i = 1, \dots, q, \label{LL}
\end{align}
we compute:
\begin{align*}
i_{X_{\{f, g\}}} \Omega &= d\{f, g\} - \sum_{i=1}^q R_i(\{f, g\}) \lambda_i 
= d(X_g(f)) - \sum_{i=1}^q R_i(X_g(f)) \lambda_i, \\
i_{-[X_f, X_g]} \Omega &= i_{[X_g, X_f]} \Omega 
= \mathcal{L}_{X_g}(i_{X_f} \Omega) - i_{X_f} \mathcal{L}_{X_g} \Omega \\
&= \mathcal{L}_{X_g} \left( df - \sum_{i=1}^q R_i(f) \lambda_i \right) - i_{X_f} d \left( dg - \sum_{i=1}^q R_i(g) \lambda_i \right) \\
&= d(X_g(f)) - \sum_{i=1}^q X_g(R_i(f)) \lambda_i + \sum_{i=1}^q X_f(R_i(g)) \lambda_i \\
&= d(X_g(f)) - \sum_{i=1}^q X_g(R_i(f)) \lambda_i + \sum_{i=1}^q \{ R_i(g), f \} \lambda_i \\
&= d(X_g(f)) - \sum_{i=1}^q X_g(R_i(f)) \lambda_i - \sum_{i=1}^q X_{R_i(g)}(f) \lambda_i \\
&= d(X_g(f)) - \sum_{i=1}^q [R_i, X_g](f) \lambda_i \\
&= d(X_g(f)) - \sum_{i=1}^q R_i(X_g(f)) \lambda_i.
\end{align*}
Hence,
\begin{align} \label{Xf}
i_{X_{\{f,g\}}} \Omega = i_{-[X_f, X_g]} \Omega.
\end{align}

Also, since
\[
i_{[X_f, X_g]} \lambda_i = \mathcal{L}_{X_f} (i_{X_g} \lambda_i) - i_{X_g} \mathcal{L}_{X_f} \lambda_i = 0, \quad i = 1, \dots, q,
\]
we get
\[
i_{X_{\{f,g\}} + [X_f, X_g]} \lambda_i = 0,
\]
so \( X_{\{f, g\}} = -[X_f, X_g] \).

\medskip

To prove the Jacobi identity, for any \( f, g, h \in C^\infty(M) \), we compute:
\begin{align*}
\{f, \{g, h\}\} &= X_{\{g, h\}}(f) = -[X_g, X_h](f) = X_h(X_g(f)) - X_g(X_h(f)) \\
&= X_h\{f, g\} - X_g\{f, h\} = \{\{f, g\}, h\} - \{\{f, h\}, g\},
\end{align*}
which implies
\[
\{\{f, g\}, h\} + \{\{g, h\}, f\} + \{\{h, f\}, g\} = 0.
\]

\medskip

Finally, we verify the Leibniz rule. For any \( f, g, h \in C^\infty(M) \), using \eqref{Xg}, we get:
\[
\{fg, h\} = X_h(fg) = g X_h(f) + f X_h(g) = g \{f, h\} + f \{g, h\}.
\]

This completes the proof that the bracket defined is  Poisson.
\end{proof}

\begin{theorem}
Let \( (M, \Omega, \vec{\lambda}) \) be a \( q \)-cosymplectic manifold. The relationship between the Hamiltonian vector field and the gradient vector field of a function \( f \in C^\infty(M) \) is given by:
\[
\nabla f = X_f + \sum_{i=1}^q R_i(f) R_i.
\]
Thus,
\[
\{f, g\} = \Omega(X_f, X_g) = \Omega(\nabla f, \nabla g).
\]
\end{theorem}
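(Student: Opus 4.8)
The plan is to establish the decomposition $\nabla f = X_f + \sum_{i=1}^q R_i(f) R_i$ first, since the bracket identity $\Omega(X_f, X_g) = \Omega(\nabla f, \nabla g)$ is then an immediate consequence of the Reeb fields lying in $\ker \Omega$. The cleanest route to the decomposition is to exploit that the bundle morphism $b : TM \to T^*M$ is an isomorphism: it suffices to show that both sides of the claimed identity have the same image under $b$, and then invoke injectivity. Conceptually, the identity is just the splitting of $\nabla f$ relative to $TM = \mathcal{R}\oplus\xi$, with horizontal part $X_f \in \xi$ (because $\lambda_i(X_f)=0$) and Reeb part $\sum_i R_i(f) R_i \in \mathcal{R}$.

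To carry this out I would compute $b$ on each summand of the right-hand side. By definition of the gradient vector field, $b(\nabla f) = df$. For the Hamiltonian piece, the defining relation $i_{X_f}\Omega = df - \sum_i (R_i f)\lambda_i$ together with the already-established vanishing $\lambda_i(X_f) = 0$ gives $b(X_f) = df - \sum_{i=1}^q (R_i f)\lambda_i$. For each Reeb field, condition (iv) ($\mathcal{R} = \ker \Omega$) yields $i_{R_j}\Omega = 0$, while condition (ii) gives $\lambda_i(R_j) = \delta_{ij}$, so that $b(R_j) = \lambda_j$. Summing, $b\bigl(X_f + \sum_{j} R_j(f) R_j\bigr) = df - \sum_i (R_i f)\lambda_i + \sum_j R_j(f)\lambda_j = df = b(\nabla f)$, and injectivity of $b$ closes the argument. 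Equivalently, one could verify directly that $Y := X_f + \sum_i R_i(f) R_i$ satisfies both $i_Y \lambda_j = R_j(f)$ and $i_Y \Omega = df - \sum_i (i_Y\lambda_i)\lambda_i$, which are exactly the defining conditions of $\nabla f$.

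For the bracket identity, I would substitute $\nabla f = X_f + \sum_i R_i(f) R_i$ and $\nabla g = X_g + \sum_j R_j(g) R_j$ into $\Omega(\nabla f, \nabla g)$ and expand by bilinearity. Since $R_i \in \mathcal{R} = \ker \Omega$, every term containing a Reeb factor in either slot vanishes, leaving only $\Omega(X_f, X_g)$; by Theorem~\ref{D2} this equals $\{f, g\}$. There is no genuine obstacle here — the whole argument is a verification — and the only point requiring care is consistent bookkeeping of which structural properties ($\lambda_i(X_f) = 0$, $\lambda_i(R_j) = \delta_{ij}$, $R_i \in \ker\Omega$) are invoked at each step, all of which are recorded earlier in the excerpt.
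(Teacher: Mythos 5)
Your proof is correct, and it reaches the decomposition by a slightly different mechanism than the paper. The paper never invokes injectivity of $b$ directly: it observes that $i_{\nabla f}\Omega$ and $i_{X_f}\Omega$ are literally the same one-form $df - \sum_{i=1}^q R_i(f)\lambda_i$, concludes that $Y := \nabla f - X_f$ lies in $\ker\Omega = \mathcal{R}$, and then pins down the coefficients of $Y$ in the Reeb frame by contracting with the $\lambda_i$ (using $\lambda_i(X_f)=0$ and $i_{\nabla f}\lambda_i = R_i(f)$), giving $Y = \sum_{i=1}^q R_i(f)R_i$. Your route packages the contractions with $\Omega$ and with the $\lambda_i$ into the single computation $b\bigl(X_f + \sum_{j=1}^q R_j(f)R_j\bigr) = df = b(\nabla f)$ and then quotes the paper's assertion that $b$ is a vector bundle isomorphism. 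The two arguments rest on identical structural facts ($i_{X_f}\Omega = df - \sum_i (R_i f)\lambda_i$, $\lambda_i(X_f)=0$, $i_{R_j}\Omega = 0$, $\lambda_i(R_j)=\delta_{ij}$); yours is a bit shorter but treats the isomorphism property of $b$ as a black box (the paper states it without proof), whereas the paper's subtract-and-split argument effectively re-derives the needed uniqueness from the splitting $TM = \xi \oplus \ker\Omega$. Your parenthetical ``equivalent'' verification --- checking that $X_f + \sum_i R_i(f)R_i$ satisfies the two defining conditions of $\nabla f$ --- is the closer match to the paper's logic. The bracket identity is handled identically in both: expand by bilinearity and kill every term containing a Reeb factor using $\mathcal{R} = \ker\Omega$.
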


\begin{proof}
From the definitions of the Hamiltonian and gradient vector fields, we have
\[
i_{\nabla f} \Omega = df - \sum_{i=1}^q R_i(f) \lambda_i = i_{X_f} \Omega \quad \Rightarrow \quad i_{\nabla f - X_f} \Omega = 0.
\]
Therefore, \( \nabla f = X_f + Y \) for some vector field \( Y \) on \( M \) satisfying \( i_Y \Omega = 0 \). Hence,
\[
R_i(f) = i_{\nabla f} \lambda_i = i_{X_f} \lambda_i + i_Y \lambda_i, \quad i = 1, \dots, q.
\]
Since \( i_{X_f} \lambda_i = 0 \) for all \( i \), and because
\[
TM = \left( \bigcap_i \ker \lambda_i \right) \oplus \ker \Omega,
\]
it follows that \( Y = \sum_{i=1}^q R_i(f) R_i \), and so
\[
\nabla f = X_f + \sum_{i=1}^q R_i(f) R_i.
\]
Thus,
\[
\Omega(\nabla f, \nabla g) = \Omega\left(X_f + \sum_{i=1}^q R_i(f) R_i,\; X_g + \sum_{i=1}^q R_i(g) R_i\right) = \Omega(X_f, X_g) = \{f, g\}.
\]
\end{proof}

Consider the flow of the \( q \)-evolution vector field \( E_H = \sum_{i=1}^q R_i + X_H \) on a \( 2n+q \)-dimensional \( q \)-cosymplectic manifold \( (M, \Omega, \vec{\lambda}) \). A function \( f \) is said to be a \emph{first integral} of \( E_H \) if
\[
E_H(f) = \sum_{i=1}^q R_i(f) + X_H(f) = \sum_{i=1}^q R_i(f) + \{f, H\} = 0.
\]
Observe that
\[
E_H(H) = \sum_{i=1}^q R_i(H) + X_H(H) = \sum_{i=1}^q R_i(H) + \{H, H\} = \sum_{i=1}^q R_i(H),
\]
which shows that \( H \) may not be a first integral of \( E_H \).

We now formulate an analogue of Nekhoroshev's theorem on non-commutative integrability (see \cite{Nekhoroshev}).

\begin{theorem} \label{keji}
Consider the flow of the \( q \)-evolution vector field \( E_H = \sum_{i=1}^q R_i + X_H \) on a \( 2n+q \)-dimensional \( q \)-cosymplectic manifold \( (M, \Omega, \vec{\lambda}) \). Assume the following:

\begin{enumerate}
\item[(1)] The vector field \( E_H \) admits \( m \) independent first integrals \( f_1, \dots, f_m \), such that
\[
\{f_i, f_j\} = 0, \quad \text{for all } i = 1, \dots, r, \quad j = 1, \dots, m,\quad \text{with } r < m,
\]
and the condition \( 2n + q - 1 = m + r \) holds.

\item[(2)] The map \( F := (f_{r+1}, \dots, f_m) : M \to \mathbb{R}^{m - r} \) is a submersion whose fibers
\[
M_{\mathbf{c}} = \{x \in M \mid f_{r+1}(x) = c_1, \dots, f_m(x) = c_{m - r}\}, \quad \mathbf{c} \in \mathbb{R}^{m - r},
\]
are compact and connected.
\end{enumerate}

Then each fiber \( M_{\mathbf{c}} \) is diffeomorphic to a torus \( \mathbb{T}^{m - r} \), and the vector field \( E_H \) is conjugate to a constant vector field on \( \mathbb{T}^{m - r} \).
\end{theorem}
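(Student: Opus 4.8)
The plan is to reduce the statement to the abstract torus‑bundle criterion of Theorem~\ref{T3}. Writing $d=\dim M = 2n+q$ and taking the submersion to be the momentum map assembled from the prescribed first integrals, the hypotheses of Theorem~\ref{T3} demand exactly $d-k$ everywhere‑independent, pairwise‑commuting vector fields that are tangent to the fibers and along which the momentum map is constant. The purpose of the numerical condition $2n+q-1=m+r$ is precisely to force the number of commuting symmetries supplied by the $q$‑cosymplectic structure to agree with the fiber dimension $d-k$: with the full collection $(f_1,\dots,f_m)$ as submersion one has $d-k = (2n+q)-m = r+1$, which is the count I would target. Reconciling this bookkeeping with the torus dimension recorded in the statement is the first thing I would pin down.

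First I would exhibit the commuting family. The natural candidates are the Hamiltonian vector fields $X_{f_1},\dots,X_{f_r}$ of the involutive integrals together with the evolution field $E_H$ from \eqref{evolutionvf}. Pairwise commutation follows from identities already in hand: $[X_{f_i},X_{f_j}] = -X_{\{f_i,f_j\}} = 0$ for $i,j\le r$ by Theorem~\ref{D2} and the hypothesis $\{f_i,f_j\}=0$; and, expanding $E_H=\sum_{l=1}^q R_l + X_H$, relation \eqref{Ri} gives $[R_l,X_{f_i}]=X_{R_l f_i}$ while Theorem~\ref{D2} gives $[X_H,X_{f_i}]=X_{\{f_i,H\}}$, so that $[E_H,X_{f_i}]=X_{E_H(f_i)}$. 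Since each $f_i$ is a first integral of $E_H$, we have $E_H(f_i)=0$ and hence $[E_H,X_{f_i}]=0$. Tangency to the common level sets is equally direct: using \eqref{Xg}, $X_{f_i}(f_j)=\{f_j,f_i\}=0$ for $i\le r$ by involutivity, and $E_H(f_j)=0$ because the $f_j$ are first integrals.

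Next I would feed this data into Theorem~\ref{T3}. The submersion property and the compactness and connectedness of the fibers are provided by hypothesis~(2), while the commuting tangent family constructed above plays the role of the vector fields $Y_i$. Part~(i) of Theorem~\ref{T3} then identifies each fiber with a torus, and part~(ii), applied with $X=E_H$ — which satisfies $\mathcal{L}_{E_H}f_j=0$ and $[E_H,X_{f_i}]=0$ as just shown — yields that the flow of $E_H$ is conjugate, in every bundle chart, to a constant (linear) vector field on the torus. This gives exactly the asserted linearization of the $q$‑evolution dynamics.

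The hard part will be the independence and counting step, not the bracket computations. One must verify that the constructed fields are everywhere linearly independent on each fiber, so that they span its tangent space and the fiber is a genuine torus of the claimed dimension, and one must confirm that this number matches $\dim M - k$ via the relation $2n+q-1=m+r$. I would handle independence by pairing with the forms $\lambda_i$ and with the differentials $df_j$: since $\lambda_i(X_{f_j})=0$ while $\lambda_i(E_H)=\sum_{l=1}^q \lambda_i(R_l)+\lambda_i(X_H)=\sum_{l=1}^q\delta_{il}=1\neq 0$, the evolution field can never be a combination of the Hamiltonian fields (it carries the Reeb/``time'' part $\sum_l R_l\in\mathcal R$, transverse to $\xi$), and the residual independence among the $X_{f_i}$ reduces to functional independence of $f_1,\dots,f_r$ on $\xi$. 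Establishing this cleanly, and checking that the resulting symmetry count exactly fills out the fiber dimension, is the delicate point on which the whole argument turns.
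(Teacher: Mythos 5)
Your proposal follows the paper's own proof step for step: the same commuting family $E_H, X_{f_1},\dots,X_{f_r}$, obtained from $[X_{f_i},X_{f_j}]=-X_{\{f_i,f_j\}}=0$ and $[E_H,X_{f_i}]=X_{E_H(f_i)}=0$, the same tangency observations, and the same appeal to Theorem~\ref{T3}. In that sense it is essentially the paper's argument.

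However, the ``delicate point'' you flag at the end deserves a concrete answer, because it is a genuine gap --- one the paper's proof shares and passes over with the words ``by Theorem~\ref{T3}, the result follows.'' With the submersion of hypothesis~(2), $F=(f_{r+1},\dots,f_m)$, one has $k=m-r$, so Theorem~\ref{T3} demands $\dim M-k=(2n+q)-(m-r)=2r+1$ everywhere-independent commuting fields tangent to the fibers (using $2n+q-1=m+r$), and the fibers themselves are $(2r+1)$-dimensional; but only $r+1$ fields are exhibited, and $2r+1=m-r$ only if $m=3r+1$. No choice of $r$ reconciles both constraints with $n,q\geq 1$, so neither the paper's proof nor yours establishes the statement exactly as written. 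Your proposed repair is the correct one: apply Theorem~\ref{T3} to the full map $(f_1,\dots,f_m)$ (a submersion by the assumed independence of the integrals), so that $\dim M-k=r+1$ matches the family $E_H,X_{f_1},\dots,X_{f_r}$, whose tangency to these smaller fibers is exactly what involutivity and $E_H(f_j)=0$ provide; the conclusion then reads $\mathbb{T}^{r+1}$ in place of $\mathbb{T}^{m-r}$. Two residual points: your independence mechanism is right --- $\lambda_i(X_{f_j})=0$ while $\lambda_i(E_H)=1$ separates $E_H$ from the span of the $X_{f_j}$, and independence of the $X_{f_i}$ amounts to independence of $df_1,\dots,df_r$ restricted to $\xi$, which should be added to (or read into) the hypothesis that the integrals are ``independent''; and while compactness of the fibers of the full map follows from hypothesis~(2) (they are closed subsets of the compact sets $M_{\mathbf{c}}$), their connectedness does not, and must also be assumed.
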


\begin{proof}
First, note that the vector fields \( E_H, X_{f_1}, \dots, X_{f_r} \) commute among themselves. The relations
\[
[X_{f_i}, X_{f_j}] = 0, \quad i,j = 1, \dots, r
\]
follow directly from \eqref{Xf} and \eqref{fi}. Furthermore, from \eqref{Ri} we have
\begin{align*}
[E_H, X_{f_i}] &= \left[X_H + \sum_{j=1}^q R_j,\; X_{f_i} \right] = -X_{\{H, f_i\}} + \sum_{j=1}^q X_{R_j(f_i)} \\
&= X_{X_H(f_i)} + X_{\sum_{j=1}^q R_j(f_i)} = X_{E_H(f_i)} = 0, \quad i = 1, \dots, r.
\end{align*}

Next, since \( f_1, \dots, f_m \) are integrals of \( E_H \), the vector field \( E_H \) is tangent to the level sets \( M_{\mathbf{c}} \). Also, since \( X_{f_i}(f_j) = \{f_j, f_i\} = 0 \) for all \( i = 1, \dots, r \), the vector fields \( X_{f_1}, \dots, X_{f_r} \) are tangent to \( M_{\mathbf{c}} \) as well. Therefore, by Theorem \ref{T3}, the result follows.
\end{proof}

Now, given a \( q \)-cosymplectic manifold \( (M, \Omega, \vec{\lambda}) \) and a function \( H' \in C^\infty(M) \) which is a first integral of \( R_i \) for \( i = 1, \dots, q \), we show that another \( q \)-cosymplectic structure can be obtained.

\begin{theorem}
Consider a \( q \)-cosymplectic manifold \( (M, \Omega, \vec{\lambda}) \) and a function \( H' \in C^\infty(M) \) that is a first integral of each Reeb vector field \( R_i, i = 1, \dots, q \). Then the triple \( (M, \Omega', \vec{\lambda}) \), where
\[
\Omega' = \Omega + dH' \wedge \sum_{i=1}^q \lambda_i,
\]
defines another \( q \)-cosymplectic manifold structure on \( M \).
\end{theorem}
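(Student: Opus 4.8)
The plan is to verify the four defining conditions of a $q$-cosymplectic structure for the triple $(M,\Omega',\vec\lambda)$, taking advantage of the fact that the one-forms $\lambda_i$ are left unchanged. Since $\vec\lambda$ is untouched, the $\lambda_i$ remain pointwise linearly independent, nowhere vanishing and closed, and the candidate for the $q$-cosymplectic distribution is forced to be the \emph{same} subbundle $\xi=\bigcap_{i=1}^{q}\ker\lambda_i$ of rank $2n$. Hence the genuine work is threefold: (a) check that $\Omega'$ is closed; (b) produce the new Reeb vector fields and the new splitting $TM=\mathcal R'\oplus\xi$; and (c) establish $\ker\Omega'=\mathcal R'$ together with non-degeneracy of $\Omega'|_\xi$.

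Writing $\sigma:=\sum_{i=1}^q\lambda_i$, closedness is immediate, since $d\Omega'=d\Omega+d(dH')\wedge\sigma-dH'\wedge d\sigma=0$ because $d\Omega=0$, $d(dH')=0$ and every $d\lambda_i=0$. The computational heart of the argument is the interior-product identity
\[
i_v\Omega'=i_v\Omega+v(H')\,\sigma-\sigma(v)\,dH',\qquad v\in TM,
\]
which follows from $i_v(dH'\wedge\sigma)=v(H')\,\sigma-\sigma(v)\,dH'$. The hypothesis $R_i(H')=0$ for all $i$ is precisely what makes $dH'$ annihilate $\ker\Omega=\mathcal R$, so that the Hamiltonian vector field already introduced in the text satisfies $i_{X_{H'}}\Omega=dH'-\sum_i(R_iH')\lambda_i=dH'$; recall also that $X_{H'}\in\xi$ (so $\sigma(X_{H'})=0$) and $X_{H'}(H')=0$. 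I would then set
\[
R_i':=R_i+X_{H'},\qquad i=1,\dots,q,
\]
and verify $i_{R_i'}\Omega'=0$ from the identity above: one has $i_{R_i'}\Omega=i_{X_{H'}}\Omega=dH'$, $R_i'(H')=R_i(H')+X_{H'}(H')=0$, and $\sigma(R_i')=\sigma(R_i)+\sigma(X_{H'})=1$, whence $i_{R_i'}\Omega'=dH'-dH'=0$. Moreover $\lambda_j(R_i')=\lambda_j(R_i)+\lambda_j(X_{H'})=\delta_{ij}$, so the $R_i'$ are linearly independent and satisfy the duality relation (ii); defining $\mathcal R':=\mathrm{span}\{R_1',\dots,R_q'\}$ supplies (iii).

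For the splitting and condition (iv), the relations $\lambda_j(R_i')=\delta_{ij}$ force $\mathcal R'\cap\xi=\{0\}$, and the dimension count $q+2n=\dim M$ yields $TM=\mathcal R'\oplus\xi$. For $v,w\in\xi$ one has $\sigma(v)=\sigma(w)=0$, hence $(dH'\wedge\sigma)(v,w)=0$ and therefore $\Omega'|_\xi=\Omega|_\xi$, which is non-degenerate by hypothesis; this also gives $\ker\Omega'\cap\xi=\{0\}$. Combining $\mathcal R'\subseteq\ker\Omega'$ with the splitting, any $v\in\ker\Omega'$ decomposes as $v=v_{\mathcal R'}+v_\xi$, and since $i_{v_{\mathcal R'}}\Omega'=0$ we get $i_{v_\xi}\Omega'=0$, forcing $v_\xi=0$ by non-degeneracy on $\xi$; thus $\ker\Omega'=\mathcal R'$. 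Uniqueness of the $R_i'$ within $\mathcal R'$ follows because $\{\lambda_i|_{\mathcal R'}\}$ is a basis of $(\mathcal R')^*$.

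The only genuinely non-obvious step, and the one I expect to be the main obstacle, is discovering the correct correction term $X_{H'}$ in $R_i'=R_i+X_{H'}$; once this guess is in place, everything reduces to the bookkeeping above. The hypothesis $R_i(H')=0$ enters exactly twice and essentially: first to guarantee the solvability $i_{X_{H'}}\Omega=dH'$ (equivalently, that $dH'$ kills $\ker\Omega$), and second, together with $X_{H'}(H')=0$, to annihilate the $\sigma$-component of $i_{R_i'}\Omega'$. Without it neither the existence of the correcting field nor the vanishing of $i_{R_i'}\Omega'$ would hold.
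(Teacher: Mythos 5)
Your proposal is correct and follows essentially the same route as the paper: the key step in both is the corrected Reeb fields $R_i' := R_i + X_{H'}$, with the hypothesis $R_i(H')=0$ used exactly as you describe to get $i_{X_{H'}}\Omega = dH'$ and then $i_{R_i'}\Omega' = 0$. The only cosmetic difference is that the paper packages the kernel and non-degeneracy conditions into the single volume-form identity $(\Omega')^n \wedge \lambda_1 \wedge \cdots \wedge \lambda_q = \Omega^n \wedge \lambda_1 \wedge \cdots \wedge \lambda_q \neq 0$, whereas you verify $\Omega'|_\xi = \Omega|_\xi$ and $\ker\Omega' = \mathcal{R}'$ directly via the splitting — both are valid bookkeeping for the same argument.
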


\begin{proof}
Define new vector fields \( R_i' := R_i + X_{H'} \). From the definition of the Hamiltonian vector field \( X_{H'} \) and using \eqref{Ri}, we can verify the following:

\begin{align*}
i_{R_i'} \lambda_j &= i_{R_i} \lambda_j + i_{X_{H'}} \lambda_j = i_{R_i} \lambda_j = \delta_i^j, \\[1ex]
[R_i', R_j'] &= [R_i + X_{H'}, R_j + X_{H'}] \\
&= [R_i, R_j] + [X_{H'}, R_j] + [R_i, X_{H'}] + [X_{H'}, X_{H'}] \\
&= -X_{R_j(H')} + X_{R_i(H')} = 0 \quad \text{(since } R_j(H') = R_i(H') = 0), \\[1ex]
i_{R_i'} \Omega' &= i_{R_i + X_{H'}} \left(\Omega + dH' \wedge \sum_{j=1}^q \lambda_j \right) \\
&= i_{R_i} \Omega + i_{X_{H'}} \Omega + i_{R_i} \left(dH' \wedge \sum_{j=1}^q \lambda_j \right) + i_{X_{H'}} \left(dH' \wedge \sum_{j=1}^q \lambda_j \right) \\
&= 0 + i_{X_{H'}} \Omega + R_i(H') \sum_{j=1}^q \lambda_j - dH' = i_{X_{H'}} \Omega - dH' = 0, \\[1ex]
(\Omega')^n \wedge \lambda_1 \wedge \cdots \wedge \lambda_q &= \Omega^n \wedge \lambda_1 \wedge \cdots \wedge \lambda_q \neq 0.
\end{align*}

Hence, \( (M, \Omega', \vec{\lambda}) \) satisfies the conditions for being a \( q \)-cosymplectic manifold.
\end{proof}

Next, we explore the relationship between the Hamiltonian vector fields corresponding to the same function on the two 
\( q \)-cosymplectic manifolds introduced in the previous theorem. This allows us to understand how the Poisson brackets of two functions are related under this transformation.

\begin{theorem}\label{Gou}
Let \( X_f, \{\cdot,\cdot\} \) and \( X_f', \{\cdot,\cdot\}' \) denote the Hamiltonian vector field and the Poisson bracket on \( (M, \Omega, \vec{\lambda}) \) and \( (M, \Omega', \vec{\lambda}) \), respectively. Then \( X_f = X_f' \), and thus the Poisson brackets coincide:
\[
\{f_1, f_2\} = \{f_1, f_2\}'.
\]
\end{theorem}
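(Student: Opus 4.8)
The plan is to show that $X_f$, the Hamiltonian vector field of the original structure, already satisfies the two equations that define the Hamiltonian vector field $X_f'$ of the new structure $(M,\Omega',\vec{\lambda})$, and then to invoke uniqueness of the Hamiltonian vector field (equivalently, the fact that the bundle map $b'$ induced by $\Omega' + \sum_i \lambda_i\otimes\lambda_i$ is an isomorphism) to conclude $X_f = X_f'$. The crucial preliminary input is that, by the preceding theorem, the Reeb vector fields of $(M,\Omega',\vec{\lambda})$ are $R_i' = R_i + X_{H'}$; hence $X_f'$ is characterized by $i_{X_f'}\Omega' = df - \sum_{i=1}^q (R_i' f)\,\lambda_i$ together with $i_{X_f'}\lambda_i = 0$ for all $i$.

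First I would compute $i_{X_f}\Omega'$. Since $i_{X_f}\lambda_i = 0$ for every $i$, contracting the wedge term gives $i_{X_f}\bigl(dH'\wedge \sum_i \lambda_i\bigr) = (X_f H')\sum_i \lambda_i$, so that $i_{X_f}\Omega' = \bigl(df - \sum_i (R_i f)\lambda_i\bigr) + (X_f H')\sum_i \lambda_i$. Next I would rewrite the right-hand side of the defining equation for $X_f'$: using $R_i' f = R_i f + X_{H'} f$ together with the antisymmetry identity $X_{H'} f = \{f,H'\} = -\{H',f\} = -\,X_f H'$ (which follows from $\{f,g\}=X_g f$), I obtain $\sum_i (R_i' f)\lambda_i = \sum_i (R_i f)\lambda_i - (X_f H')\sum_i \lambda_i$. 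Comparing the two expressions shows $i_{X_f}\Omega' = df - \sum_i (R_i' f)\lambda_i$. Since also $i_{X_f}\lambda_i = 0$, the vector field $X_f$ meets both conditions uniquely determining $X_f'$, so $X_f = X_f'$.

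For the equality of Poisson brackets, I would write $\{f_1,f_2\}' = \Omega'(X_{f_1}',X_{f_2}') = \Omega'(X_{f_1},X_{f_2})$ using the identity just proved, and then expand $\Omega' = \Omega + dH'\wedge \sum_i \lambda_i$. The wedge term evaluated on the pair $(X_{f_1},X_{f_2})$ is a sum of products each carrying a factor $\lambda_i(X_{f_j}) = 0$, so it vanishes, leaving $\{f_1,f_2\}' = \Omega(X_{f_1},X_{f_2}) = \{f_1,f_2\}$.

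I expect the main obstacle to be the careful bookkeeping of conventions rather than any deep difficulty: one must use the correct Reeb vector fields $R_i'$ in the definition of $X_f'$ and track the sign in $X_{H'}f = -\,X_f H'$ precisely, since an error there would destroy the exact cancellation of the two $(X_f H')\sum_i\lambda_i$ terms on which the whole argument hinges. Once that cancellation is verified, the remainder is a routine contraction computation relying only on $i_{X_f}\lambda_i = 0$.
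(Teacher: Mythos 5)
Your proof is correct and takes essentially the same route as the paper's: both arguments rest on $\lambda_i(X_f)=\lambda_i(X_f')=0$, the relation $R_i'=R_i+X_{H'}$ from the preceding theorem, and antisymmetry of the Poisson bracket to produce the cancellation of the $(X_f H')\sum_i\lambda_i$ term, concluding by non-degeneracy (injectivity of the bundle isomorphism) that the two Hamiltonian vector fields coincide. The only difference is cosmetic direction: the paper contracts $X_f'$ into $\Omega'$ and reduces to the unprimed defining equation, whereas you contract $X_f$ into $\Omega'$ and verify the primed defining equation, and for the bracket equality you expand $\Omega'(X_{f_1},X_{f_2})$ directly while the paper invokes $\{f_1,f_2\}=X_{f_2}(f_1)$ --- both are immediate once $X_f=X_f'$ is known.
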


\begin{proof}
By definition, the Hamiltonian vector fields \( X_f \) and \( X_f' \) satisfy:
\begin{align}
i_{X_f}\Omega &= df - \left( \sum_{i=1}^q R_i(f)\lambda_i \right), \label{1} \\
i_{X_f'}(\Omega + dH' \wedge \sum_{i=1}^q \lambda_i) &= i_{X_f'}\Omega + X_f'(H') \sum_{i=1}^q \lambda_i - \sum_{i=1}^q \lambda_i(X_f') dH' \notag \\
&= df - \sum_{i=1}^q R_i'(f) \lambda_i, \label{2} \\
\lambda_i(X_f) &= \lambda_i(X_f') = 0, \quad i=1,\dots,q. \label{3}
\end{align}

From equation \eqref{2}, we deduce:
\begin{align*}
i_{X_f'}\Omega &= df - \sum_{i=1}^q \left(X_f'(H') + R_i'(f)\right)\lambda_i 
= df - \sum_{i=1}^q \left(-X_{H'}'(f) + R_i(f)\right)\lambda_i 
= df - \sum_{i=1}^q R_i(f)\lambda_i 
= i_{X_f}\Omega.
\end{align*}

Together with equation \eqref{3}, we conclude that \( X_f = X_f' \). Therefore,
\[
\{f_1, f_2\} = X_{f_2}(f_1) = X_{f_2}'(f_1) = \{f_1, f_2\}'.
\]
\end{proof}

For the \( q \)-evolution vector field on the newly constructed \( q \)-cosymplectic manifold, we can now formulate the following integrability theorem.

\begin{theorem}
Let \( (M, \Omega, \vec{\lambda}) \) be a \( 2n + q \)-dimensional \( q \)-cosymplectic manifold. Assume that:

\begin{enumerate}
\item \( H' \in C^\infty(M) \) is a first integral of each Reeb vector field \( R_i, i = 1, \dots, q \).

\item The \( q \)-evolution vector field \( E_H = \sum_{i=1}^q R_i + X_H \) has \( m \) independent integrals \( f_1, \dots, f_m \), such that
\begin{align}\label{fi}
\{H', f_k\} = 0, \quad k = 1, \dots, q, \qquad
\{f_i, f_j\} = 0, \quad i = 1, \dots, r,\quad j = 1, \dots, m, \quad r < m,
\end{align}
and \( 2n + q - 1 = m + r \).

\item The map \( F := (H, f_{r+1}, \dots, f_m): M \rightarrow \mathbb{R}^{m - r + 1} \) is a submersion with compact and connected fibers:
\[
M_{\mathbf{c}} = \{ x \in M \;|\; H(x) = c_1,\; f_{r+1}(x) = c_2, \dots, f_m(x) = c_{m - r + 1} \}, \quad \mathbf{c} = (c_1, \dots, c_{m - r + 1}) \in \mathbb{R}^{m - r + 1}.
\]
\end{enumerate}

Then, each fiber \( M_{\mathbf{c}} \) is diffeomorphic to \( \mathbb{T}^{m - r + 1} \), and the \( q \)-evolution vector field \( E_H' := \sum_{i=1}^q R_i' + X_H' \), defined on the \( q \)-cosymplectic manifold \( (M, \Omega', \vec{\lambda}) \), is conjugate to a constant vector field on \( \mathbb{T}^{m - r + 1} \).
\end{theorem}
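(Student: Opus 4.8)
The plan is to run the same non-commutative integrability argument as in Theorem~\ref{keji}, but transported to the deformed manifold $(M,\Omega',\vec\lambda)$ furnished by the preceding theorem, and then to apply Theorem~\ref{T3} to the submersion $F=(H,f_{r+1},\dots,f_m)$. The first step is to express $E_H'$ entirely through the original structure. By Theorem~\ref{Gou} one has $X_H'=X_H$ and $\{\cdot,\cdot\}'=\{\cdot,\cdot\}$, while the previous theorem gives the new Reeb fields $R_i'=R_i+X_{H'}$. Since the assignment $f\mapsto X_f$ is $\mathbb{R}$-linear, summing over $i$ yields
\[
E_H'=\sum_{i=1}^q R_i'+X_H'=\sum_{i=1}^q R_i+X_{H+qH'},
\]
so that $E_H'$ is nothing but the original $q$-evolution field of the Hamiltonian $H+qH'$. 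Consequently every bracket and Lie derivative below can be computed on $(M,\Omega,\vec\lambda)$.

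Next I would assemble the commuting family $E_H',X_{f_1},\dots,X_{f_r}$. The identity $X_{\{f,g\}}=-[X_f,X_g]$ from Theorem~\ref{D2}, together with the involutivity in \eqref{fi}, gives $[X_{f_i},X_{f_j}]=0$; and, exactly as in Theorem~\ref{keji}, \eqref{Ri} yields $[E_H',X_{f_i}]=X_{E_H'(f_i)}$. To see that each $f_i$ is a first integral of $E_H'$, the displayed identity gives
\[
E_H'(f_i)=E_H(f_i)+q\,X_{H'}(f_i)=0+q\{f_i,H'\}=0,
\]
by \eqref{fi}; hence $[E_H',X_{f_i}]=0$ and the whole family commutes. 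The same computation applied to each $f_j$, $j=r+1,\dots,m$, shows that $E_H'$ preserves these functions, while $X_{f_i}(f_j)=\{f_j,f_i\}=0$, so every member of the family is tangent to the common level sets of $f_{r+1},\dots,f_m$.

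The genuinely new ingredient, which I expect to be the main obstacle, is that the deformation must promote $H$ from a non-conserved quantity into a first integral of $E_H'$. On the original structure $E_H(H)=\sum_i R_i(H)\neq 0$, so $H$ could not be included in the submersion of Theorem~\ref{keji}; here one must instead verify
\[
E_H'(H)=\sum_{i=1}^q R_i(H)+q\{H,H'\}=0,
\]
which is precisely where the hypotheses $R_i(H')=0$ and the relations $\{H',f_k\}=0$ must be used in full. Granting this, $E_H'$ is tangent to every fiber $M_{\mathbf c}$ of the full submersion $F=(H,f_{r+1},\dots,f_m)$ and commutes with each $X_{f_i}$.

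Finally I would invoke Theorem~\ref{T3}. Part~(i), applied to $F$ with the commuting, pointwise-independent, fiber-tangent family, identifies each compact connected fiber $M_{\mathbf c}$ with a torus of the asserted dimension $\mathbb{T}^{m-r+1}$; here the constraint $2n+q-1=m+r$ is what forces the number of independent commuting fields to match the fiber dimension, and this bookkeeping must be checked carefully. Part~(ii), applied with $X=E_H'$, which annihilates every component of $F$ and commutes with each $X_{f_i}$, then gives that $E_H'$ is conjugate, in every bundle chart of $F$, to a constant vector field on $\mathbb{T}^{m-r+1}$, which completes the proof.
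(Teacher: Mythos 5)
Your strategy is the same as the paper's: identify $X'_f=X_f$ and $\{\cdot,\cdot\}'=\{\cdot,\cdot\}$ via Theorem~\ref{Gou}, write $R_i'=R_i+X_{H'}$, check $[E_H',X'_{f_i}]=0$, $[X'_{f_i},X'_{f_j}]=0$, $E_H'(f_i)=0$, and invoke Theorem~\ref{T3}; your repackaging $E_H'=\sum_{i=1}^q R_i+X_{H+qH'}$ is a clean equivalent of those computations. The problem is the step you yourself call ``the genuinely new ingredient'': you assert that
\[
E_H'(H)=\sum_{i=1}^q R_i(H)+q\{H,H'\}=0
\]
``is precisely where the hypotheses $R_i(H')=0$ and $\{H',f_k\}=0$ must be used in full,'' but you never derive it, and it does not follow from the stated hypotheses. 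Nothing in the theorem links $\sum_i R_i(H)$ to $\{H,H'\}$: for instance $H'\equiv 0$ satisfies every hypothesis imposed on $H'$, yet then $E_H'=E_H$ and $E_H'(H)=\sum_i R_i(H)$, which is exactly the quantity the paper points out (just before Theorem~\ref{keji}) is generically nonzero. So ``Granting this'' conceals a claim that is false in general; the vanishing of $E_H'(H)$ would have to be added as a hypothesis (or $H'$ chosen so that $q\{H,H'\}=-\sum_i R_i(H)$), and without it $E_H'$ is not tangent to $M_{\mathbf{c}}$ and Theorem~\ref{T3} cannot be applied to $F=(H,f_{r+1},\dots,f_m)$.

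There are two further holes. First, Theorem~\ref{T3}(i) needs the \emph{whole} family $E_H',X_{f_1},\dots,X_{f_r}$ tangent to the fibers of $F$, hence also $X_{f_i}(H)=0$; but $X_{f_i}(H)=\{H,f_i\}=-X_H(f_i)=\sum_{j=1}^q R_j(f_i)$ (using $E_H(f_i)=0$), which the hypotheses do not force to vanish, and you never address it. Second, the bookkeeping you promise to ``check carefully'' does not close: with $2n+q-1=m+r$ the fibers of $F$ have dimension $2n+q-(m-r+1)=2r$, while the construction supplies only the $r+1$ commuting fields $E_H',X_{f_1},\dots,X_{f_r}$, so Theorem~\ref{T3} (which requires exactly $\dim M-k$ independent commuting fields) does not apply as stated unless $r=1$. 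In fairness, the paper's own proof is no better on any of these points: it verifies only $E_H'(f_i)=0$ and $X'_{f_i}(f_j)=0$, silently asserts tangency to $M_{\mathbf{c}}$ (whose definition includes the level set of $H$), and never counts the fields. You have therefore correctly located the weak point of the published argument, but your proposal does not repair it, and as written it is not a complete proof.
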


\begin{proof}
Similar to the proof of Theorem~\ref{keji}, and using the conclusion of Theorem~\ref{Gou}, we compute:
\begin{align*}
[E_H', X'_{f_i}] &= \left[ X'_H + \sum_{j=1}^q R'_j, X'_{f_i} \right] = -X'_{\{H, f_i\}'} + \sum_{j=1}^q X'_{R'_j(f_i)} \\
&= -X'_{\{H, f_i\}} + X'_{\sum_{j=1}^q (R_j + X_{H'})(f_i)} \\
&= X'_{X_H(f_i)} + X'_{\sum_{j=1}^q (R_j(f_i) + \{f_i, H'\})} \\
&= X'_{X_H(f_i)} + X'_{\sum_{j=1}^q R_j(f_i)} \\
&= X'_{E_H(f_i)} = 0, \quad i = 1, \dots, r.
\end{align*}

Moreover, we have:
\begin{align*}
[X'_{f_i}, X'_{f_j}] &= -\{f_i, f_j\}' = -\{f_i, f_j\} = 0, \quad i,j = 1, \dots, r, \\
E'_H(f_i) &= \sum_{j=1}^q R'_j(f_i) + X_H'(f_i) 
= \sum_{j=1}^q (R_j + X_{H'})(f_i) + X_H(f_i) \\
&= \sum_{j=1}^q R_j(f_i) + X_H(f_i) = E_H(f_i) = 0.
\end{align*}

Thus, \( f_1, \dots, f_m \) are integrals of \( E_H' \), and \( E_H' \) is tangent to \( M_{\mathbf{c}} \). Moreover, since
\[
X'_{f_i}(f_j) = \{f_j, f_i\}' = \{f_j, f_i\} = 0, \quad i = 1, \dots, r,
\]
we have that \( X'_{f_1}, \dots, X'_{f_r} \) are also tangent to \( M_{\mathbf{c}} \). Then, by Theorem~\ref{T3}, the result follows.
\end{proof}

In the following theorem, we describe how symplectic and \( q \)-cosymplectic manifolds are naturally related.

\begin{theorem}\label{Con}
Let \( M \) be a \( 2n + q \)-dimensional manifold, \( \omega \in \Omega^2(M) \), and \( \eta_i \in \Omega^1(M) \), \( i = 1, \dots, q \). Let \( \operatorname{pr} : \mathbb{R}^q \times M \to M \) be the canonical projection onto \( M \), and let \( (s_1, \dots, s_q) \) be the natural coordinates on \( \mathbb{R}^q \), understood as variables on \( \mathbb{R}^q \times M \) in the standard way. Then, \( (M, \omega, \vec{\eta} = (\eta_1, \dots, \eta_q)) \) is a \( q \)-cosymplectic manifold if and only if \( (\mathbb{R}^q \times M, \widehat{\omega}) \) is a symplectic manifold, where
\[
\widehat{\omega} := \operatorname{pr}^* \omega + \sum_{i=1}^q ds_i \wedge \operatorname{pr}^* \eta_i.
\]
Moreover, \( \operatorname{pr} \) is a Poisson morphism, i.e.,
\[
\{f \circ \operatorname{pr}, k \circ \operatorname{pr} \}_{\widehat{\omega}} = \{f, k\}_{\omega, \vec{\eta}} \circ \operatorname{pr}, \quad \forall f, k \in C^\infty(M),
\]
where \( \{\cdot,\cdot\}_{\widehat{\omega}} \) is the Poisson bracket on the symplectic manifold \( (\mathbb{R}^q \times M, \widehat{\omega}) \), and \( \{\cdot,\cdot\}_{\omega, \vec{\eta}} \) is the Poisson bracket on the \( q \)-cosymplectic manifold \( (M, \omega, \vec{\eta}) \).
\end{theorem}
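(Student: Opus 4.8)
The plan is to read the claim as a \emph{symplectization} statement and to exploit the product structure \( T(\mathbb{R}^q\times M) = \operatorname{span}\{\partial_{s_1},\dots,\partial_{s_q}\}\oplus \operatorname{pr}^*TM \), decomposing forms on \( \mathbb{R}^q\times M \) according to their \( ds \)-degree. First I would settle closedness. A direct computation gives \( d\widehat\omega = \operatorname{pr}^* d\omega - \sum_{i=1}^q ds_i\wedge\operatorname{pr}^* d\eta_i \); since the two summands carry different \( ds \)-degrees and the \( ds_i \) are pointwise independent, \( d\widehat\omega = 0 \) holds iff \( d\omega = 0 \) and \( d\eta_i = 0 \) for every \( i \) (using that \( \operatorname{pr} \) is a surjective submersion, so \( \operatorname{pr}^* \) is injective on forms). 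This matches exactly the closedness demanded by the \( q \)-cosymplectic definition and by the symplectic one, so closedness transfers in both directions with no further work.

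Next I would analyze non-degeneracy pointwise. Writing \( V = \sum_i a_i\partial_{s_i} + w \) with \( w\in T_xM \), one finds \( \widehat\omega(V,V') = \omega(w,w') + \sum_{i=1}^q\left(a_i\,\eta_i(w') - b_i\,\eta_i(w)\right) \) for \( V' = \sum_j b_j\partial_{s_j} + w' \). Hence \( V\in\ker\widehat\omega \) is equivalent to \( \eta_i(w) = 0 \) for all \( i \) (obtained by varying \( b_j \)), i.e. \( w\in\xi := \bigcap_i\ker\eta_i \), together with \( i_w\omega + \sum_i a_i\eta_i = 0 \) (obtained by varying \( w' \)). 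For the direction ``\( q \)-cosymplectic \( \Rightarrow \) symplectic'' this closes at once: pairing \( i_w\omega + \sum_i a_i\eta_i = 0 \) with a Reeb field \( R_j \) and using \( R_j\in\ker\omega \) and \( \eta_i(R_j) = \delta_{ij} \) forces \( a_j = 0 \), whence \( w\in\ker\omega\cap\xi = \mathcal R\cap\xi = 0 \); thus \( \widehat\omega \) is non-degenerate.

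The converse is where the real work lies. From triviality of \( \ker\widehat\omega \) I would first recover pointwise linear independence of the \( \eta_i \): if \( \sum_i c_i\eta_i = 0 \) at a point with \( c\neq 0 \), then \( V = \sum_i c_i\partial_{s_i} \) lies in \( \ker\widehat\omega \), a contradiction; this also gives \( \operatorname{rank}\xi = 2n \). Testing with \( V = w \) for \( w\in\ker\omega\cap\xi \) shows \( \ker\omega\cap\xi = 0 \), and the kernel description then shows that any \( w\in\xi \) annihilated by \( \omega|_\xi \) has \( i_w\omega\in\operatorname{span}\{\eta_1,\dots,\eta_q\} \), so that \( (a,w)\in\ker\widehat\omega \) for a suitable \( a \) and hence \( w=0 \); thus \( \omega|_\xi \) is non-degenerate. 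The main obstacle is the remaining requirement \( \ker\omega = \mathcal R \) with \( \operatorname{rank}\mathcal R = q \): non-degeneracy of \( \widehat\omega \) controls \( \omega \) only on \( \xi \) and not on its \( \omega \)-orthogonal complement, so to produce a rank-\( q \) Reeb distribution and the splitting \( TM = \mathcal R\oplus\xi \) one must still argue that \( \omega \) has constant rank \( 2n \). This is the delicate step where the hypotheses must be used most carefully; once \( \ker\omega = \mathcal R \) is in hand, the Reeb fields \( R_i \) are the unique sections of \( \ker\omega \) dual to the \( \eta_i \), and conditions (i)--(iv) follow.

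Finally, for the ``moreover'' statement I would compute the Hamiltonian vector field on \( (\mathbb{R}^q\times M,\widehat\omega) \) of a pulled-back function. Setting \( \tilde f := f\circ\operatorname{pr} \), I expect \( X_{\tilde f} = \sum_{i=1}^q R_i(f)\,\partial_{s_i} + X_f \), verified by checking \( i_{X_{\tilde f}}\widehat\omega = \operatorname{pr}^* df \) against the bilinear formula above, where \( X_f \) is the \( q \)-cosymplectic Hamiltonian field with \( i_{X_f}\omega = df - \sum_i R_i(f)\eta_i \) and \( \eta_i(X_f)=0 \). Evaluating \( \widehat\omega(X_{\tilde f}, X_{\tilde k}) \) by the same formula, the cross terms drop out because \( \eta_i(X_f)=\eta_i(X_k)=0 \), leaving \( \omega(X_f,X_k) = \{f,k\}_{\omega,\vec\eta} \), which as a function on \( \mathbb{R}^q\times M \) is precisely \( \{f,k\}_{\omega,\vec\eta}\circ\operatorname{pr} \); this establishes that \( \operatorname{pr} \) is a Poisson morphism.
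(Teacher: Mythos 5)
Your closedness argument, your proof of the implication ``$q$-cosymplectic $\Rightarrow$ symplectic'' (via the kernel of $\widehat\omega$ and contraction with the Reeb fields), and your proof of the Poisson-morphism claim (via the lift formula $X_{f\circ\operatorname{pr}}=\sum_{i=1}^q R_i(f)\,\partial_{s_i}+X_f$) are all correct; the last of these is actually more careful than the paper's own computation, which uses the same lift tacitly. The genuine gap is the one you flagged yourself in the converse direction: from $\ker\widehat\omega=0$ you obtain pointwise independence of the $\eta_i$, non-degeneracy of $\omega|_\xi$, and $\ker\omega\cap\xi=0$, but you never produce a rank-$q$ kernel $\ker\omega=\mathcal{R}$, postponing this as ``the delicate step.'' That step is not delicate but impossible: for $q\ge 2$ the converse is false under Definition~2.1. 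Take $M=\mathbb{R}^2$ with coordinates $(x,y)$, $n=0$, $q=2$, $\omega=dx\wedge dy$, $\eta_1=dx$, $\eta_2=dy$. Then $\widehat\omega=dx\wedge dy+ds_1\wedge dx+ds_2\wedge dy$ is closed and $\widehat\omega^{\,2}=2\,ds_1\wedge dx\wedge ds_2\wedge dy\neq 0$, so $(\mathbb{R}^2\times M,\widehat\omega)$ is symplectic; but $\ker\omega=\{0\}$, whereas Definition~2.1 forces $\ker\omega=\mathcal{R}$ to have rank $2$, so $(M,\omega,\eta_1,\eta_2)$ is not $2$-cosymplectic. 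Your own kernel description identifies exactly what non-degeneracy of $\widehat\omega$ encodes: the $\eta_i$ are pointwise independent and $\omega|_\xi$ is non-degenerate (equivalently, $\omega^n\wedge\eta_1\wedge\cdots\wedge\eta_q\neq 0$); for $q\ge 2$ this leaves the rank of $\omega$ transverse to $\xi$ completely unconstrained, which is precisely where both your argument and the stated theorem break down. (For $q=1$ there is no problem: evenness of $\operatorname{rank}\omega$, $\operatorname{rank}\omega\ge 2n$, and $\dim M=2n+1$ force $\operatorname{rank}\omega=2n$, after which Definition~2.1 follows.)

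For comparison, the paper's proof never engages Definition~2.1 in the converse direction at all: it shows that $d\widehat\omega=0$ yields $d\omega=0$ and $d\eta_i=0$, that non-degeneracy of $\widehat\omega$ yields $\omega^n\wedge\eta_1\wedge\cdots\wedge\eta_q\neq 0$ (its expansion \eqref{ome} is a valid identity, since the omitted terms vanish by a degree count on the $M$-factor), and then simply declares the result to be $q$-cosymplectic. In other words, the paper tacitly substitutes the volume-form condition for Definition~2.1; the two agree for $q=1$ but not for $q\ge 2$, as the example above shows. So your proposal is incomplete as a proof of the theorem as stated, and no completion exists for $q\ge 2$. If the statement is reformulated with ``$q$-cosymplectic'' meaning ``$\omega$ and the $\eta_i$ closed and $\omega^n\wedge\eta_1\wedge\cdots\wedge\eta_q$ a volume form,'' then your kernel analysis, closed off by the observation that the annihilator of $\xi$ is $\operatorname{span}\{\eta_1,\dots,\eta_q\}$, gives a complete proof, and a more transparent one than the paper's top-power computation.
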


\begin{proof}
Since \( (M, \omega, \vec{\eta}) \) is a \( q \)-cosymplectic manifold and \( \dim M = 2n + q \), the form \( \omega^n \wedge \eta_1 \wedge \cdots \wedge \eta_q \) is a volume form. Also, \( \omega \in \Omega^2(M) \) and each \( \eta_i \in \Omega^1(M) \) is closed, hence
\begin{align}\label{dome}
d\widehat{\omega} = d\left( \operatorname{pr}^* \omega + \sum_{i=1}^q ds_i \wedge \operatorname{pr}^* \eta_i \right) 
= \operatorname{pr}^* d\omega - \sum_{i=1}^q ds_i \wedge \operatorname{pr}^* d\eta_i = 0,
\end{align}
so \( \widehat{\omega} \) is closed.

Since for any differential 2-form \( \theta \) on a manifold, \( \theta^{n+q} = 0 \) unless it is non-degenerate of rank \( 2(n+q) \), we compute:
\begin{align}\label{ome}
\widehat{\omega}^{n+q} 
&= \left( \operatorname{pr}^* \omega + \sum_{i=1}^q ds_i \wedge \operatorname{pr}^* \eta_i \right)^{n+q} \nonumber \\
&= \binom{n+q}{n} \cdot q! \cdot (\operatorname{pr}^* \omega)^n \wedge ds_1 \wedge \operatorname{pr}^* \eta_1 \wedge \cdots \wedge ds_q \wedge \operatorname{pr}^* \eta_q \nonumber \\
&= \binom{n+q}{n} \cdot q! \cdot (-1)^{\frac{q(q-1)}{2}} \cdot ds_1 \wedge \cdots \wedge ds_q \wedge \operatorname{pr}^* (\omega^n \wedge \eta_1 \wedge \cdots \wedge \eta_q),
\end{align}
which is clearly a volume form on \( \mathbb{R}^q \times M \). Thus, \( \widehat{\omega} \) is non-degenerate, and \( (\mathbb{R}^q \times M, \widehat{\omega}) \) is a symplectic manifold.

	Conversely, if \( (\mathbb{R}^q \times M, \widehat{\omega}) \) is a symplectic manifold, then equation~\eqref{ome} implies that
\[
\omega^n \wedge \eta_1 \wedge \cdots \wedge \eta_q \neq 0.
\]
Moreover, from equation~\eqref{dome}, we have that \( \operatorname{pr}^*\omega \) and \( \operatorname{pr}^* \eta_i \), \( i = 1, \dots, q \), are closed forms. Since \( \operatorname{pr} \) is a surjective submersion, it follows that \( d\omega = 0 \) and \( d\eta_i = 0 \), for all \( i = 1, \dots, q \). Therefore, \( (M, \omega, \vec{\eta} = (\eta_1, \dots, \eta_q)) \) is a \( q \)-cosymplectic manifold.

Moreover, since \( \{\cdot, \cdot\}_{\widehat{\omega}} \) is the Poisson bracket induced by the symplectic form \( \widehat{\omega} \), we have
\begin{align*}
\operatorname{pr}^* \{f, k\}_{\omega, \vec{\eta}} 
&= \operatorname{pr}^*\left( i_{X_k} i_{X_f} \omega \right)
= i_{X_{\operatorname{pr}^*k}} i_{X_{\operatorname{pr}^*f}} \left( \operatorname{pr}^* \omega \right) \\
&= i_{X_{\operatorname{pr}^*k}} d(\operatorname{pr}^* f) 
= \{ \operatorname{pr}^* f, \operatorname{pr}^* k \}_{\widehat{\omega}},
\end{align*}
where \( X_{\operatorname{pr}^* k} \) denotes the Hamiltonian vector field on \( (\mathbb{R}^q \times M, \widehat{\omega}) \) associated to the function \( \operatorname{pr}^* k \in C^\infty(\mathbb{R}^q \times M) \). 

This proves that \( \operatorname{pr} \) is a Poisson morphism.

\end{proof}
As is well known, on symplectic manifold $(M,\omega)$, Hamiltonian systems are equations of the form $\dot z=X_f(z),$ where $f$ is a smooth function on $M$ and $X_f$ is a special vector field on $M$ called the Hamiltonian vector field (defined by the formula $i_{X_f}\omega=df$). 
\begin{corollary}
	If $(M,\omega,\vec{\eta}=(\eta_1,...,\eta_q))$ described as in Theorem \ref{Con} is a $q$-cosymplectic manifold, then natural extension to $\mathbb R^q\times M$ of its Reeb vector field $R_i,i=1,...,q$ are Hamiltonian vector fields with respect to $\hat\omega=pr^{*}\omega+\sum_{i=1}^qds_i\wedge pr^*\eta_i.$
\end{corollary}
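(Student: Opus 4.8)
The plan is to identify the natural extension of $R_i$ explicitly and then simply evaluate the interior product $i_{\widetilde{R}_i}\widehat{\omega}$, showing it is an exact one-form. Concretely, the natural extension $\widetilde{R}_i$ of the Reeb field $R_i$ to the product $\mathbb{R}^q\times M$ is the unique vector field that is $\operatorname{pr}$-related to $R_i$ and has vanishing $\mathbb{R}^q$-component; equivalently, $T\operatorname{pr}(\widetilde{R}_i)=R_i$ and $\widetilde{R}_i(s_j)=0$ for all $j=1,\dots,q$.

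First I would record the two defining properties of the Reeb fields that I will need: by condition (iv) of the $q$-cosymplectic structure we have $i_{R_i}\omega=0$, and by condition (ii) we have $\eta_j(R_i)=\delta_{ij}$. Pulling these back along $\operatorname{pr}$ and using $i_{\widetilde{R}_i}\operatorname{pr}^*\alpha=\operatorname{pr}^*(i_{R_i}\alpha)$ for any form $\alpha$ on $M$, the first identity gives $i_{\widetilde{R}_i}\operatorname{pr}^*\omega=0$, while the second gives $i_{\widetilde{R}_i}\operatorname{pr}^*\eta_j=\operatorname{pr}^*\delta_{ij}=\delta_{ij}$.

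Next I would expand $i_{\widetilde{R}_i}\widehat{\omega}$ term by term using the antiderivation property of the interior product. The $\operatorname{pr}^*\omega$ term drops out by the first identity above. For each summand $ds_j\wedge\operatorname{pr}^*\eta_j$, the Leibniz rule gives $i_{\widetilde{R}_i}(ds_j\wedge\operatorname{pr}^*\eta_j)=(i_{\widetilde{R}_i}ds_j)\,\operatorname{pr}^*\eta_j-ds_j\,(i_{\widetilde{R}_i}\operatorname{pr}^*\eta_j)$; here $i_{\widetilde{R}_i}ds_j=\widetilde{R}_i(s_j)=0$ and $i_{\widetilde{R}_i}\operatorname{pr}^*\eta_j=\delta_{ij}$, so only the $j=i$ term survives and contributes $-ds_i$. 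Summing over $j$ yields
\[
i_{\widetilde{R}_i}\widehat{\omega}=-ds_i=d(-s_i).
\]
Since $-ds_i$ is exact, this exhibits $\widetilde{R}_i$ as the Hamiltonian vector field on $(\mathbb{R}^q\times M,\widehat{\omega})$ associated to the Hamiltonian function $-s_i$ (the sign depending on the convention $i_{X_h}\widehat{\omega}=dh$), which is the desired conclusion.

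There is essentially no serious obstacle here: the computation is a direct application of the interior-product Leibniz rule together with the two defining relations of the Reeb fields. The only point requiring care is the bookkeeping of signs in the wedge/interior-product identity and the fixing of a consistent sign convention for the Hamiltonian vector field, so that the constructed function $\pm s_i$ matches the convention used elsewhere in the paper.
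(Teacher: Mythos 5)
Your proposal is correct and follows essentially the same route as the paper: a direct computation of $i_{\widetilde{R}_i}\widehat{\omega}$ using $i_{R_i}\omega=0$, $\eta_j(R_i)=\delta_{ij}$, and the vanishing of the $\mathbb{R}^q$-component of the extension. In fact your sign bookkeeping is more careful than the paper's, which states $i_{R_i}\widehat{\omega}=ds_i$ where the standard antiderivation rule gives $-ds_i$ as you found; this only shifts the Hamiltonian function to $-s_i$ under the convention $i_{X_f}\widehat{\omega}=df$ and does not affect the conclusion.
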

\begin{proof}
	Here, we still use $R_i,i=1,...,q$ to denote the natural extension of Reeb vector fields.   Now we can calculate that 
	$$i_{R_i}\hat\omega=i_{R_i}\left(pr^{*}\omega+\sum_{i=1}^qds_i\wedge pr^*\eta_i\right)=ds_i,\quad i=1,...,q,$$
	thus $R_i,i=1,...,q$ are Hamiltonian vector fields with Hamiltonian function $s_i,i=1,...,q.$
\end{proof}

\section{Marsden–Weinstein reduction theorem On $q$-Cosymplectic Manifold}
In this section, we will deal with smooth action $\Phi:G\times M\rightarrow M$ of a Lie group $G$ on a $q$-cosymplectic manifold $(M,\Omega,\vec{\lambda}).$ It will always be tacitly assumed that both $G$ and $M$ are connected. The Lie algebra of $G$ will be denoted by $\mathfrak{g}$ and its dual by $\mathfrak{g}^*.$ For each $g\in G,$ we put $\Phi_g=\Phi(g,\cdot),$ the induced transformation on $M.$ The fundamental vector field, or infinitesimal generator, associated with $\xi\in\mathfrak{g}$ is the vector field $\xi_M$ on $M$ defined by
\begin{align*}
	\xi_M(x)=\frac{d}{dt}\Phi(\exp(t\xi),x)\big|_{t=0}.
\end{align*}
An action $\Phi$ of a Lie group $G$ on a $q$-cosymplectic manifold $(M,\Omega,\vec{\lambda})$ is called an \emph{automorphism action}, or $G$ is said to be an automorphism group of $(M,\Omega,\vec{\lambda})$, if for each $g\in G$ the corresponding $\Phi_g$ is an automorphism of the $q$-cosymplectic structure, i.e.,
\[
\Phi^*_g\Omega=\Omega,\quad \Phi^*_g\lambda_i=\lambda_i,\quad i=1,...,q.
\]
It then follows that the fundamental vector fields are infinitesimal automorphisms, i.e.,
\[
L_{\xi_M}\Omega=0,\quad L_{\xi_M}\lambda_i=0,\quad i=1,...,q,
\]
for each $\xi\in\mathfrak{g}.$ An automorphism action $\Phi$ will be called a \emph{restricted Hamiltonian action} if furthermore, for each $\xi\in \mathfrak{g}$, the associated fundamental vector field $\xi_M$ is a Hamiltonian vector field with Hamiltonian function $J_\xi$ and  $\xi_M\in\xi,$ i.e.,
\begin{align}\label{Lie}
	i_{\xi_M}\Omega=dJ_\xi,\quad i_{\xi_M}\lambda_i=0,\quad i=1,...,q.
\end{align}
With the restricted Hamiltonian action, one can associate a momentum map $J^\Phi:M\rightarrow\mathfrak{g}^*$ defined in the usual way by 
$$<\xi,J^\Phi(x)>=J_\xi(x)$$
for all $\xi\in\mathfrak{g}$.
\begin{remark}
It is clear that  $R_j(J_\xi)=0,j=1,...,q,$ since
$$0=i_{R_j}i_{\xi_M}\Omega=i_{R_j}(dJ_\xi)=R_j(J_\xi),\quad j=1,...,q.$$
\end{remark}
\begin{remark}
	It is easy to see that if $J_\xi$ is a Hamiltonian function of $\xi_M,$ then $J_\xi+c$ is also a Hamiltonian function of $\xi_M,$ where $c$ is a constant. Thus, the non-uniqueness of the Hamiltonian $J_\xi$ is reflected in the non-uniqueness of the momentum map.
\end{remark}

\begin{definition}
	A momentum map $J^\Phi:M\rightarrow\mathfrak{g}^*$ is Ad$^*$-equivariant if 
	\[
	J^\Phi\circ\Phi_g=Ad^*_{g^{-1}}\circ J^\Phi,\quad \forall g\in G.
	\]
	In other words, the following diagram commutes:
	\[
	\begin{tikzcd}
		M \arrow[r, "J^\Phi"] \arrow[d, "\Phi_g"'] & \mathfrak{g}^* \arrow[d, "Ad^*_{g^{-1}}"] \\
		M \arrow[r, "J^\Phi"] & \mathfrak{g}^*
	\end{tikzcd}
	\]
	for every $g\in G$, where $Ad^*_{g^{-1}}$ denotes the dual (transpose) of $Ad_{g^{-1}}$.
\end{definition}

To simplify the notation, the $q$-cosymplectic manifold $(M,\Omega,\vec{\lambda})$ will be frequently denoted by $M_{\vec{\lambda}}^\Omega$.

\begin{definition}
	A triple $(M_{\vec{\lambda}}^\Omega,f,J^\Phi)$ is called a \emph{$G$-invariant $q$-cosymplectic Hamiltonian system} if it consists of a $q$-cosymplectic manifold $(M,\Omega,\vec{\lambda})$, an associated restricted Hamiltonian action $\Phi:G\times M\rightarrow M$ such that $\Phi_g^*f = f$ for every $g\in G$, and a momentum map $J^\Phi$ related to $\Phi$.
\end{definition}

As we know, not every momentum map is Ad$^*$-equivariant. We now develop the theory of non-Ad$^*$-equivariant momentum maps, similar to the theory developed by Alber \cite{Albert} for cosymplectic manifolds. In this paper, we will always assume that the manifold $M$ is connected unless otherwise stated.

\begin{proposition}
	Let $(M_{\vec{\lambda}}^\Omega,f,J^\Phi)$ be a $G$-invariant $q$-cosymplectic Hamiltonian system and define the functions on $M$ of the form
	\[
		\psi_{g,\xi}(x) := J_{\xi}(\Phi_g(x)) - J_{Ad_{g^{-1}}\xi}(x), \quad \forall g\in G,\;\forall \xi\in\mathfrak{g}.
	\]
	Then, $\psi_{g,\xi}$ is constant on $M$ for every $g\in G$ and $\xi\in\mathfrak{g}$. Moreover, the map $\sigma:G\rightarrow \mathfrak{g}^*$ defined by
	\[
		\langle \sigma(g), \xi \rangle := \psi_{g,\xi}
	\]
	satisfies the cocycle condition
	\begin{align}\label{gg}
		\sigma(gg') = \sigma(g) + Ad^*_{g^{-1}}\sigma(g'),\quad \forall g,g'\in G.
	\end{align}
\end{proposition}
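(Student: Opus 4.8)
The plan is to prove constancy of each $\psi_{g,\xi}$ first (this is the analytic heart of the statement), then deduce the cocycle identity by a purely algebraic manipulation after re-encoding $\psi_{g,\xi}$ through the momentum map $J^\Phi$. For the first part I would fix $g\in G$ and $\xi\in\mathfrak{g}$ and show that $d\psi_{g,\xi}=0$; since $M$ is assumed connected, a closed $0$-form is then constant. Differentiating the two terms separately,
$$d\bigl(J_\xi\circ\Phi_g\bigr) = \Phi_g^*\,dJ_\xi = \Phi_g^*\bigl(i_{\xi_M}\Omega\bigr),\qquad dJ_{Ad_{g^{-1}}\xi} = i_{(Ad_{g^{-1}}\xi)_M}\Omega,$$
where I use the defining relation $dJ_\zeta = i_{\zeta_M}\Omega$ of the restricted Hamiltonian action (valid for every $\zeta\in\mathfrak{g}$, in particular for $\zeta = Ad_{g^{-1}}\xi$) and the linearity of $\zeta\mapsto J_\zeta$. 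The key input is the standard equivariance of fundamental vector fields, $\Phi_g^*\xi_M = (Ad_{g^{-1}}\xi)_M$, which follows from $\Phi_g\circ\Phi_{\exp(t\xi)} = \Phi_{\exp(t\,Ad_g\xi)}\circ\Phi_g$ by differentiating at $t=0$. Combining this with the naturality of contraction under diffeomorphisms, $\Phi_g^*(i_{\xi_M}\Omega) = i_{\Phi_g^*\xi_M}(\Phi_g^*\Omega)$, and the automorphism property $\Phi_g^*\Omega = \Omega$, I obtain $\Phi_g^*(i_{\xi_M}\Omega) = i_{(Ad_{g^{-1}}\xi)_M}\Omega = dJ_{Ad_{g^{-1}}\xi}$, so the two differentials cancel and $d\psi_{g,\xi}=0$.

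For the cocycle condition I would first rewrite the (now constant) $\psi_{g,\xi}$ intrinsically. Using $J_\xi = \langle \xi, J^\Phi\rangle$ and $\langle Ad_{g^{-1}}\xi, J^\Phi\rangle = \langle \xi, Ad^*_{g^{-1}}J^\Phi\rangle$, the defining formula becomes
$$\sigma(g) = J^\Phi\circ\Phi_g - Ad^*_{g^{-1}}\circ J^\Phi,$$
an identity of $\mathfrak{g}^*$-valued maps whose right-hand side is, by the first part, independent of the base point. Evaluating this identity at the point $\Phi_{g'}(x)$ and using the left-action law $\Phi_g\circ\Phi_{g'} = \Phi_{gg'}$ gives $J^\Phi(\Phi_{gg'}x) = \sigma(g) + Ad^*_{g^{-1}}J^\Phi(\Phi_{g'}x)$. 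Substituting the analogous relation $J^\Phi(\Phi_{g'}x) = \sigma(g') + Ad^*_{(g')^{-1}}J^\Phi(x)$ and invoking the composition rule $Ad^*_{g^{-1}}Ad^*_{(g')^{-1}} = Ad^*_{(gg')^{-1}}$, I arrive, after cancelling the common term $Ad^*_{(gg')^{-1}}J^\Phi(x)$, at $\sigma(gg') = \sigma(g) + Ad^*_{g^{-1}}\sigma(g')$.

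The genuinely delicate points are bookkeeping rather than conceptual. The hard part will be getting the direction of the adjoint action right in $\Phi_g^*\xi_M = (Ad_{g^{-1}}\xi)_M$, keeping track of the order reversal under dualization (so that $Ad^*_{g^{-1}}Ad^*_{(g')^{-1}} = Ad^*_{(gg')^{-1}}$, and not its inverse), and consistently respecting $\Phi_g\circ\Phi_{g'}=\Phi_{gg'}$. Apart from these sign- and order-conventions the argument is the classical Souriau-cocycle computation; it is worth emphasizing that none of the cosymplectic one-forms $\lambda_i$ enter, since only the invariance of $\Omega$ and the momentum-map relation are used, so the proof is formally identical to the symplectic case.
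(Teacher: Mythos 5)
Your proposal is correct and follows essentially the same route as the paper: both prove $d\psi_{g,\xi}=0$ via the equivariance relation $\Phi_g^*\xi_M=(Ad_{g^{-1}}\xi)_M$ together with $\Phi_g^*\Omega=\Omega$ and the momentum-map identity $dJ_\zeta=i_{\zeta_M}\Omega$ (then constancy from connectedness of $M$), and both derive the cocycle identity by rewriting $\sigma(g)=J^\Phi\circ\Phi_g-Ad^*_{g^{-1}}\circ J^\Phi$ and exploiting its base-point independence under composition with $\Phi_{g'}$. The only cosmetic difference is that you evaluate the identity at points $\Phi_{g'}(x)$ and cancel, while the paper adds and subtracts $Ad^*_{g^{-1}}(J^\Phi\circ\Phi_{g'})$ at the level of maps; these are the same computation.
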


\begin{proof}
	In fact, using the relations $(Ad_g\xi)_M = \Phi_{g*}\xi_M$ and $\Phi_g^*\Omega = \Omega$, we compute for all $g\in G$, $\xi\in\mathfrak{g}$:
	\begin{align*}
		d\psi_{g,\xi} &= d(J_\xi\circ\Phi_g) - dJ_{Ad_{g^{-1}}\xi} 
		= \Phi_g^* dJ_\xi - i_{(Ad_{g^{-1}}\xi)_M} \Omega \\
		&= \Phi_g^*(i_{\xi_M}\Omega) - i_{\Phi_{g^{-1}*}\xi_M}\Omega 
		= i_{\Phi_{g^{-1}*}\xi_M}\Omega - i_{\Phi_{g^{-1}*}\xi_M}\Omega = 0.
	\end{align*}
	Hence, $\psi_{g,\xi}$ is constant on $M$ for all $g\in G$ and $\xi\in\mathfrak{g}$.

	To express all $\psi_{g,\xi}$ simultaneously, observe:
	\begin{align*}
		\psi_{g,\xi}(x) &= J_{\xi}(\Phi_g(x)) - J_{Ad_{g^{-1}}\xi}(x) 
		= \langle J^\Phi(\Phi_g(x)), \xi \rangle - \langle J^\Phi(x), Ad_{g^{-1}}\xi \rangle \\
		&= \langle J^\Phi(\Phi_g(x)) - Ad_{g^{-1}}^*J^\Phi(x), \xi \rangle.
	\end{align*}
	Thus, we define the map:
	\begin{align}\label{sigma}
		\sigma(g) := J^\Phi \circ \Phi_g - Ad^*_{g^{-1}} \circ J^\Phi,
	\end{align}
	which is constant on $M$ for each $g\in G$.

	Now compute:
	\begin{align*}
		\sigma(gg') &= J^\Phi\circ\Phi_{gg'} - Ad^*_{(gg')^{-1}} \circ J^\Phi 
		= J^\Phi\circ\Phi_g\circ\Phi_{g'} - Ad^*_{g^{-1}} Ad^*_{g'^{-1}} J^\Phi \\
		&= \big(J^\Phi\circ\Phi_g - Ad^*_{g^{-1}}J^\Phi\big) + Ad^*_{g^{-1}}\big(J^\Phi\circ\Phi_{g'} - Ad^*_{g'^{-1}}J^\Phi\big) \\
		&= \sigma(g) + Ad^*_{g^{-1}} \sigma(g'),
	\end{align*}
	which proves the cocycle condition \eqref{gg}.
\end{proof}

 As in the symplectic case, the map \eqref{sigma} is called the \emph{co-adjoint cocycle} associated with the $q$-cosymplectic momentum map $J^\Phi$ on $M$. Again as in the symplectic case, $J^\Phi$ is an Ad$^*$-equivariant momentum map if and only if $\sigma = 0$. Roughly speaking, $\sigma$ measures the \emph{lack} of Ad$^*$-equivariance of a $q$-cosymplectic momentum map.

A map $\sigma:G \to \mathfrak{g}^*$ is a \emph{coboundary} if there exists $\mu \in \mathfrak{g}^*$ such that
\begin{align}
	\sigma(g) = \mu - Ad^*_{g^{-1}}\mu, \quad \forall g \in G.
\end{align}
Every coboundary satisfies the cocycle condition \eqref{gg}, since
\begin{align*}
	\sigma(gg') &= \mu - Ad^*_{(gg')^{-1}}\mu = \mu - Ad^*_{g^{-1}} Ad^*_{g'^{-1}} \mu \\
	&= \mu - Ad^*_{g^{-1}} \mu + Ad^*_{g^{-1}} \mu - Ad^*_{g^{-1}} Ad^*_{g'^{-1}} \mu \\
	&= \sigma(g) + Ad^*_{g^{-1}} \sigma(g').
\end{align*}

The space of co-adjoint cocycles admits an equivalence relation, whose equivalence classes are called \emph{cohomology classes}, defined by declaring that two co-adjoint cocycles belong to the same class if their difference is a coboundary. The following proposition shows that every $q$-cosymplectic action admitting a $q$-cosymplectic momentum map induces a well-defined cohomology class $[\sigma]$. Note that these results could be straightforwardly adapted to the symplectic case.

\begin{proposition}\label{PP}
	Let $\Phi: G \times M \to M$ be a $q$-cosymplectic Lie group action relative to $(M_{\vec{\lambda}}^\Omega, f, J^\Phi)$. If $J_1^\Phi$ and $J_2^\Phi$ are two momentum maps associated with $\Phi$, with co-adjoint cocycles $\sigma_1$ and $\sigma_2$ respectively, then $[\sigma_1] = [\sigma_2]$.
\end{proposition}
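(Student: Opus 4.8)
The plan is to prove that $\sigma_1 - \sigma_2$ is a coboundary, which by the definition given just above the proposition is precisely the statement that $[\sigma_1] = [\sigma_2]$. The guiding observation is that two momentum maps for one and the same action can differ only by a fixed element of $\mathfrak{g}^*$, and that this difference produces exactly a coboundary when fed through the formula \eqref{sigma} for the co-adjoint cocycle.

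First I would exploit the non-uniqueness of the Hamiltonian functions. Since both $J_1^\Phi$ and $J_2^\Phi$ are momentum maps for the same action $\Phi$, for each $\xi \in \mathfrak{g}$ the associated functions $J_{1,\xi}$ and $J_{2,\xi}$ both satisfy the first defining relation in \eqref{Lie}, namely $i_{\xi_M}\Omega = dJ_{k,\xi}$ for $k = 1,2$. Subtracting gives $d(J_{1,\xi} - J_{2,\xi}) = 0$, and since $M$ is assumed connected, $J_{1,\xi} - J_{2,\xi}$ is a constant for every $\xi$. Because the pairing $\langle \xi, J_k^\Phi(x)\rangle = J_{k,\xi}(x)$ is linear in $\xi$, the quantity $\langle \xi, J_1^\Phi(x) - J_2^\Phi(x)\rangle = J_{1,\xi}(x) - J_{2,\xi}(x)$ is independent of $x$ and linear in $\xi$; hence there is a fixed $\mu \in \mathfrak{g}^*$ with $J_1^\Phi - J_2^\Phi \equiv \mu$ as a constant map $M \to \mathfrak{g}^*$.

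Next I would substitute $J_1^\Phi = J_2^\Phi + \mu$ into the cocycle formula $\sigma_k(g) = J_k^\Phi \circ \Phi_g - Ad^*_{g^{-1}}\circ J_k^\Phi$. Using that $\mu$ is constant, so $\mu \circ \Phi_g = \mu$, the difference collapses to
\[
\sigma_1(g) - \sigma_2(g) = (J_1^\Phi - J_2^\Phi)\circ \Phi_g - Ad^*_{g^{-1}}(J_1^\Phi - J_2^\Phi) = \mu - Ad^*_{g^{-1}}\mu .
\]
This is exactly the coboundary associated with $\mu$, so $\sigma_1 - \sigma_2$ is a coboundary and therefore $[\sigma_1] = [\sigma_2]$.

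The only delicate point — hardly an obstacle — is justifying that the two momentum maps differ by a genuine constant element of $\mathfrak{g}^*$, rather than by some $x$-dependent or nonlinear quantity. This rests entirely on the connectedness of $M$ (which upgrades a closed, hence locally constant, difference of Hamiltonian functions to a globally constant one) together with the linearity of $\langle \xi, \cdot\rangle$ in $\xi$. Once that identification is in place, the remainder is the direct computation above, which mirrors the symplectic argument and explains the remark preceding the proposition that every such action determines a well-defined cohomology class $[\sigma]$.
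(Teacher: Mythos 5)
Your proposal is correct and follows essentially the same route as the paper's own proof: both show that $d(J_{1,\xi}-J_{2,\xi}) = i_{\xi_M}\Omega - i_{\xi_M}\Omega = 0$, use connectedness of $M$ to conclude $J_1^\Phi - J_2^\Phi$ is a constant $\mu \in \mathfrak{g}^*$, and then substitute into the cocycle formula to obtain $\sigma_1(g)-\sigma_2(g) = \mu - Ad^*_{g^{-1}}\mu$. Your version is, if anything, slightly more explicit about the connectedness hypothesis and the linearity in $\xi$, which the paper leaves tacit.
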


\begin{proof}
	From the definition of a co-adjoint cocycle for a $q$-cosymplectic momentum map,
	\[
		\langle \sigma_1(g) - \sigma_2(g), \xi \rangle 
		= \langle J_1^\Phi \circ \Phi_g - J_2^\Phi \circ \Phi_g, \xi \rangle 
		- \langle Ad^*_{g^{-1}} (J_1^\Phi - J_2^\Phi), \xi \rangle,
	\]
	for all $g \in G$, $\xi \in \mathfrak{g}$. However, $J_1^\Phi - J_2^\Phi$ takes a constant value $\mu \in \mathfrak{g}^*$ since both maps are momentum maps for the same $q$-cosymplectic action. Indeed,
	\[
		d \langle J_1^\Phi - J_2^\Phi, \xi \rangle = dJ_{1,\xi} - dJ_{2,\xi} 
		= i_{\xi_M} \Omega - i_{\xi_M} \Omega = 0, \quad \forall \xi \in \mathfrak{g}.
	\]
	Thus, $(J_1^\Phi - J_2^\Phi) \circ \Phi_g = J_1^\Phi - J_2^\Phi$ for every $g \in G$, and hence
	\[
		\sigma_1(g) - \sigma_2(g) = \mu - Ad^*_{g^{-1}} \mu, \quad \forall g \in G,
	\]
	where $\mu = J_1^\Phi - J_2^\Phi$, showing that the cocycles differ by a coboundary.
\end{proof}

  Proposition \ref{PP} yields that a $q$-cosymplectic Lie group action has an Ad$^*$-equivariant momentum map if and only if it has an associated coboundary. Indeed, if a $q$-cosymplectic Lie group action has an Ad$^*$-equivariant momentum map $J_2^\Phi$ relative to $(M,\Omega,\vec{\lambda})$, then its associated co-adjoint cocycle satisfies $\sigma_2 = 0$, and any other momentum map $J_1^\Phi$ for the same action is such that its co-adjoint cocycle, say $\sigma_1$, satisfies $[\sigma_1] = [\sigma_2] = 0$, and thus $\sigma_1$ becomes a coboundary. 

Moreover, if $\sigma_1$ is a coboundary induced by $\mu \in \mathfrak{g}^*$, then the momentum map
\[
	J^\Phi := J_1^\Phi - \mu
\]
is an Ad$^*$-equivariant momentum map for the same $q$-cosymplectic Lie group action as $J_1^\Phi$, where $\mu \in \mathfrak{g}^*$ satisfies $\sigma_1(g) = \mu - Ad^*_{g^{-1}}\mu$ for every $g \in G$. In fact,
\[
	\langle J^\Phi, \xi \rangle = \langle J_1^\Phi, \xi \rangle - \langle \mu, \xi \rangle = J_{1,\xi} - \langle \mu, \xi \rangle, \quad \forall \xi \in \mathfrak{g},
\]
and
\[
	\sigma(g) = J^\Phi \circ \Phi_g - Ad^*_{g^{-1}} J^\Phi = \sigma_1(g) + Ad^*_{g^{-1}} \mu - \mu = 0,
\]
for every $g \in G$. Therefore, the result follows.

To summarise, if the co-adjoint cocycle of a given momentum map is a coboundary, then we can construct an Ad$^*$-equivariant momentum map. However, the following proposition shows that for every momentum map there exists a Lie group action 
\[
	\Delta : G \times \mathfrak{g}^* \to \mathfrak{g}^*
\]
such that the momentum map becomes $\Delta$-equivariant; namely, for every $g \in G$ the following diagram commutes:
\[
\begin{tikzcd}
	M \arrow[r, "\Phi_g"] \arrow[d, "J^{\Phi}"'] & M \arrow[d, "J^\Phi"] \\
	\mathfrak{g}^* \arrow[r, "\Delta_g"] & \mathfrak{g}^*
\end{tikzcd}
\]

\begin{proposition}\label{PJ}
Let $J^\Phi : M \to \mathfrak{g}^*$ be a momentum map for a $q$-cosymplectic restricted Hamiltonian action $\Phi : G \times M \to M$ with co-adjoint cocycle $\sigma$. Then:

\begin{enumerate}
	\item The map $\Delta : (g, \mu) \in G \times \mathfrak{g}^* \mapsto \Delta_g(\mu) := Ad^*_{g^{-1}} \mu + \sigma(g) \in \mathfrak{g}^*$ is a Lie group action.

	\item The momentum map $J^\Phi$ is $\Delta$-equivariant.
\end{enumerate}
\end{proposition}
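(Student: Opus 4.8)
The plan is to recognise $\Delta$ as the standard \emph{affine} action of $G$ on $\mathfrak{g}^*$, obtained by correcting the coadjoint action with the cocycle $\sigma$, and to verify both assertions by direct computation using the cocycle identity \eqref{gg} and the defining relation \eqref{sigma} for $\sigma$.

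For part (1) I would first check the two group-action axioms. To handle the identity, I would evaluate \eqref{gg} at $g=g'=e$; since $Ad^*_e=\mathrm{id}_{\mathfrak{g}^*}$ this gives $\sigma(e)=2\sigma(e)$, hence $\sigma(e)=0$, and therefore $\Delta_e(\mu)=Ad^*_e\mu+\sigma(e)=\mu$. For composition I would expand
\[
\Delta_g(\Delta_{g'}(\mu)) = Ad^*_{g^{-1}}\bigl(Ad^*_{g'^{-1}}\mu+\sigma(g')\bigr)+\sigma(g) = Ad^*_{g^{-1}}Ad^*_{g'^{-1}}\mu + Ad^*_{g^{-1}}\sigma(g')+\sigma(g),
\]
then use the composition law $Ad^*_{g^{-1}}Ad^*_{g'^{-1}}=Ad^*_{(gg')^{-1}}$ of the coadjoint representation, together with the cocycle condition \eqref{gg}, to recognise the last two terms as $\sigma(gg')$; the right-hand side then becomes exactly $\Delta_{gg'}(\mu)$. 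Smoothness of $\Delta$ follows because $g\mapsto Ad^*_{g^{-1}}$ is smooth and, fixing any $x_0\in M$, the constant $\sigma(g)=J^\Phi(\Phi_g(x_0))-Ad^*_{g^{-1}}J^\Phi(x_0)$ depends smoothly on $g$.

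For part (2), the required equivariance $J^\Phi\circ\Phi_g=\Delta_g\circ J^\Phi$ is a direct rearrangement of \eqref{sigma}. Writing \eqref{sigma} as $J^\Phi\circ\Phi_g = Ad^*_{g^{-1}}\circ J^\Phi + \sigma(g)$ and comparing with $\Delta_g(J^\Phi(x))=Ad^*_{g^{-1}}J^\Phi(x)+\sigma(g)$ shows that the two sides agree at every $x\in M$, which is precisely the commutativity of the stated diagram.

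The argument is almost entirely formal; the one step requiring genuine care — and the point I would flag as the main obstacle — is the order reversal when dualising the adjoint representation, i.e.\ verifying that $g\mapsto Ad^*_{g^{-1}}$ is a \emph{left} action satisfying $Ad^*_{g^{-1}}Ad^*_{g'^{-1}}=Ad^*_{(gg')^{-1}}$ under the pairing convention fixed earlier. This composition law is exactly what makes the two cocycle summands assemble into $\sigma(gg')$ in part (1); once the convention is pinned down, the remainder is substitution.
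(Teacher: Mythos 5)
Your proposal is correct and follows essentially the same route as the paper: both verify the identity axiom via $\sigma(e)=0$, establish composition by expanding $\Delta_g(\Delta_{g'}(\mu))$ and combining the coadjoint composition law $Ad^*_{g^{-1}}Ad^*_{g'^{-1}}=Ad^*_{(gg')^{-1}}$ with the cocycle condition, and obtain equivariance by directly rearranging the definition of $\sigma$. Your additions — deriving $\sigma(e)=0$ from the cocycle identity at $g=g'=e$ rather than asserting it, and checking smoothness of $\Delta$ via the constancy of $\sigma(g)$ on $M$ — are small but welcome refinements of the paper's argument, not a different approach.
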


\begin{proof}
First, since $\sigma(e) = 0$, one has
\[
	\Delta(e, \mu) = Ad^*_{e^{-1}} \mu + \sigma(e) = \mu.
\]
Moreover,
\begin{align*}
	\Delta(g, \Delta(g', \mu)) 
	&= Ad^*_{g^{-1}} (Ad^*_{g'^{-1}} \mu + \sigma(g')) + \sigma(g) \\
	&= Ad^*_{g^{-1}} Ad^*_{g'^{-1}} \mu + Ad^*_{g^{-1}} \sigma(g') + \sigma(g) \\
	&= Ad^*_{(gg')^{-1}} \mu + \sigma(gg') = \Delta(gg', \mu).
\end{align*}
Thus, $\Delta$ is a Lie group action on $\mathfrak{g}^*$, which proves part (1).

For part (2), from the definition of $\Delta$ and $\sigma$, we obtain
\[
	\Delta_g \circ J^\Phi = Ad^*_{g^{-1}} J^\Phi + \sigma(g) = J^\Phi \circ \Phi_g, \quad \forall g \in G,
\]
which shows that $J^\Phi$ is $\Delta$-equivariant.
\end{proof}

  Proposition \ref{PJ} ensures that a general $q$-cosymplectic momentum map $J^\Phi$ gives rise to an equivariant momentum map relative to a new action $\Delta:G\times\mathfrak{g}^*\rightarrow\mathfrak{g}^*$, called an \emph{affine action}. 

In the following theorem, we analyse the commutation relations among the functions $\{J_\xi\}_{\xi\in\mathfrak{g}}$ associated with a $q$-cosymplectic momentum map $J^\Phi$.

\begin{theorem}\label{Sigma}
Let $\Phi:G\times M\rightarrow M$ be a $q$-cosymplectic restricted Hamiltonian action relative to $(M,\Omega,\vec{\lambda})$ with a $q$-cosymplectic momentum map $J^\Phi:M\rightarrow\mathfrak{g}^*$ and let $\sigma:G\rightarrow\mathfrak{g}^*$ be the co-adjoint cocycle of $J^\Phi.$ Let us define
\[
\sigma_\tau: g \in G \mapsto \left<\sigma(g),\tau\right> \in \mathbb{R}, \quad
\Sigma : (\xi_1,\xi_2) \in \mathfrak{g} \times \mathfrak{g} \mapsto T_e \sigma_{\xi_2}(\xi_1) \in \mathbb{R}, \quad \forall \tau \in \mathfrak{g}.
\]
Then:
\begin{enumerate}
    \item The map $\Sigma$ is a skew-symmetric bilinear form on $\mathfrak{g}$ satisfying the identity
    \[
    \Sigma(\xi,[\zeta,\nu]) + \Sigma(\nu,[\xi,\zeta]) + \Sigma(\zeta,[\nu,\xi]) = 0, \quad \forall \xi,\zeta,\nu \in \mathfrak{g}.
    \]
    
    \item $\Sigma(\xi,\nu) = \{J_\nu, J_\xi\}_{\Omega,\vec{\lambda}} - J_{[\nu,\xi]}$ does not depend on $x \in M$, for all $\xi,\nu \in \mathfrak{g}.$
\end{enumerate}
\end{theorem}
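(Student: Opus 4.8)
The plan is to prove part (2) first, since the constancy statement and both algebraic identities of part (1) will follow from a single master formula. The crucial preliminary observation is that, for a restricted Hamiltonian action, each fundamental vector field $\xi_M$ coincides with the Hamiltonian vector field $X_{J_\xi}$. Indeed, $R_i(J_\xi) = i_{R_i}\,dJ_\xi = i_{R_i} i_{\xi_M}\Omega = -\,i_{\xi_M} i_{R_i}\Omega = 0$ because $i_{R_i}\Omega = 0$; hence the defining relation $i_{X_{J_\xi}}\Omega = dJ_\xi - \sum_i R_i(J_\xi)\lambda_i$ reduces to $i_{X_{J_\xi}}\Omega = dJ_\xi = i_{\xi_M}\Omega$. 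Together with $i_{X_{J_\xi}}\lambda_i = 0 = i_{\xi_M}\lambda_i$ (from \eqref{Lie}) and the non-degeneracy of $b$, this forces $\xi_M = X_{J_\xi}$, so by \eqref{Xg} we obtain $\{J_\nu, J_\xi\}_{\Omega,\vec{\lambda}} = X_{J_\xi}(J_\nu) = \xi_M(J_\nu)$.

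Next I would rewrite \eqref{sigma} in evaluated form, $J_\nu(\Phi_g(x)) = J_{Ad_{g^{-1}}\nu}(x) + \langle \sigma(g),\nu\rangle$, valid for all $g\in G$. Setting $g = \exp(t\xi)$ and differentiating at $t=0$ produces $\xi_M(J_\nu)(x)$ on the left, while on the right I use $\frac{d}{dt}\big|_{t=0} Ad_{\exp(-t\xi)}\nu = -[\xi,\nu] = [\nu,\xi]$ and the linearity of $\tau\mapsto J_\tau$ to get $J_{[\nu,\xi]}(x) + T_e\sigma_\nu(\xi)$. Combining with the previous paragraph yields the master formula $\Sigma(\xi,\nu) = \{J_\nu, J_\xi\}_{\Omega,\vec{\lambda}} - J_{[\nu,\xi]}$. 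Since $\sigma(g)\in\mathfrak{g}^*$ is independent of $x$ (established in the Proposition), the left-hand side is manifestly $x$-independent, which is precisely the content of part (2).

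For part (1), bilinearity is immediate: $\Sigma(\xi,\nu) = T_e\sigma_\nu(\xi)$ is linear in $\xi$ as the tangent map of $\sigma_\nu$ at $e$, and linear in $\nu$ because $\sigma_\nu = \langle\sigma(\cdot),\nu\rangle$ depends linearly on $\nu$. Skew-symmetry follows by adding the master formula to its transpose, $\Sigma(\xi,\nu) + \Sigma(\nu,\xi) = \big(\{J_\nu,J_\xi\} + \{J_\xi,J_\nu\}\big) - \big(J_{[\nu,\xi]} + J_{[\xi,\nu]}\big)$, which vanishes by antisymmetry of the Poisson bracket together with $[\nu,\xi]+[\xi,\nu]=0$ and linearity of $\tau\mapsto J_\tau$.

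Finally, for the $2$-cocycle identity I would substitute the master formula into each of the three terms. The three $J_{[[\cdot,\cdot],\cdot]}$ contributions sum to $-J_{[[\zeta,\nu],\xi]+[[\xi,\zeta],\nu]+[[\nu,\xi],\zeta]}$, which vanishes by the Jacobi identity in $\mathfrak{g}$. For the Poisson-bracket contributions I would first rewrite $J_{[\zeta,\nu]} = \{J_\zeta,J_\nu\} - \Sigma(\nu,\zeta)$; since $\Sigma(\nu,\zeta)$ is constant, $\{J_{[\zeta,\nu]}, J_\xi\} = \{\{J_\zeta,J_\nu\}, J_\xi\}$, and similarly for the cyclic permutations, so the remaining sum $\{\{J_\zeta,J_\nu\},J_\xi\} + \{\{J_\xi,J_\zeta\},J_\nu\} + \{\{J_\nu,J_\xi\},J_\zeta\}$ vanishes by the Jacobi identity for the Poisson bracket proved in Theorem~\ref{D2}. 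The one point demanding genuine care throughout is the consistent bookkeeping of orderings and signs — in particular the distinction between $[\xi,\nu]$ and $[\nu,\xi]$ arising from differentiating the adjoint action — since this is the only place where a sign error could silently propagate through every subsequent identity.
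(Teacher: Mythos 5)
Your proposal is correct and follows essentially the same route as the paper: differentiate the cocycle identity at the group identity to obtain the master formula $\Sigma(\xi,\nu)=\{J_\nu,J_\xi\}_{\Omega,\vec{\lambda}}-J_{[\nu,\xi]}$ (with $x$-independence coming from the constancy of $\sigma$), then deduce skew-symmetry and the 2-cocycle identity from this formula together with the Jacobi identities of the Lie bracket and of the Poisson bracket of Theorem~\ref{D2}. Your only additions are to make explicit two facts the paper leaves implicit --- the identification $\xi_M = X_{J_\xi}$ (which the paper uses silently in writing $[X_{J_\tau},X_{J_\xi}]=[\tau_M,\xi_M]$) and the detailed cancellation in part (1), which the paper only sketches --- so the substance of the argument is the same.
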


\begin{proof}
Let us prove part 2 first. Taking the tangent map of $\sigma_\tau$ at $e$, we get 
\begin{align*}
\Sigma(\xi,\tau) &= T_e \sigma_\tau(\xi) = \frac{d}{ds}\bigg|_{s=0} \left( \left< J^\Phi(\Phi_{\exp(s\xi)}), \tau \right> - \left< Ad^*_{\exp(-s\xi)} J^\Phi(x), \tau \right> \right) \\
&= dJ_\tau(\xi_M)_x - \frac{d}{ds}\bigg|_{s=0} \left< J^\Phi(x), Ad_{\exp(-s\xi)} \tau \right> \\
&= - (i_{\tau_M} i_{\xi_M} \Omega)_x - \left< J^\Phi(x), [\tau, \xi] \right> = \{ J_\tau, J_\xi \}_{\Omega, \vec{\lambda}}(x) - J_{[\tau,\xi]}(x).
\end{align*}
Since $X_{\{J_\tau,J_\xi\}_{\Omega,\vec{\lambda}}} = -[X_{J_\tau}, X_{J_\xi}] = -[\tau_M,\xi_M] = [\tau,\xi]_M$, the vector fields $X_{\{J_\tau,J_\xi\}}$ and $X_{J_{[\tau,\xi]}}$ have the same Hamiltonian function up to a constant. Therefore, $\Sigma$ does not depend on $x \in M$, which proves part 2.

For part 1, we note:
\[
-\Sigma(\xi,[\zeta,\nu]) = \{J_\xi,J_{[\zeta,\nu]}\}_{\Omega,\vec{\lambda}} - J_{[\xi,[\zeta,\nu]]} = \{ J_\xi, \{ J_\zeta, J_\nu \}_{\Omega,\vec{\lambda}} - \Sigma(\nu,\zeta) \}_{\Omega,\vec{\lambda}} - J_{[\xi,[\zeta,\nu]]}
\]
and the result follows from the Jacobi identity and antisymmetry of both the bracket $\{\cdot,\cdot\}_{\Omega,\vec{\lambda}}$ and the Lie bracket.
\end{proof}

Recall that for an Ad$^*$-equivariant momentum map, $\sigma(g) = 0$ for every $g \in G$. Thus, $\Sigma(\xi, \tau) = 0$ for all $\xi, \tau \in \mathfrak{g}$ if $J^\Phi$ is Ad$^*$-equivariant.

By part 2 of Theorem \ref{Sigma}, we know that if $J^\Phi$ is an Ad$^*$-equivariant $q$-cosymplectic momentum map, then there exists a Lie algebra morphism $\xi \in \mathfrak{g} \mapsto J_\xi \in C^\infty(M)$.

The following proposition shows that the momentum map $J^\Phi:M\rightarrow \mathfrak{g}^*$ related to $(M^\Omega_{\vec{\lambda}},f,J^\Phi)$ is conserved for the dynamics of the vector fields $\nabla f$, $X_f$ and $E_f$. In other words, the flows of $\nabla f$, $X_f$ and $E_f$ leave $J^\Phi$ invariant for every $f \in C^\infty(M)$.

\begin{proposition}\label{PXH}
Let $(M^\Omega_{\vec{\lambda}},f,J^\Phi)$ be a $G$-invariant $q$-cosymplectic Hamiltonian system, and let $F:(s,m)\in\mathbb{R} \times M \mapsto F_s(m)$ be the flow of $\nabla f$. Then, $J^\Phi \circ F_s = J^\Phi$ for every $s \in \mathbb{R}$. Analogous results apply to the flows of $E_f$ and $X_f$.
\end{proposition}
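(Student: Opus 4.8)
The plan is to convert the flow-invariance statement into an infinitesimal one. Writing $J_\xi := \langle J^\Phi, \xi\rangle$ for $\xi \in \mathfrak{g}$, the condition $J^\Phi \circ F_s = J^\Phi$ for all $s$ is equivalent to $(\nabla f)(J_\xi) = 0$ for every $\xi \in \mathfrak{g}$, since along the flow
\[
\frac{d}{ds}\big(J_\xi \circ F_s\big) = dJ_\xi(\nabla f)\circ F_s = \big((\nabla f)(J_\xi)\big)\circ F_s,
\]
and $J^\Phi$ is determined by its pairings with all $\xi$. Thus it suffices to prove $(\nabla f)(J_\xi) = 0$, and likewise $X_f(J_\xi) = 0$ and $E_f(J_\xi) = 0$, for every $\xi \in \mathfrak{g}$; the global statements then follow because each $J_\xi\circ F_s$ is constant in $s$.

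Two ingredients feed the computation. First, the $G$-invariance hypothesis $\Phi_g^* f = f$, differentiated along the curve $g = \exp(t\xi)$ at $t = 0$, yields $\xi_M(f) = 0$ for every $\xi \in \mathfrak{g}$. Second, the defining relations of a restricted Hamiltonian action, $i_{\xi_M}\Omega = dJ_\xi$ and $i_{\xi_M}\lambda_i = 0$ for $i = 1,\dots,q$, express that $\xi_M$ is horizontal and Hamiltonian with Hamiltonian function $J_\xi$.

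The key step is the case of $X_f$. Using $dJ_\xi = i_{\xi_M}\Omega$ together with the defining relation $i_{X_f}\Omega = df - \sum_{i=1}^q (R_i f)\lambda_i$, I would compute
\[
X_f(J_\xi) = dJ_\xi(X_f) = \Omega(\xi_M, X_f) = -(i_{X_f}\Omega)(\xi_M) = -\,\xi_M(f) + \sum_{i=1}^q (R_i f)\,\lambda_i(\xi_M).
\]
The last sum vanishes because $\lambda_i(\xi_M) = i_{\xi_M}\lambda_i = 0$, and $\xi_M(f) = 0$ by $G$-invariance, so $X_f(J_\xi) = 0$. Equivalently, one may phrase this through the Poisson bracket as $X_f(J_\xi) = \{J_\xi, f\} = -\{f, J_\xi\} = -\xi_M(f) = 0$, using that $\xi_M = X_{J_\xi}$ (which holds precisely because $R_i(J_\xi) = 0$ makes the action relation agree with the Hamiltonian vector field relation).

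The remaining cases follow by linearity together with the decompositions $\nabla f = X_f + \sum_{i=1}^q R_i(f)\,R_i$ and $E_f = \sum_{i=1}^q R_i + X_f$. Indeed,
\[
(\nabla f)(J_\xi) = X_f(J_\xi) + \sum_{i=1}^q R_i(f)\,R_i(J_\xi), \qquad E_f(J_\xi) = \sum_{i=1}^q R_i(J_\xi) + X_f(J_\xi),
\]
and both collapse to zero once one invokes the earlier remark that $R_i(J_\xi) = 0$ for all $i$, a consequence of $0 = i_{R_i}i_{\xi_M}\Omega = R_i(J_\xi)$. No genuine obstacle arises; the only subtle point is the decoupling of the Reeb directions, in that $i_{\xi_M}\lambda_i = 0$ kills the $\lambda_i$-correction in $i_{X_f}\Omega$, while $R_i(J_\xi) = 0$ removes the Reeb contributions in $\nabla f$ and $E_f$. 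The argument is therefore mechanical once these two facts are in place.
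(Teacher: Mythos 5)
Your proposal is correct and follows essentially the same route as the paper's proof: reduce to the infinitesimal statement, use $G$-invariance to get $\xi_M(f)=0$, exploit the decompositions $\nabla f = X_f + \sum_{i=1}^q R_i(f)R_i$ and $E_f = \sum_{i=1}^q R_i + X_f$ together with $R_i(J_\xi)=0$, and kill the remaining term via $i_{X_f}i_{\xi_M}\Omega = -i_{\xi_M}\bigl(df - \sum_{i=1}^q (R_i f)\lambda_i\bigr) = 0$. The only (welcome) difference is presentational: you justify the passage from the vanishing derivative to flow invariance at all $s$ explicitly, whereas the paper only differentiates at $s=0$.
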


\begin{proof}
Since $f$ is $G$-invariant, $\xi_M f = 0$ for every $\xi \in \mathfrak{g}$. Therefore,
\begin{align*}
\frac{d}{ds}\bigg|_{s=0} J_\xi \circ F_s &= i_{\nabla f} dJ_\xi = i_{X_f + \sum_{i=1}^q (R_i f) R_i} dJ_\xi = i_{X_f} dJ_\xi \\
&= i_{X_f} i_{\xi_M} \Omega = i_{\xi_M} \left( \sum_{i=1}^q (R_i f) \lambda_i - df \right) = 0,
\end{align*}
for every $\xi \in \mathfrak{g}$. Hence $J_\xi \circ F_s = J_\xi$ for all $s \in \mathbb{R}$.

Similarly, if $L$ is the flow of $E_f$, then
\begin{align*}
\frac{d}{ds}\bigg|_{s=0} J_\xi \circ L_s &= i_{E_f} dJ_\xi = i_{X_f + \sum_{i=1}^q R_i} dJ_\xi = i_{X_f} dJ_\xi \\
&= i_{X_f} i_{\xi_M} \Omega = i_{\xi_M} \left( \sum_{i=1}^q (R_i f) \lambda_i - df \right) = 0,
\end{align*}
so $J^\Phi \circ L_s = J^\Phi$. Likewise, since $i_{X_f + \sum_{i=1}^q R_i} dJ_\xi = i_{X_f} dJ_\xi$, it follows that $J^\Phi \circ K_s = J^\Phi$ for the flow $K_s$ of $X_f$.
\end{proof}

The following lemma is a generalisation of a well-known result in symplectic geometry, which is crucial to obtain the $q$-cosymplectic Marsden–Weinstein reduction theorem. Recall that a \emph{weakly regular value} of $J^\Phi : M \to \mathfrak{g}^*$ is a point $\mu \in \mathfrak{g}^*$ such that $J^{\Phi-1}(\mu)$ is a submanifold of $M$ and $T_x J^{\Phi-1}(\mu) = \ker T_x J^\Phi$ for every $x \in J^{\Phi-1}(\mu)$ (see \cite{Albert}).

\begin{lemma}
Let $\mu \in \mathfrak{g}^*$ be a weak regular value of a $q$-cosymplectic momentum map $J^\Phi : M \to \mathfrak{g}^*$, and let $G^\Delta_\mu:=\{g\in G|\Delta_g(\mu)=\mu\}$ be the isotropy group at $\mu$ of the affine action $\Delta : G \times \mathfrak{g}^* \to \mathfrak{g}^*$ associated to the co-adjoint cocycle $\sigma : G \to \mathfrak{g}^*$ of $J^\Phi$. Then, for every $x \in J^{\Phi-1}(\mu)$:

\begin{enumerate}
    \item $T_x(G^\Delta_\mu \cdot x) = T_x(G \cdot x) \cap T_x(J^{\Phi-1}(\mu))$,
    \item $T_x(J^{\Phi-1}(\mu)) = T_x(G \cdot x)^{\perp_\Omega}$,
    \item $(T_x(J^{\Phi-1}(\mu)))^{\perp_\Omega} = T_x(G \cdot x) \oplus \langle R_{1x},\dots,R_{qx} \rangle$.
\end{enumerate}
\end{lemma}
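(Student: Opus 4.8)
The plan is to prove the three identities in the order (2), (3), (1): identity (2) is the basic linear-algebraic translation of the momentum map, (3) follows from (2) by dualizing with respect to $\Omega$, and (1) rests on the $\Delta$-equivariance established in Proposition~\ref{PJ}. To set up (2), I would note that for any $v\in T_xM$ and $\xi\in\mathfrak{g}$ the defining relation of $J^\Phi$ together with \eqref{Lie} gives
\[
\langle T_xJ^\Phi(v),\xi\rangle = dJ_\xi(v) = (i_{\xi_M}\Omega)(v) = \Omega(\xi_M(x),v).
\]
Since $T_x(G\cdot x)=\{\xi_M(x):\xi\in\mathfrak{g}\}$, this shows $v\in\ker T_xJ^\Phi$ if and only if $\Omega(\xi_M(x),v)=0$ for all $\xi$, i.e. $\ker T_xJ^\Phi=(T_x(G\cdot x))^{\perp_\Omega}$. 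As $\mu$ is a weak regular value, $T_x(J^{\Phi-1}(\mu))=\ker T_xJ^\Phi$, and (2) follows.

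For (3) I would apply $(\cdot)^{\perp_\Omega}$ to both sides of (2), using the general fact that, for a $2$-form with radical $\ker\Omega=\mathcal{R}$, one has $(W^{\perp_\Omega})^{\perp_\Omega}=W+\mathcal{R}_x$ for every subspace $W$; this is seen by descending to the non-degenerate form induced on $T_xM/\mathcal{R}_x$, where double orthogonality is the identity. Taking $W=T_x(G\cdot x)$ yields
\[
(T_x(J^{\Phi-1}(\mu)))^{\perp_\Omega} = T_x(G\cdot x)+\mathcal{R}_x.
\]
To see the sum is direct it suffices to observe that the condition $i_{\xi_M}\lambda_i=0$ from \eqref{Lie} forces $T_x(G\cdot x)\subseteq\xi_x$ (the fibre of the $q$-cosymplectic distribution), while $\xi_x\cap\mathcal{R}_x=0$ by the splitting $TM=\mathcal{R}\oplus\xi$. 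Since $\mathcal{R}_x=\langle R_{1x},\dots,R_{qx}\rangle$ has dimension $q$ by the pointwise independence of the Reeb fields, this establishes (3).

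Finally, for (1) I would differentiate the $\Delta$-equivariance relation $J^\Phi\circ\Phi_g=\Delta_g\circ J^\Phi$ of Proposition~\ref{PJ} at $g=e$ in a direction $\zeta\in\mathfrak{g}$ and evaluate at $x\in J^{\Phi-1}(\mu)$, obtaining
\[
T_xJ^\Phi(\zeta_M(x)) = \left.\frac{d}{dt}\right|_{t=0}\Delta_{\exp(t\zeta)}(\mu) = \zeta_{\mathfrak{g}^*}(\mu),
\]
the value at $\mu$ of the fundamental vector field of the affine action $\Delta$. Thus an orbit vector $v=\zeta_M(x)\in T_x(G\cdot x)$ lies in $\ker T_xJ^\Phi=T_x(J^{\Phi-1}(\mu))$ exactly when $\zeta_{\mathfrak{g}^*}(\mu)=0$, i.e. when $\zeta$ belongs to the isotropy Lie algebra $\mathfrak{g}^\Delta_\mu$ of $G^\Delta_\mu$. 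Since $T_x(G^\Delta_\mu\cdot x)=\{\zeta_M(x):\zeta\in\mathfrak{g}^\Delta_\mu\}$, this identifies $T_x(G\cdot x)\cap T_x(J^{\Phi-1}(\mu))$ with $T_x(G^\Delta_\mu\cdot x)$ and proves (1).

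I expect the main obstacle to be the bookkeeping in (1): one must ensure that the vanishing condition $\zeta_{\mathfrak{g}^*}(\mu)=0$ characterizes $\mathfrak{g}^\Delta_\mu$ intrinsically, independently of which $\zeta$ represents a given vector $\zeta_M(x)$. This is guaranteed because the identity above exhibits $T_xJ^\Phi$ restricted to $T_x(G\cdot x)$ as a factorization of the linear map $\zeta\mapsto\zeta_{\mathfrak{g}^*}(\mu)$ through $\zeta\mapsto\zeta_M(x)$, so that the intersection is precisely the image of $\mathfrak{g}^\Delta_\mu$ under $\zeta\mapsto\zeta_M(x)$.
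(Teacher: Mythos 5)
Your proposal is correct, and parts (2) and (1) follow essentially the same route as the paper: for (2), both arguments reduce to the identity $\langle T_xJ^\Phi(v),\xi\rangle=\Omega_x((\xi_M)_x,v)$ plus the weak-regularity hypothesis $T_x(J^{\Phi-1}(\mu))=\ker T_xJ^\Phi$; for (1), both differentiate the $\Delta$-equivariance of $J^\Phi$ along $u\mapsto\Phi_{\exp(u\zeta)}(x)$ and use that the isotropy algebra $\mathfrak{g}^\Delta_\mu$ is exactly the set of $\zeta$ whose fundamental vector field on $\mathfrak{g}^*$ vanishes at $\mu$ (your closing remark about independence of the representative $\zeta$ is a point the paper leaves implicit). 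Where you genuinely diverge is part (3). The paper only proves the inclusion $T_x(G\cdot x)\oplus\langle R_{1x},\dots,R_{qx}\rangle\subset\bigl(T_x(J^{\Phi-1}(\mu))\bigr)^{\perp_\Omega}$, by testing $\Omega_x(\xi_M+\sum_i\lambda_iR_i,\,\cdot\,)$ against $\ker T_xJ^\Phi$, and then asserts equality with the informal remark that the $R_{ix}$ lie in $T_x(J^{\Phi-1}(\mu))$ but are not tangent to $G\cdot x$; no dimension count or duality argument is supplied for the reverse inclusion. Your route instead applies $(\cdot)^{\perp_\Omega}$ to identity (2) and invokes the double-orthogonality rule $(W^{\perp_\Omega})^{\perp_\Omega}=W+\mathcal{R}_x$ for a $2$-form with radical $\mathcal{R}_x$, proved by passing to the nondegenerate form on $T_xM/\mathcal{R}_x$; this yields the reverse inclusion rigorously, and your justification of directness (that $i_{\xi_M}\lambda_i=0$ forces $T_x(G\cdot x)\subset\xi_x$, while $\xi_x\cap\mathcal{R}_x=0$ by the splitting $TM=\mathcal{R}\oplus\xi$) is also tighter than the paper's, since "not tangent to $G\cdot x$" alone would not rule out a nonzero combination $\sum_i c_iR_{ix}$ lying in $T_x(G\cdot x)$ when $q>1$. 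In short, your version of (3) buys completeness where the paper's argument is sketchy, at the modest cost of one linear-algebra lemma on degenerate forms.
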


\begin{proof}
Let us assume $(\xi_M)_x \in T_x J^{\Phi-1}(\mu)$. Since $T_x(J^{\Phi-1}(\mu)) = \ker T_x J^\Phi$, we compute
\begin{align*}
(i_{\xi_M} dJ_\tau)_x &= \frac{d}{du}\bigg|_{u=0} J_\tau(\Phi(\exp(u\xi), x)) \\
&= \left< \frac{d}{du}\bigg|_{u=0} J^\Phi(\Phi(\exp(u\xi),x)), \tau \right> = \left< \frac{d}{du}\bigg|_{u=0} \Delta_{\exp(u\xi)} J^\Phi(x), \tau \right> = 0,
\end{align*}
for every $\tau \in \mathfrak{g}$, if and only if $\xi \in \mathfrak{g}^\Delta_\mu$, the Lie algebra of $G^\Delta_\mu$. This proves 1.

For part 2, since $J^\Phi$ is a $q$-cosymplectic momentum map, we have
\[
\Omega_x((\xi_M)_x, v_x) = (dJ_\xi)_x(v_x) = \left< T_x J^\Phi(v_x), \xi \right>, \quad \forall v_x \in T_x M, \, \forall \xi \in \mathfrak{g}.
\]
Thus, $v_x \in \ker T_x J^\Phi = T_x(J^{\Phi-1}(\mu))$ if and only if $T_x J^\Phi(v_x)$ is orthogonal to $\mathfrak{g}$, i.e., $T_x(J^{\Phi-1}(\mu)) = T_x(Gx)^{\perp_\Omega}$. This proves 2.

For part 3, take $X = \xi_M + \sum_{i=1}^q \lambda_i R_i$ for any $\lambda_i \in \mathbb{R}$. Then, for $v_x \in \ker T_x J^\Phi$,
\[
\Omega_x(X_x, v_x) = (dJ_\xi)_x(v_x) = 0.
\]
Thus, $T_x(Gx) \oplus \langle R_{1x}, \dots, R_{qx} \rangle \subset (T_x(J^{\Phi-1}(\mu)))^{\perp_\Omega}$. Since $R_{1x}, \dots, R_{qx}$ take values in $T_x(J^{\Phi-1}(\mu))$ but are not tangent to $Gx$, we have
\[
(T_x(J^{\Phi-1}(\mu)))^{\perp_\Omega} = T_x(Gx) \oplus \langle R_{1x}, \dots, R_{qx} \rangle.
\]
\end{proof}

The following theorem is a generalisation of the classical Marsden–Weinstein reduction theorem to the $q$-cosymplectic realm. It follows the ideas of the proof given in \cite{Albert}.

   	  \begin{theorem}\label{MWT}
Let $\Phi:G\times M\rightarrow M$ be a $q$-cosymplectic restricted Hamiltonian action  on the $q$-cosymplectic manifold $(M,\Omega,\vec{\lambda})$ associated with a $q$-cosymplectic momentum map $J^\Phi:M\rightarrow \mathfrak{g}^*$. Assume that $\mu\in\mathfrak{g}^*$ is a weakly regular value of $J^\Phi$ and that $J^{\Phi-1}(\mu)$ is quotientable, i.e. $M_\mu^\Delta := J^{\Phi-1}(\mu)/G_\mu^\Delta$ is a manifold and $\pi_\mu:J^{\Phi-1}(\mu)\rightarrow M_\mu^\Delta$ is a submersion. Let $i_\mu:J^{\Phi-1}(\mu)\hookrightarrow M$ be the natural immersion and let $\pi_\mu: J^{\Phi-1}(\mu)\rightarrow M_\mu^\Delta$ be the canonical projection. Then, there exists a unique $q$-cosymplectic manifold $(M_\mu^\Delta,\Omega_\mu,\vec{\lambda}_\mu)$ such that
\begin{align}\label{iOme}
	i_\mu^*\Omega = \pi^*_\mu\Omega_\mu, \quad i_\mu^*\lambda_i = \pi^*_\mu\lambda_{i\mu}.
\end{align}
\end{theorem}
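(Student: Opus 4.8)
The plan is to follow the classical symplectic reduction strategy in three stages: first show that the pulled-back forms $i_\mu^*\Omega$ and $i_\mu^*\lambda_i$ are \emph{basic} with respect to the fibration $\pi_\mu:J^{\Phi-1}(\mu)\to M_\mu^\Delta$, so that they descend to unique forms $\Omega_\mu,\lambda_{i\mu}$ on $M_\mu^\Delta$; then identify the kernel of $\Omega_\mu$ together with the descended Reeb fields to verify the four defining conditions of a $q$-cosymplectic structure; and finally invoke the injectivity of $\pi_\mu^*$ for uniqueness.

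For the descent I would verify $G_\mu^\Delta$-invariance and horizontality. Invariance is immediate: since $J^\Phi$ is $\Delta$-equivariant (Proposition~\ref{PJ}), every $g\in G_\mu^\Delta$ satisfies $J^\Phi\circ\Phi_g=\Delta_g\circ J^\Phi$, so $\Phi_g$ preserves $J^{\Phi-1}(\mu)$; combined with $\Phi_g^*\Omega=\Omega$ and $\Phi_g^*\lambda_i=\lambda_i$ this yields invariance of $i_\mu^*\Omega$ and $i_\mu^*\lambda_i$ under the residual action. For horizontality I would contract with a vertical vector $(\xi_M)_x$, $\xi\in\mathfrak{g}_\mu^\Delta$, which by part~1 of the preceding lemma is tangent to $J^{\Phi-1}(\mu)$: the relation $i_{\xi_M}\lambda_i=0$ (restricted Hamiltonian action) kills $i_\mu^*\lambda_i$, while for $i_\mu^*\Omega$ one uses $\Omega_x((\xi_M)_x,v_x)=(dJ_\xi)_x(v_x)=\langle T_xJ^\Phi(v_x),\xi\rangle=0$ for every $v_x\in T_xJ^{\Phi-1}(\mu)=\ker T_xJ^\Phi$ (weak regularity). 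As $\pi_\mu$ is a surjective submersion, basicity produces unique $\Omega_\mu,\lambda_{i\mu}$ with $i_\mu^*\Omega=\pi_\mu^*\Omega_\mu$ and $i_\mu^*\lambda_i=\pi_\mu^*\lambda_{i\mu}$; closedness then follows from $\pi_\mu^*d\Omega_\mu=i_\mu^*d\Omega=0$ and the analogous identity for the $\lambda_{i\mu}$, using injectivity of $\pi_\mu^*$.

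The heart of the argument is the reconstruction of the splitting on $M_\mu^\Delta$. I would first show the Reeb fields descend: each $R_i$ is tangent to $J^{\Phi-1}(\mu)$ (because $(dJ_\xi)(R_i)=\Omega(\xi_M,R_i)=0$, since $\xi_M\in\xi$ and $R_i\in\ker\Omega$), and since the fundamental fields are infinitesimal automorphisms one has $[\xi_M,R_i]=0$, so $R_i$ is $G_\mu^\Delta$-invariant and projects to a field $R_{i\mu}:=T\pi_\mu(R_i)$. These are independent because $\sum_i c_iR_{ix}\in\ker T_x\pi_\mu=T_x(G_\mu^\Delta x)\subset\xi$ forces $\sum_i c_iR_{ix}\in\mathcal{R}\cap\xi=0$. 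The pullback relation gives $\lambda_{i\mu}(R_{j\mu})=\lambda_i(R_j)=\delta_{ij}$. To pin down $\ker\Omega_\mu$, I would lift: a projected vector lies in $\ker\Omega_\mu$ iff a representative $v\in T_xJ^{\Phi-1}(\mu)$ satisfies $\Omega(v,w)=0$ for all $w\in T_xJ^{\Phi-1}(\mu)$, i.e.\ $v\in(T_xJ^{\Phi-1}(\mu))^{\perp_\Omega}\cap T_xJ^{\Phi-1}(\mu)$. Parts~2 and~3 of the lemma, together with part~1, identify this intersection as $T_x(G_\mu^\Delta x)\oplus\langle R_{1x},\dots,R_{qx}\rangle$, whose image under $T\pi_\mu$ (which collapses the orbit directions) is exactly $\mathcal{R}_\mu:=\langle R_{1\mu},\dots,R_{q\mu}\rangle$. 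Hence $\ker\Omega_\mu=\mathcal{R}_\mu$; setting $\xi_\mu:=\bigcap_i\ker\lambda_{i\mu}$, the duality $\lambda_{i\mu}(R_{j\mu})=\delta_{ij}$ delivers the splitting $TM_\mu^\Delta=\mathcal{R}_\mu\oplus\xi_\mu$, and nondegeneracy of $\Omega_\mu|_{\xi_\mu}$ is forced by $\ker\Omega_\mu=\mathcal{R}_\mu$ and $\mathcal{R}_\mu\cap\xi_\mu=0$.

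I expect the main obstacle to be the kernel computation in the third stage, specifically the bookkeeping of the symplectic orthogonal $(T_xJ^{\Phi-1}(\mu))^{\perp_\Omega}$ and its intersection with $T_xJ^{\Phi-1}(\mu)$: one must combine all three parts of the preceding lemma correctly and check that projecting by $T\pi_\mu$ annihilates precisely the $G_\mu^\Delta$-orbit directions while faithfully retaining the Reeb directions. Uniqueness is then routine from the injectivity of $\pi_\mu^*$ on forms, which also shows $\dim M_\mu^\Delta=2n'+q$, since $\Omega_\mu$ restricts to a nondegenerate form on the even-dimensional bundle $\xi_\mu$.
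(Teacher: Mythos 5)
Your proposal is correct and follows essentially the same route as the paper's proof: descent of $\Omega$ and the $\lambda_i$ via $G_\mu^\Delta$-invariance plus horizontality (using $i_{\xi_M}\lambda_i=0$ and $i_{\xi_M}\Omega=dJ_\xi$ vanishing on $\ker T_xJ^\Phi$), descent of the Reeb fields, and then the crucial nondegeneracy step via lifting a vector field off $M_\mu^\Delta$ and invoking part 3 of the preceding lemma. The only difference is one of packaging: the paper establishes injectivity (hence, by a dimension count, bijectivity) of the bundle map $b_\mu(X_\mu)=i_{X_\mu}\Omega_\mu+\sum_{i=1}^q\bigl(i_{X_\mu}\lambda_{i\mu}\bigr)\lambda_{i\mu}$, whereas you compute $\ker\Omega_\mu=\mathrm{span}\{R_{1\mu},\dots,R_{q\mu}\}$ directly and then assemble the splitting; these two verifications are equivalent and rest on the same lemma and the same lifting argument.
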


\begin{proof}
Since $J^{\Phi-1}(\mu)$ is quotientable, the quotient space $M^\Delta_\mu = J^{\Phi-1}(\mu)/G^\Delta_\mu$ is a manifold and $\pi_\mu$ is a surjective submersion. Then, $\ker T\pi_\mu$ is a subbundle of $T(J^{\Phi-1}(\mu))$. 

From the hypothesis that $\Phi_g$ is a $q$-cosymplectomorphism for all $g\in G$, it follows that $L_{\xi_M}\Omega = 0$ and $L_{\xi_M}\lambda_i = 0$ for all $\xi \in \mathfrak{g}$ and $i=1,\dots,q$. Therefore, 
\[
L_{\xi_{J^{\Phi-1}(\mu)}} i^*_\mu \Omega = 0, \quad L_{\xi_{J^{\Phi-1}(\mu)}} i^*_\mu \lambda_i = 0,
\]
for all $\xi \in \mathfrak{g}^\Delta_\mu$ and $i=1,\dots,q$, where $\xi_{J^{\Phi-1}(\mu)}$ denotes the fundamental vector field of the restricted $G^\Delta_\mu$-action on $J^{\Phi-1}(\mu)$.

Now let $Y_{J^{\Phi-1}(\mu)}$ be any vector field tangent to $J^{\Phi-1}(\mu)$. There exists a vector field $Y$ on $M$ such that $Y|_{J^{\Phi-1}(\mu)} = Y_{J^{\Phi-1}(\mu)}$. Then:
\begin{align*}
	i_{Y_{J^{\Phi-1}(\mu)}} i_{\xi_{J^{\Phi-1}(\mu)}} i^*_\mu \Omega &= i^*_\mu(i_Y i_{\xi_M} \Omega) = i^*_\mu(i_Y dJ_\xi) = 0, \\
	i_{Y_{J^{\Phi-1}(\mu)}} i_{\xi_{J^{\Phi-1}(\mu)}} i^*_\mu \lambda_i &= i^*_\mu(i_Y i_{\xi_M} \lambda_i) = 0, \quad i=1,\dots,q.
\end{align*}

These conditions guarantee the existence of forms $\Omega_\mu \in \Omega^2(M^\Delta_\mu)$ and $\lambda_{i\mu} \in \Omega^1(M^\Delta_\mu)$ satisfying \eqref{iOme}. Uniqueness follows from the injectivity of $\pi^*_\mu$ on basic forms.

Since $\Omega$ and $\lambda_i$ are closed, and \eqref{iOme} holds, it follows that $\Omega_\mu$ and $\lambda_{i\mu}$ are also closed.

Next, since $i_{R_i} dJ_\xi = 0$ for all $i=1,\dots,q$ and all $\xi \in \mathfrak{g}$, it follows that $R_i$ is tangent to $J^{\Phi-1}(\mu)$. Let $\tilde R_i := R_i|_{J^{\Phi-1}(\mu)}$; then $\Phi_{g*} \tilde R_i = \tilde R_i$ and $L_{\xi_{J^{\Phi-1}(\mu)}} \tilde R_i = 0$ for all $g \in G^\Delta_\mu$ and $\xi \in \mathfrak{g}^\Delta_\mu$.

Therefore, the pushforwards $R_{i\mu} := \pi_{\mu*} \tilde R_i$ are well-defined vector fields on $M^\Delta_\mu$, and we have:
\begin{align*}
	\pi^*_\mu(i_{R_{i\mu}} \lambda_{i\mu}) &= i_{\tilde R_i} \pi^*_\mu \lambda_{i\mu} = i^*_\mu(i_{R_i} \lambda_i) = 1, \quad i=1,\dots,q, \\
	[R_{i\mu}, R_{j\mu}] &= \pi_{\mu*} [R_i, R_j] = 0, \\
	\pi^*_\mu(i_{R_{i\mu}} \Omega_\mu) &= i_{\tilde R_i} \pi^*_\mu \Omega_\mu = i^*_\mu(i_{R_i} \Omega) = 0.
\end{align*}
So $i_{R_{i\mu}} \lambda_{i\mu} = 1$ and $i_{R_{i\mu}} \Omega_\mu = 0$.

To complete the proof, we show that 
\[
b_\mu : X_\mu \in TM^\Delta_\mu \mapsto i_{X_\mu} \Omega_\mu + \sum_{i=1}^q (i_{X_\mu} \lambda_{i\mu}) \lambda_{i\mu} \in T^* M^\Delta_\mu
\]
is an isomorphism. 

To prove injectivity, suppose $X_\mu$ lies in $\ker b_\mu$. Then $i_{X_\mu} \lambda_{i\mu} = 0$ for all $i$ and $i_{X_\mu} \Omega_\mu = 0$. Let $\tilde X$ be a vector field on $J^{\Phi-1}(\mu)$ projecting to $X_\mu$, i.e., $\pi_{\mu*} \tilde X = X_\mu$, and extend it to $X$ on $M$ with $X|_{J^{\Phi-1}(\mu)} = \tilde X$. Then,
\[
\pi^*_\mu(i_{X_\mu} \Omega_\mu) = i_X \Omega|_{J^{\Phi-1}(\mu)} = 0,
\]
which implies that $X$ takes values in $(T_x J^{\Phi-1}(\mu))^{\perp_\Omega} = T_x(Gx) \oplus \langle R_{1x}, \dots, R_{qx} \rangle$ for each $x \in J^{\Phi-1}(\mu)$. Thus, $X_x = (\xi_M)_x + \sum_{i=1}^q \eta_i R_{ix}$ for some $\xi \in \mathfrak{g}^\Delta_\mu$ and $\eta_i \in \mathbb{R}$ depending on $x$.

Since $\lambda_{i\mu}(T_x\pi_\mu X_\mu) = 0$, we get $\eta_i = 0$ for all $i$, and then
\[
(X_\mu)_{\pi_\mu(x)} = T_x\pi_\mu X_x = T_x\pi_\mu(\xi_M)_x = 0.
\]
So $\ker b_\mu = \{0\}$ and $b_\mu$ is injective.

Finally, since $\dim TM^\Delta_\mu = \dim T^* M^\Delta_\mu$, $b_\mu$ is surjective as well. Therefore, $(M_\mu^\Delta, \Omega_\mu, \vec{\lambda}_\mu)$ is a $q$-cosymplectic manifold.
\end{proof}
In addition to the main results discussed above, it is also of interest to examine 
how restricted Hamiltonian actions with momentum maps interact with local gradient 
vector fields in a more general setting. The following theorem provides a separate 
criterion that, while independent of the main theorem, sheds light on the symmetry 
properties of such actions.

\begin{theorem}
	A restricted Hamiltonian action, with momentum map $J,$ is a symmetry of a local gradient vector field $X$ iff for each $\xi\in\mathfrak{g}$ it holds that $X(J_\xi)=l_\xi$ for some constant $l_\xi\in\mathbb R,$ i.e., 
	$$[X,\xi_M]=0\quad \mathrm{iff}\quad X(J_\xi)=l_\xi.$$
	Specially, when $G$ is a semisimple Lie group, then 
		$$[X,\xi_M]=0\quad \mathrm{iff}\quad X(J_\xi)=0.$$
\end{theorem}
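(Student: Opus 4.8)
The plan is to reduce the global symmetry condition to its infinitesimal form and then identify the obstruction to $[X,\xi_M]=0$ with the differential of $X(J_\xi)$. Since $G$ is connected, $\Phi$ is a symmetry of $X$ (that is, $\Phi_{g*}X=X$ for all $g$) if and only if $[X,\xi_M]=0$ for every $\xi\in\mathfrak{g}$, and the defining relations \eqref{Lie} say precisely that $\xi_M$ is the Hamiltonian vector field $X_{J_\xi}$. So it suffices to analyse $[X,\xi_M]$ for each fixed $\xi$.

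The key step I would carry out is the identity $[X,\xi_M]=\nabla\big(X(J_\xi)\big)$. To obtain it, I compute $i_{[X,\xi_M]}\Omega=\mathcal{L}_X(i_{\xi_M}\Omega)-i_{\xi_M}(\mathcal{L}_X\Omega)$. The first term equals $\mathcal{L}_X(dJ_\xi)=d\big(X(J_\xi)\big)$. For the second term I invoke the local gradient characterisation \eqref{LX}, namely $\mathcal{L}_X\Omega=\sum_{i}\lambda_i\wedge\mathcal{L}_X\lambda_i$, and contract with $\xi_M$; using $i_{\xi_M}\lambda_i=0$ from \eqref{Lie} together with the Cartan identity $i_{\xi_M}\mathcal{L}_X\lambda_i=-\,i_{[X,\xi_M]}\lambda_i$, each summand collapses to $(i_{[X,\xi_M]}\lambda_i)\lambda_i$. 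Hence $i_{[X,\xi_M]}\Omega+\sum_i(i_{[X,\xi_M]}\lambda_i)\lambda_i=d(X(J_\xi))$, and by definition of the isomorphism $b$ the left-hand side is exactly $b([X,\xi_M])$. Since $b$ is an isomorphism, $[X,\xi_M]=0$ iff $d(X(J_\xi))=0$, and because $M$ is connected this holds iff $X(J_\xi)$ is a constant $l_\xi$. This establishes the first equivalence.

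For the semisimple refinement, note that $\xi\mapsto l_\xi$ is a linear functional on $\mathfrak{g}$ (since $\xi\mapsto J_\xi$ and $\xi\mapsto\xi_M$ are linear), so it is enough to show that $l$ annihilates all commutators. Assuming the symmetry holds for every $\xi$, Theorem~\ref{Sigma}(2) gives $J_{[\alpha,\beta]}=\{J_\alpha,J_\beta\}_{\Omega,\vec{\lambda}}-c$ with $c$ constant; applying $X$ and using \eqref{Xg} to write $\{J_\alpha,J_\beta\}_{\Omega,\vec{\lambda}}=\beta_M(J_\alpha)$ (recall $X_{J_\beta}=\beta_M$), together with $X(J_\alpha)=l_\alpha$ constant and $[X,\beta_M]=0$, I obtain $X(J_{[\alpha,\beta]})=[X,\beta_M](J_\alpha)+\beta_M(l_\alpha)=0$, i.e.\ $l_{[\alpha,\beta]}=0$. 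Since $\mathfrak{g}$ is semisimple, $\mathfrak{g}=[\mathfrak{g},\mathfrak{g}]$, so linearity forces $l\equiv 0$; the converse ($X(J_\xi)=0\Rightarrow[X,\xi_M]=0$) is immediate from the identity of the previous paragraph.

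The main obstacle I anticipate is the self-referential appearance of $[X,\xi_M]$ inside the contraction of $\mathcal{L}_X\Omega$: the term $\sum_i(i_{[X,\xi_M]}\lambda_i)\lambda_i$ surfaces on the right before one recognises that, together with $i_{[X,\xi_M]}\Omega$, it reassembles precisely into $b([X,\xi_M])$. Getting the signs in the interior-product and Cartan manipulations correct, and confirming that $X_{J_\alpha}=\alpha_M$ so that the derivation identity for $X(\{J_\alpha,J_\beta\}_{\Omega,\vec{\lambda}})$ closes up, are the points that require care.
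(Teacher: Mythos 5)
Your proposal is correct and takes essentially the same route as the paper: both proofs establish the key identity $i_{[X,\xi_M]}\Omega+\sum_{i=1}^q\bigl(i_{[X,\xi_M]}\lambda_i\bigr)\lambda_i=d\bigl(X(J_\xi)\bigr)$, i.e.\ $b([X,\xi_M])=d\bigl(X(J_\xi)\bigr)$ (your $[X,\xi_M]=\nabla\bigl(X(J_\xi)\bigr)$), via Cartan calculus, the local-gradient condition, and $i_{\xi_M}\lambda_i=0$, and then conclude from the injectivity of $b$ and the connectedness of $M$. The only variation is in the semisimple refinement, where the paper re-derives $l_{[\xi,\eta]}=0$ by a direct Cartan computation using $[\xi,\eta]_M=-[\xi_M,\eta_M]$, whereas you obtain the same vanishing on commutators by citing Theorem~\ref{Sigma}(2) together with the derivation identity and $[X,\beta_M]=0$ --- a legitimate shortcut that closes the argument identically via $\mathfrak{g}=[\mathfrak{g},\mathfrak{g}]$.
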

\begin{proof}
	Let vector field $X$ be a local gradient vector field, i.e., it satisfies 
	$$ d\left(i_X\Omega+\sum_{i=1}^q\lambda_i(X)\lambda_i\right)=0.$$
	Then we know that 
	\begin{align*}
		dX(J_\xi)&=di_Xi_{\xi_M}\Omega\\
        &=L_X(i_{\xi_M}\Omega)-i_Xdi_{\xi_M}\Omega\\
        &=L_X(i_{\xi_M}\Omega)\\
		&=i_{[X,\xi_M]}\Omega+i_{\xi_M}L_X\Omega\\
		&=i_{[X,\xi_M]}\Omega-i_{\xi_M}d\left(\sum_{i=1}^q(i_X\lambda_i)\lambda_i\right)\\
		&=i_{[X,\xi_M]}\Omega-i_{\xi_M}d\left(\sum_{i=1}^q(i_X\lambda_i)\lambda_i\right)-di_{\xi_M}\left(\sum_{i=1}^q(i_X\lambda_i)\lambda_i\right)\\
		&=i_{[X,\xi_M]}\Omega-L_{X_M}\left(\sum_{i=1}^q(i_X\lambda_i)\lambda_i\right)\\
			&=i_{[X,\xi_M]}\Omega+\left(\sum_{i=1}^q(i_{[X,X_M]}\lambda_i)\lambda_i\right),
	\end{align*}
	which means that $[X,\xi_M]=0$ iff $dX(J_\xi)=0.$
	
	Since the map $\mathfrak{g}\rightarrow\mathfrak{X}(M),\xi\rightarrow\xi_M$ is a Lie algebra anti-homomorphism, i.e., $[\xi,\eta]_M=-[\xi_M,\eta_M]$,
	given any $\xi,\eta\in\mathfrak{g},$ one can easily calculate that 
	\begin{align*}
		l_{[\xi,\eta]}&=X(J_{[\xi,\eta]})=L_X(J_{[\xi,\eta]})\\
		&=i_XdJ_{[\xi,\eta]}=i_Xi_{[\xi,\eta]_M}\Omega\\
		&=-i_Xi_{[\xi_M,\eta_M]}\Omega\\
		&=-i_XL_{\xi_M}i_{\eta_M}\Omega\\
		&=-i_Xd\xi_M(J_\eta)\\
		&=-L_X(i_{\xi_M}dJ_\eta)\\
		&=-i_{[X,\xi_M]}dJ_\eta-i_{\xi_M}d(X(J_\eta))\\
		&=-i_{[X,\xi_M]}dJ_\eta-i_{\xi_M}dl_\eta=0,
	\end{align*}
	as $[X,\xi_M]=0$ and $l_\eta$ is a constant.
	From this, we infer that in the case where $\mathfrak{g}$ is such that $[\mathfrak{g},\mathfrak{g}]=\mathfrak{g}$, in particular, when $G$ is a semisimple Lie group, the constants $l_\xi,\xi\in\mathfrak{g}$ defined in this theorem are all zero.
\end{proof}

\section{Application: Fast-slow dynamical system}

A \emph{fast--slow system} is a dynamical system with state variables evolving on widely separated time scales. 
In its simplest form, it is written as
\[
\begin{cases}
\dot{x} = f(x,y,\varepsilon), & x \in \mathbb{R}^m \quad \text{(fast variables)}, \\[6pt]
\dot{y} = \varepsilon\, g(x,y,\varepsilon), & y \in \mathbb{R}^n \quad \text{(slow variables)},
\end{cases}
\qquad 0<\varepsilon\ll 1.
\]
Here the \emph{fast variables} $x$ evolve on the natural time $t$, while the \emph{slow variables} $y$ drift only at rate $\mathcal{O}(\varepsilon)$ and reveal their dynamics over long times $t=\mathcal{O}(\varepsilon^{-1})$. 
At $\varepsilon=0$, the slow variables freeze and the fast subsystem decouples. 
Over finite but small $\varepsilon$, averaging methods and adiabatic invariants provide an effective description of the slow drift by replacing the fast oscillations with their mean influence.
Some examples of fast-slow systems appear in celestial mechanics: planets orbit quickly (fast angles), but what matters for centuries are the slow drifts of orbital elements (eccentricity, inclination), or in plasma physics: charged particles gyrate rapidly around magnetic field lines, but the slow drift of their guiding centers determines transport and confinement.

Here we will depict a classical example of fast-slow system, which is a harmonic oscillator with a varying frequency and an extra perturbative term.

\subsection{Fast-slow harmonic oscillator}
Consider the manifold $M = \mathbb{R}_t \times \mathbb{R}_\tau \times T^*Q_f \times T^*Q_s$ with coordinates \((t,\tau; q,p; Q,P)\). We define the forms:
\[
\lambda_1 = dt, \quad \lambda_2 = d\tau, \quad 
\Omega = dq \wedge dp + dQ \wedge dP.
\]
Then
\[
d\lambda_i = 0, \quad d\Omega = 0,
\]
the Reeb vector fields are \(R_1 = \partial_t\), \(R_2 = \partial_\tau\),
\[
\ker \Omega = \mathrm{span}\{R_1, R_2\}, \quad
\xi = \ker \lambda_1 \cap \ker \lambda_2,
\]
and \(\Omega|_{\xi}\) is symplectic. 
Trivially, \(d\lambda_1=d^2t=0\), \(d\lambda_2=d^2\tau=0\), and
\(d\Omega=d(dq\wedge dp + dQ\wedge dP)=0\). Realize that \(\dim T^*Q_f=\dim T^*Q_s=2\), so that \(\dim M = 2n+q=4+2=6\) with \(n=2\) and \(q=2\). Computing
\begin{align*}
  \Omega^{\,n}\wedge \lambda_1\wedge\lambda_2
  &= \Omega^{\,2}\wedge dt\wedge d\tau \\
  &= (dq\wedge dp + dQ\wedge dP)^{\!2}\wedge dt\wedge d\tau\\
  &= 2\; dq\wedge dp\wedge dQ\wedge dP\wedge dt\wedge d\tau \,\neq\,0,
\end{align*}
which is a nowhere-vanishing volume form. Hence \((M,\Omega,\lambda_1,\lambda_2)\) is 2--cosymplectic. Further, see that the Reeb fields \(R_1,R_2\) are defined by
\begin{equation}
  \iotaop_{R_i}\Omega=0, \qquad \lambda_j(R_i)=\delta_{ij},\quad i,j\in\{1,2\}.
  \label{eq:ReebDef}
\end{equation}
Writing a generic vector field as
\(X = a\,\partial_t + b\,\partial_{\tau} + u\,\partial_q + v\,\partial_p + U\,\partial_Q + V\,\partial_P\), we have \(\iotaop_X\Omega = u\,dp - v\,dq + U\,dP - V\,dQ\).
The equations \(\iotaop_{R_i}\Omega=0\) impose \(u=v=U=V=0\). The conditions on the \(\lambda\)'s then give
\[
  R_1=\partial_t,\qquad R_2=\partial_{\tau}.
\]
They commute and preserve the structure: \([R_1,R_2]=0\), \(\mathcal{L}_{R_i}\Omega=0\), \(\mathcal{L}_{R_i}\lambda_j=0\).
We choose the Hamiltonian
\begin{equation}\label{fastslowH}
H(q,p,Q,P,\tau) = \frac12 p^2 + \frac12 \,\omega(Q)^2 \, q^2
+ \varepsilon \, V(q,Q,P,\tau),
\end{equation}
which is non-autonomous via \(\tau\). This Hamiltonian describes a harmonic oscillator with frequency depending on a slow coordinate \(Q\). It is a textbook model in slow--fast Hamiltonian theory and adiabatic invariants \footnote{See, for example, V.I.~Arnold, \emph{Mathematical Methods of Classical Mechanics}
(Springer GTM~60, 2nd ed., 1989), \S50, Example~1, where \(\omega(Q)\) plays the role
of a slowly varying frequency.}. The perturbation term \(\varepsilon\,V(q,Q,P,\tau)\) is the usual small coupling or modulation 
term that depends on the slow variables and possibly explicitly on the slow time \(\tau\) \footnote{Similar forms appear in J.A.~Sanders, F.~Verhulst, J.~Murdock, 
\emph{Averaging Methods in Nonlinear Dynamical Systems} 
(Springer, 2nd ed., 2007), Chapter~5.}.

Now, let us depict the dynamics. For any \(f\in C^\infty(M)\), the 2--cosymplectic Hamiltonian vector field \(X_f\) is uniquely specified by
\begin{equation}
  \iotaop_{X_f}\Omega = df - (R_1 f)\,\lambda_1 - (R_2 f)\,\lambda_2,
  \qquad \iotaop_{X_f}\lambda_i=0\ (i=1,2).
  \label{eq:XfDef}
\end{equation}
Because \(\iotaop_{X_f}\lambda_i=0\), the \(t\) and \(\tau\) components of \(X_f\) vanish. Writing
\(X_f = u\,\partial_q + v\,\partial_p + U\,\partial_Q + V\,\partial_P\) we have
\(\iotaop_{X_f}\Omega = u\,dp - v\,dq + U\,dP - V\,dQ\). Comparing with the right-hand side of \eqref{eq:XfDef} gives the canonical formulas
\begin{equation}
  u = \pdv{f}{p},\quad v = -\pdv{f}{q},\quad U = \pdv{f}{P},\quad V = -\pdv{f}{Q}.
  \label{eq:XfComponents}
\end{equation}
Note the \(t,\tau\)-derivatives of \(f\) only appear along \(\lambda_i\) on the right of \eqref{eq:XfDef}; they do not contribute to components of \(X_f\) because \(\iotaop_{X_f}\lambda_i=0\). So, we can write the Hamiltonian vector field in coordinates as:

\begin{equation}
  X_H 
  = \pdv{H}{p}\,\partial_q - \pdv{H}{q}\,\partial_p
    + \pdv{H}{P}\,\partial_Q - \pdv{H}{Q}\,\partial_P. 
  \label{eq:VecXH}
\end{equation}
The \emph{2--evolution} field is then
\begin{equation}
  E_{a,b}(H) = a\,\partial_t + b\,\partial_{\tau} + X_H.
  \label{eq:EabAgain}
\end{equation}
If we take \((\cdot)\dot{}=\dv*{}{s}\) along the integral curves of \(E_{a,b}(H)\), then
\begin{equation}
  \dot t=a,\qquad \dot\tau=b,
  \qquad \dot q=\pdv{H}{p},\quad \dot p=-\pdv{H}{q},\quad
  \dot Q=\pdv{H}{P},\quad \dot P=-\pdv{H}{Q}.
  \label{eq:EOM}
\end{equation}
With the concrete choice \(a=1\), \(b=\varepsilon\) and Hamiltonian \eqref{fastslowH}, these become
\begin{align}
  &\dot t=1,\qquad \dot\tau=\varepsilon, \\
  &\dot q = p,\qquad 
   \dot p = -\omega(Q)^2 q - \varepsilon\,\pdv{V}{q}, \\
  &\dot Q = \varepsilon\,\pdv{V}{P},\qquad 
   \dot P = -\tfrac12\,\pdv{}{Q}\big(\omega(Q)^2\,q^2\big) - \varepsilon\,\pdv{V}{Q},
   \\
&E_{\varepsilon}:=E_{1,\varepsilon}(H)=\partial_t+\varepsilon\partial_\tau+X_H.\label{eq:EOMExpanded}
\end{align}
These are the multi-time Hamilton equations with fast clock \(t\) and slow clock \(\tau\). Also, note that in the \(q\)-cosymplectic setting, the evolution vector field \(E_\varepsilon\) fulfills
\[
\iota_{E_\varepsilon} \lambda_1 = 1, \quad
\iota_{E_\varepsilon} \lambda_2 = \varepsilon, \quad
\iota_{E_\varepsilon} \Omega =
dH - R_1(H) \lambda_1 - R_2(H) \lambda_2.
\]
as it is expected from the definition of the $2$-cosymplectic structure.

\subsubsection*{Hamilton equations (in fast time \(t\)).} Using \(\dot{}\) to denote \(d/dt\) (so \(\dot{\tau} = \varepsilon\)), we obtain:
\[
\dot{q} = \frac{\partial H}{\partial p} = p, \quad
\dot{p} = -\frac{\partial H}{\partial q} = -\omega(Q)^2 q - \varepsilon\,\partial_q V,
\]
\[
\dot{Q} = \frac{\partial H}{\partial P} = \varepsilon\,\partial_P V, \quad
\dot{P} = -\frac{\partial H}{\partial Q} = -\omega(Q)\omega'(Q)\,q^2 - \varepsilon\,\partial_Q V,
\]
and
\[
\dot{t} = 1, \quad \dot{\tau} = \varepsilon, \quad
\partial_t H = 0, \quad \partial_\tau H = \varepsilon\,\partial_\tau V.
\]

\noindent
See that the pair \((q,p)\) is a fast oscillator with frequency \(\omega(Q)\), slowly modulated by the slow variables \((Q,P)\).
The slow variables drift on the \(t\)-scale with \(\mathcal{O}(\varepsilon)\) velocity from \(V\), and have an \(\mathcal{O}(1)\) back-reaction term \(-\omega\omega'\,q^2\) from the fast subsystem.

\subsubsection*{Singular limit and averaging}
As \(\varepsilon \to 0\), \(Q\) and \(P\) freeze and \((q,p)\) evolves according to
\[
\ddot{q} + \omega(Q)^2 q = 0.
\]
Define the fast action of the harmonic oscillator:
\[
I = \frac{p^2 + \omega(Q)^2 q^2}{2\,\omega(Q)}.
\]

For fixed \( Q \), the pair \((q,p)\) can be written in action–angle coordinates as
\[
q = \sqrt{\tfrac{2I}{\omega(Q)}} \, \sin \theta, 
\qquad 
p = \sqrt{2I \, \omega(Q)} \, \cos \theta.
\]
Hence, over one fast cycle (i.e., as \(\theta\) goes from \(0\) to \(2\pi\)), the average is
\[
\langle q^2 \rangle = \frac{1}{2\pi} \int_{0}^{2\pi} \frac{2I}{\omega(Q)} \sin^2 \theta \, d\theta
= \frac{I}{\omega(Q)}.
\]

Averaging over the fast cycle and using the result 
\(\langle q^{2} \rangle = I / \omega(Q)\), we obtain
\[
\langle \dot{Q} \rangle = \varepsilon \,\langle \partial_P V \rangle, \quad
\langle \dot{P} \rangle = -\omega'(Q)\,I - \varepsilon\,\langle \partial_Q V \rangle.
\]
Thus, the leading slow drift is \(-\omega'(Q)I\) (adiabatic back-reaction), with \(\mathcal{O}(\varepsilon)\) corrections from \(V\).

\subsubsection*{Hamilton equations (in slow time \(\tau\)).}
If one prefers to evolve in slow time \(\tau\) instead, divide by \(\dot{\tau} = \varepsilon\):
\[
\frac{dq}{d\tau} = \frac{p}{\varepsilon}, \quad
\frac{dp}{d\tau} = -\frac{\omega(Q)^2}{\varepsilon}q - \partial_q V, \quad
\frac{dQ}{d\tau} = \partial_P V, \quad
\frac{dP}{d\tau} = -\frac{\omega(Q)\omega'(Q)}{\varepsilon}q^2 - \partial_Q V.
\]
This makes the fast/slow separation explicit in \(\tau\).

\subsubsection{Averaging in the $q$--cosymplectic setting and focus on slow drift}

At $\varepsilon=0$ and with $(Q,P,\tau)$ frozen, the horizontal Hamiltonian is the harmonic oscillator with a parameter--dependent frequency $\omega(Q)$,
\begin{equation}
H_0(q,p;Q) = \frac12\,p^2 + \frac12\,\omega(Q)^2\,q^2,
\label{eq:HO-param}
\end{equation}
which is an integrable $1$--dof system. A canonical horizontal change to fast action--angle variables $(I,\theta)$ is defined by
\begin{equation}
I = \frac{p^2+\omega(Q)^2 q^2}{2\omega(Q)},
\qquad
\theta = \arg\!\big(\omega(Q) q + i p\big),
\label{eq:Itheta-qcosympl}
\end{equation}
which has the inverse map 
\begin{equation}
q = \sqrt{\frac{2I}{\omega(Q)}}\,\sin\theta, 
\quad 
p = \sqrt{2I\,\omega(Q)}\,\cos\theta.
\label{eq:qp-from-Itheta}
\end{equation}
A direct computation shows that, modulo terms in $\Span\{dQ\}$, 
\[
dq\wedge dp = dI\wedge d\theta,
\]
and, after the standard horizontal Moser correction \cite{DeLeonMarrero2010}, the full $2$--form becomes
\begin{equation}
\omega = dI\wedge d\theta + dQ\wedge dP
\label{eq:omega-canonical-horiz}
\end{equation}
in adapted Liouville coordinates. The Reeb fields remain $R_1=\partial_t$ and $R_2=\partial_\tau$.

In these coordinates, $H_0=\omega(Q)\,I$, so the horizontal Hamilton equations at $\varepsilon=0$ are
\[
\dot{\theta} = \partial_I H_0 = \omega(Q), 
\quad
\dot{I} = -\partial_\theta H_0 = 0,
\]
showing that $\theta$ evolves quickly at frequency $\omega(Q)$, while $I$ is conserved. For $\varepsilon>0$,
\[
H(I,\theta,Q,P,\tau) = \omega(Q)\,I + \varepsilon\,V\big(q(I,\theta,Q),Q,P,\tau\big),
\]
and the horizontal part of $E_{1,\varepsilon}(H)$ is
\begin{equation}
\dot{\theta} = \omega(Q) + \varepsilon\,\partial_I V,
\qquad 
\dot{I} = -\varepsilon\,\partial_\theta V,
\label{eq:Itheta-eqs}
\end{equation}
while the slow variables $(Q,P)$ evolve with
\[
\dot{Q} = \partial_P H, \quad \dot{P} = -\partial_Q H.
\]
The vertical part of $E_{1,\varepsilon}(H)$ is \[\dot{\tau} = \varepsilon, \quad \dot{t}=1.\]

The averaging method (see \cite{ArnoldKozlovNeishtadt, SandersVerhulstMurdock}) can be  applied in this $q$--cosymplectic setting because:  
(i) the Reeb flows $R_1,R_2$ are untouched by horizontal canonical transformations,  
(ii) the horizontal dynamics $(I,\theta,Q,P)$ is symplectic with form \eqref{eq:omega-canonical-horiz},  
(iii) averaging over the fast angle $\theta$ commutes with the projection $\mathrm{pr}:(\widehat M,\widehat\omega)\to(M,\omega,\eta^1,\eta^2)$ in the symplectization.  

We define the $\theta$--average of a function $F(I,\theta,Q,P,\tau)$ by
\[
\langle F\rangle(I,Q,P,\tau) := \frac{1}{2\pi} \int_0^{2\pi} F(I,\theta,Q,P,\tau)\,d\theta,
\]
and the averaged Hamiltonian
\begin{equation}
H_{\mathrm{av}}(I,Q,P,\tau) := \omega(Q)\,I + \varepsilon\,\langle V\rangle(I,Q,P,\tau).
\label{eq:Hav-qcosympl}
\end{equation}
By the averaging theorem \cite[Thm.~3.2]{ArnoldKozlovNeishtadt}, there exists a near--identity horizontal canonical transformation (fixing $t,\tau$) that removes all $\theta$--dependence of $H$ up to $O(\varepsilon)$, and such that over times $t=O(\varepsilon^{-1})$,
\[
(I,Q,P) = (I_{\mathrm{av}},Q_{\mathrm{av}},P_{\mathrm{av}}) + O(\varepsilon),
\]
where $(I_{\mathrm{av}},Q_{\mathrm{av}},P_{\mathrm{av}})$ evolve under the \emph{averaged $q$--evolution}:
\[
E_{1,\varepsilon}(H_{\mathrm{av}}) = R_1 + \varepsilon R_2 + X_{H_{\mathrm{av}}}.
\]
In particular,
\[
\dot{I}_{\mathrm{av}} = 0 + O(\varepsilon^2),\qquad 
\dot{\theta}_{\mathrm{av}} = \omega(Q_{\mathrm{av}}) + \varepsilon\,\partial_I\langle V\rangle + O(\varepsilon^2),
\]
so $I$ is an \emph{adiabatic invariant} at first order. The slow drift is governed by
\begin{equation}
\dot{Q} = \partial_P H_{\mathrm{av}}=\epsilon \partial_P\langle V\rangle(I,Q,P,\tau),\qquad 
\dot{P} = -\partial_Q H_{\mathrm{av}}=-\omega'(Q)I-\epsilon \partial_Q\langle V\rangle(I,Q,P,\tau).
\label{eq:slow-drift}
\end{equation}
The vertical part of $E_{1,\varepsilon}(H)$ is still \[\dot{\tau} = \varepsilon, \quad \dot{t}=1.\]

From the geometric viewpoint, averaging in the $q$--cosymplectic setting is natural because it leaves the vertical Reeb sector $(t,\tau)$ unchanged, modifies only the horizontal Hamiltonian vector field $X_H$, and respects the product structure inherent in $\xi$. In applications, the slow subsystem \eqref{eq:slow-drift} is the physically relevant reduced model: the fast oscillations in $\theta$ average out in observables, but the slow drift in $(Q,P)$ accumulates over long times $O(\varepsilon^{-1})$, making it dominant in the macroscopic behaviour.

\subsubsection{Symmetries and reduction}

The flows of \(R_1=\partial_t\) and \(R_2=\partial_{\tau}\) are strict 2--cosymplectic automorphisms: they preserve \(\Omega\) and leave \(\lambda_i\) fixed. They do not change the horizontal dynamics (they play the role of `clocks`).

\paragraph{Case A. Constant frequency \(\omega(Q)\equiv \omega_0\neq0\)}
Let \(G=S^1\) act by elliptic rotations in the \((q,p)\) plane:
\begin{equation}
  \Phi_\varphi(q,p,Q,P,t,\tau) = \big(\ p\cos(\omega_0\varphi) - 1/\omega_0q\sin(\omega_0\varphi),q\cos(\omega_0\varphi) + \omega_0p\sin(\omega_0\varphi),\ Q,P,t,\tau\big).
\end{equation}
Then \(\Phi_\varphi^*(dq\wedge dp)=dq\wedge dp\), so \(\Omega\) and the \(\lambda_i\) are invariant, i.e. \(\Phi_\varphi\) acts by 2--cosymplectomorphisms. The infinitesimal generator is
\(\xi_M = p\,\partial_q -\omega^2_0 q\,\partial_p\). Compute
\[
  \iotaop_{\xi_M}\Omega = \iotaop_{\xi_M}(dq\wedge dp) = p\,dp + \omega_0^2q\,dq = d\Big(\tfrac{1}{2}(q^2+\omega_0^2p^2)\Big).
\]
Thus the (restricted) momentum map \(J: M\to\mathfrak{g}^*\cong\R\) can be taken as
\begin{equation}
  J(q,p,\cdot) = \tfrac{1}{2}(q^2+\omega_0^2p^2) \qquad (\text{defined up to an additive constant}).
\end{equation}
Then, one may perform Marsden--Weinstein Theorem \ref{MWT} at any regular value \(J=c>0\). The level set \(J^{-1}(c)\) is a circle in each \((q,p)\)-fiber; quotienting by \(S^1\) collapses the fast angle and yields the reduced space with coordinates \((Q,P; t,\tau)\) and reduced structure
\begin{equation}
  \Omega_{\mathrm{red}} = dQ\wedge dP, \qquad \lambda_{1,\mathrm{red}} = dt, \qquad \lambda_{2,\mathrm{red}} = d\tau.
\end{equation}
It is clear that the unperturbed energy $H_0=\tfrac{1}{2}p^2+\tfrac{1}{2}\omega_0^2 q^2$ is \(G\)-invariant (i.e. $L_{\xi_M}H_0=0$) Thus, according to the Proposition \ref{PXH}, we know that $$E_{H_0}(J)=\nabla H_0(J)=X_{H_0}(J)=0.$$

We can also see that, the perturbed energy \eqref{fastslowH} is \(G\)-invariant (i.e. $L_{\xi_M}H=0$) if and only if it has the form $$H=\tfrac{1}{2}p^2+\tfrac{1}{2}\omega_0^2 q^2 + \varepsilon\,\tilde V(Q,P,\tau).$$ 
In this case, 
 according to the Proposition \ref{PXH} again, we know that $$E_H(J)=\nabla H(J)=X_H(J)=0.$$ 
Moreover, on the reduced space, $H$ can also be reduced as 
\begin{equation}
  H_{\mathrm{red}}(Q,P,\tau; c) = \omega_0I + \varepsilon\,\tilde V\big(Q,P,\tau\big),
  \qquad I = \frac{c}{\omega_0}.
\end{equation}
In order to illustrate the coexistence of fast and slow dynamics, we simulate a coupled Hamiltonian-type system with two pairs of canonical variables: $(q,p)$ representing the fast oscillator, and $(Q,P)$ describing the slow degrees of freedom. The fast subsystem evolves according to a harmonic oscillator with natural frequency $\omega_{0}=1$, while the slow subsystem is driven by a small perturbation of order $\varepsilon=0.05$. The system is integrated numerically using \texttt{ode45} over a long time horizon $T_{\max}=200$, starting from the initial conditions $(q(0),p(0),Q(0),P(0))=(1,0,1,0)$. This setup allows us to resolve both the rapid oscillations of $q(t)$ and the slow drift of $Q(t)$, and to compare their respective phase portraits in the $(q,p)$ and $(Q,P)$ planes.

\begin{figure}[htbp]
    \centering
    \includegraphics[width=0.4\textwidth]{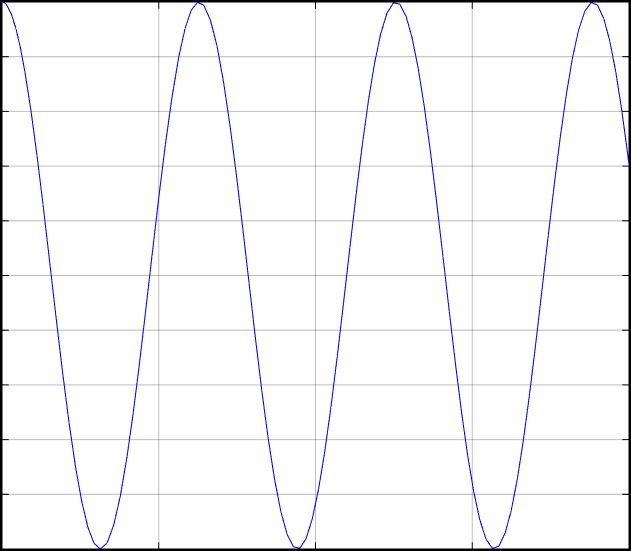}
    \caption{Fast oscillations of the coordinate $q(t)$ shown over a short time window. 
    This illustrates the rapid dynamics at the fast timescale.}
    \label{fig:fs1}
\end{figure}

\begin{figure}[htbp]
    \centering
    \begin{minipage}{0.48\textwidth}
        \centering
        \includegraphics[width=\linewidth]{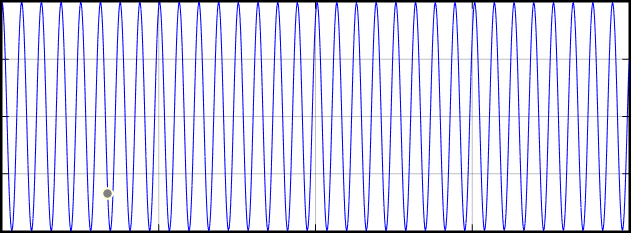}
        \caption*{(a) Fast oscillations $q(t)$ over a long time horizon.}
    \end{minipage}\hfill
    \begin{minipage}{0.48\textwidth}
        \centering
        \includegraphics[width=\linewidth]{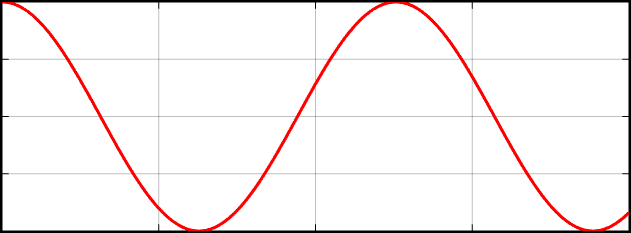}
        \caption*{(b) Slow drift $Q(t)$ over the same time scale.}
    \end{minipage}
    \caption{Comparison of fast and slow time dynamics. 
    (a) The fast variable $q(t)$ oscillates rapidly and appears densely packed over long times. 
    (b) The slow variable $Q(t)$ evolves gradually with velocity of order $\mathcal{O}(\varepsilon)$.}
    \label{fig:fs2combo}
\end{figure}

\begin{figure}[htbp]
    \centering
    \begin{minipage}{0.48\textwidth}
        \centering
        \includegraphics[width=0.7\linewidth]{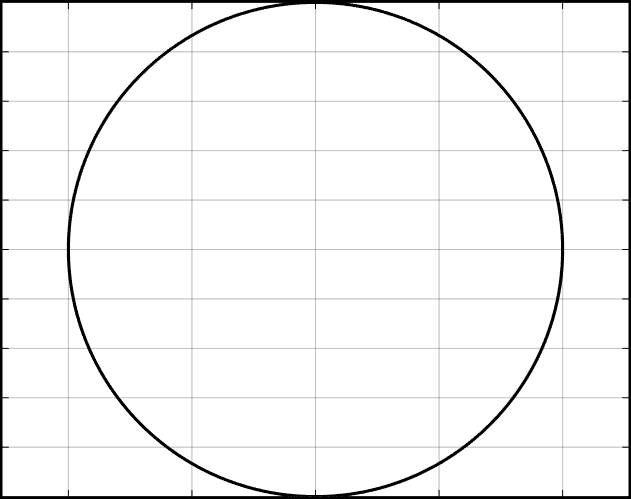}
        \caption*{(a) Slow phase portrait in $(Q,P)$.}
    \end{minipage}\hfill
    \begin{minipage}{0.48\textwidth}
        \centering
        \includegraphics[width=0.7\linewidth]{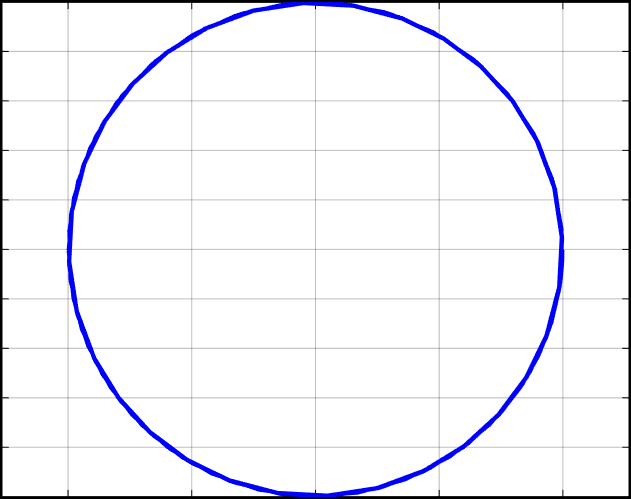}
        \caption*{(b) Fast phase portrait in $(q,p)$.}
    \end{minipage}
    \caption{Geometric representation of the two timescales. 
    (a) The slow subsystem $(Q,P)$ evolves on a circle with frequency proportional to $\varepsilon$. 
    (b) The fast subsystem $(q,p)$ evolves on a circle with frequency $\omega_0$.}
    \label{fig:fs34}
\end{figure}

\begin{figure}[!ht]
    \centering
    
    \begin{subfigure}{0.9\textwidth}
        \centering
        \includegraphics[width=\linewidth]{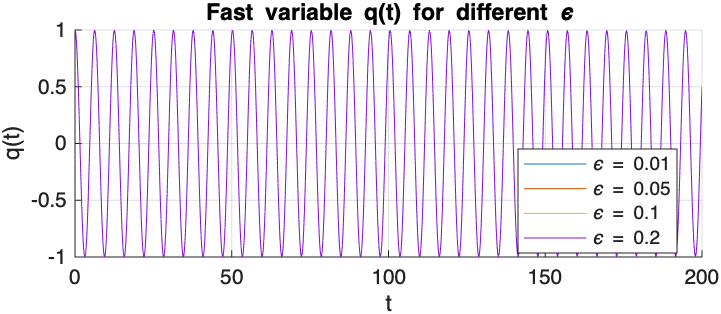}
        \caption{Fast variable $q(t)$ for different $\varepsilon$.}
        \label{fig:q_epsilon}
    \end{subfigure}
    
    \vspace{0.5cm} 
    
    \begin{subfigure}{0.9\textwidth}
        \centering
        \includegraphics[width=\linewidth]{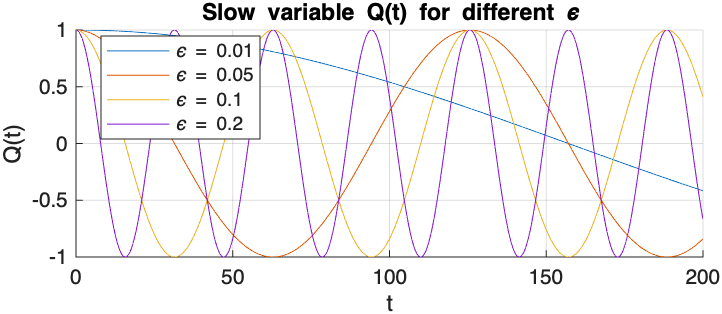}
        \caption{Slow variable $Q(t)$ for different $\varepsilon$.}
        \label{fig:Q_epsilon}
    \end{subfigure}
    
    \caption{Comparison of the fast and slow variables for different values of $\varepsilon$.}
    \label{fig:epsilon_comparison}
\end{figure}

Right above we are plotting the oscillations of both fast and slow variables. In the top panel, the oscillations of $q(t)$ occur essentially at the same frequency $\omega_0$, 
independent of $\varepsilon$. For small time windows, all curves look almost identical. 
The only difference is a very slow modulation of the oscillation amplitude, which becomes 
more visible as $\varepsilon$ increases. Thus, $q(t)$ represents the fast scale of the system 
and is only weakly affected by $\varepsilon$.

In the bottom panel, the dynamics of $Q(t)$ are directly governed by $\varepsilon$. For small values 
(e.g. $\varepsilon = 0.01$), $Q(t)$ changes very little over the entire interval, appearing almost constant. As $\varepsilon$ grows (e.g. $\varepsilon = 0.1, 0.2$), 
the oscillations of $Q(t)$ become faster and more pronounced. In fact, the timescale of these slow oscillations is proportional to $1/\varepsilon$, so doubling $\varepsilon$ 
doubles the speed of the drift in $(Q,P)$.

This makes clear the typical behavior of a fast--slow system: a fast carrier oscillation 
modulated by a slow envelope that depends on $\varepsilon$.

\paragraph{Case B. Variable frequency \(\omega=\omega(Q)\neq0\)}
The elliptic \((q,p)\)-rotation above no longer preserves the form \( \Omega=dq\wedge dp+dQ\wedge dP.\) We consider a rotation of the following form. 

Let \(G=S^1\) act by a rotations $\Phi_t$ in the \((q,p,Q,P,t,\tau)\) space:

\[
\begin{aligned}
q &\;\longrightarrow\;
q\cos\big(\omega(Q)\,t\big)
+ \frac{p}{\omega(Q)}\sin\big(\omega(Q)\,t\big),\\[6pt]
p &\;\longrightarrow\;
p\cos\big(\omega(Q)\,t\big)
- \omega(Q)\,q\sin\big(\omega(Q)\,t\big),\\[6pt]
Q &\;\longrightarrow\;
Q,\\[6pt]
P &\;\longrightarrow\;
P - \omega(Q)\,\omega'(Q)
\int_{0}^{t}
\left(
q\cos\big(\omega(Q)\,s\big)
+ \frac{p}{\omega(Q)}\sin\big(\omega(Q)\,s\big)
\right)^{2}ds,\\[6pt]
t &\;\longrightarrow\; t,\\[6pt]
\tau &\;\longrightarrow\;\tau.
\end{aligned}
\]
Then \(\Phi_t^*(dq\wedge dp+dQ\wedge dP)=dq\wedge dp+dQ\wedge dP\), so \(\Omega\) and the \(\lambda_i\) are invariant, i.e. \(\Phi_t\) acts by 2--cosymplectomorphisms. The infinitesimal generator is
\(\xi_M =p\,\frac{\partial}{\partial q}
- \omega^{2}(Q)\,q\,\frac{\partial}{\partial p}
- \omega(Q)\,\omega'(Q)\,q^{2}\,\frac{\partial}{\partial P}\). Compute
\[
  \iotaop_{\xi_M}\Omega = \iotaop_{\xi_M}(dq\wedge dp+dQ\wedge dP) = p\,dp + \omega^2(Q)q\,dq +\omega(Q)\omega'(Q)q^2dQ= d\Big(\tfrac{1}{2}(q^2+\omega^2(Q)p^2)\Big).
\]
Thus the (restricted) momentum map \(J: M\to\mathfrak{g}^*\cong\R\) can be taken as
\begin{equation}
  J(q,p,\cdot) = \tfrac{1}{2}(q^2+\omega^2(Q)p^2) \qquad (\text{defined up to an additive constant}).
\end{equation}
Then, one may perform Marsden--Weinstein Theorem \ref{MWT} at any regular value \(J=c>0\).  The level set \(J^{-1}(c)\) defines a surface within each \((q,p,Q)\)-fiber. 
Since the coordinates \(t\) and \(\tau\) do not participate in this reduction, 
the reduced space still carries
\[
\lambda_{1,\mathrm{red}} = dt, 
\qquad 
\lambda_{2,\mathrm{red}} = d\tau.
\]

We can also clarify that the unperturbed energy $H_0=\tfrac{1}{2}p^2+\tfrac{1}{2}\omega^2(Q) q^2$ is \(G\)-invariant (i.e. $L_{\xi_M}H_0=0$) Thus, according to the Proposition \ref{PXH}, we know that $$E_{H_0}(J)=\nabla H_0(J)=X_{H_0}(J)=0.$$

Furthermore, it's clear that the perturbed energy \eqref{fastslowH} is \(G\)-invariant (i.e. $L_{\xi_M}H=0$) if and only if $$\omega(Q)\omega'(Q)q^2\frac{\partial V}{\partial P}=p\frac{\partial V}{\partial q}.$$ 
\begin{remark}
   When \(\omega(Q)\) is constant, the transformation and the conclusions discussed in Case~B 
automatically reduce to those in Case~A.
\end{remark}

Let us now plot the dynamics for Case B for some choice of oscillating frequencies and perturbation.

\begin{figure}[htbp]
    \centering
    \begin{subfigure}{0.48\textwidth}
        \centering
        \includegraphics[width=\linewidth]{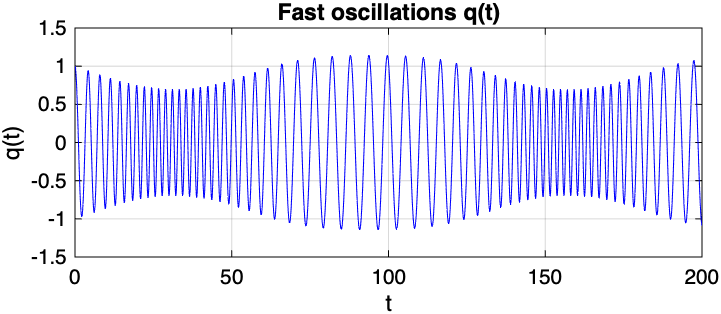}
        \caption{Fast oscillations $q(t)$.}
        \label{fig:caseb1}
    \end{subfigure}\hfill
    \begin{subfigure}{0.48\textwidth}
        \centering
        \includegraphics[width=\linewidth]{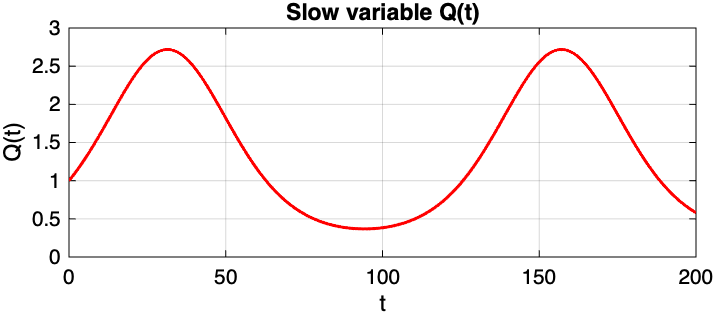}
        \caption{Slow variable $Q(t)$.}
        \label{fig:caseb2}
    \end{subfigure}
    \caption{Case B: fast vs slow dynamics in time domain.}
    \label{fig:caseb12}
\end{figure}

\begin{figure}[htbp]
    \centering
    \begin{subfigure}{0.48\textwidth}
        \centering
        \includegraphics[height=0.35\textheight,keepaspectratio]{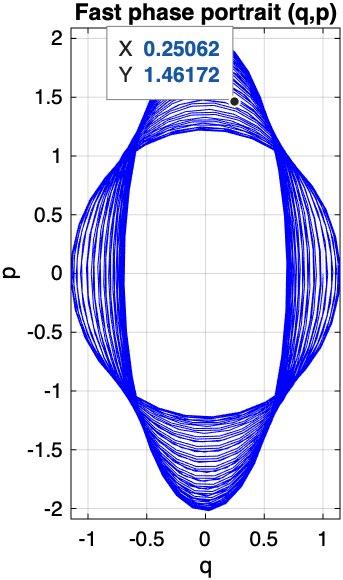}
        \caption{Fast phase portrait $(q,p)$.}
        \label{fig:caseb3}
    \end{subfigure}\hfill
    \begin{subfigure}{0.48\textwidth}
        \centering
        \includegraphics[height=0.35\textheight,keepaspectratio]{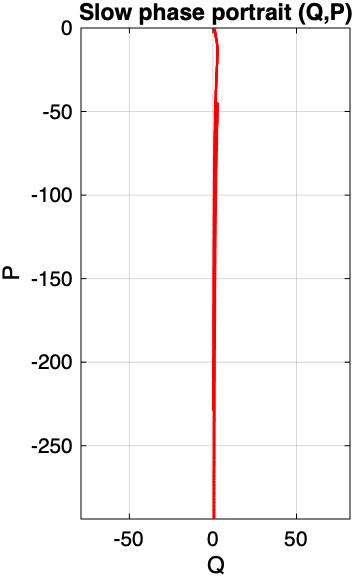}
        \caption{Slow phase portrait $(Q,P)$.}
        \label{fig:caseb4}
    \end{subfigure}
    \caption{Case B: phase portraits for fast and slow subsystems.}
    \label{fig:caseb34}
\end{figure}

\begin{figure}[htbp]
    \centering
    
    \begin{subfigure}{0.9\textwidth}
        \centering
        \includegraphics[width=\linewidth]{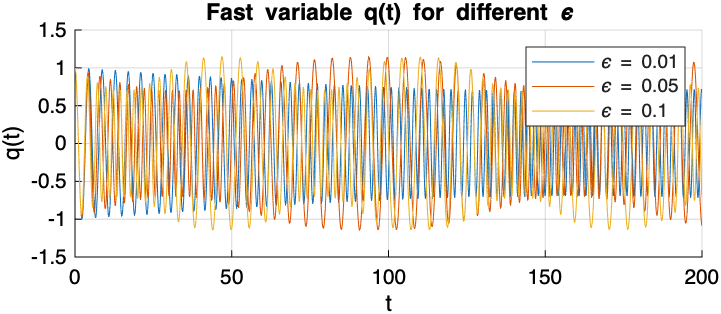}
        \caption{Fast variable $q(t)$ for different $\varepsilon$.}
        \label{fig:caseb5}
    \end{subfigure}
    
    \vspace{0.5cm} 
    
    \begin{subfigure}{0.9\textwidth}
        \centering
        \includegraphics[width=\linewidth]{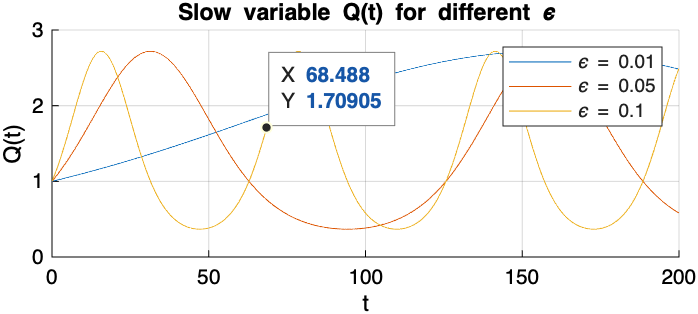}
        \caption{Slow variable $Q(t)$ for different $\varepsilon$.}
        \label{fig:caseb6}
    \end{subfigure}
    
    \caption{Case B: effect of varying $\varepsilon$ on fast and slow variables.}
    \label{fig:caseb56}
\end{figure}

The fast subsystem $(q,p)$ remains nearly elliptic, 
since $q(t)$ oscillates with frequency $\omega(Q)$. 
However, the slow subsystem $(Q,P)$ does not produce closed orbits: 
because of the nonlinear back-reaction term 
$-\omega(Q)\omega'(Q)q^2$, the trajectory in $(Q,P)$ 
is typically a distorted drift rather than an ellipse. 
This reflects the fundamental difference with Case~A 
where $\omega(Q)$ was constant. 

To further highlight the role of $\varepsilon$, 
we compare both the fast variable $q(t)$ 
and the slow variable $Q(t)$ across different values 
of $\varepsilon$ in the last figure. 
The oscillations of $q(t)$ remain fast and nearly unchanged, 
while the drift of $Q(t)$ accelerates proportionally to $\varepsilon$. 
Thus the slow evolution is the most sensitive indicator 
of the perturbation strength.

\paragraph{Averaged dynamics for Case B.} 
In order to illustrate the averaged system, we take the perturbation 
$V(q,Q,P,\tau) = QP\cos(\tau)$ with slow time $\tau = \varepsilon t$. 
After averaging over one fast cycle, the oscillatory terms disappear since 
$\langle \cos(\tau)\rangle = 0$. The resulting reduced equations are therefore 
\[
\dot Q = 0, \qquad \dot P = -\,\omega'(Q)\,I,
\]
where $I=\tfrac{1}{2}(p^2+\omega(Q)^2 q^2)/\omega(Q)$ is the fast action, 
which is conserved at order $\mathcal{O}(\varepsilon)$. 
Thus, the averaged dynamics are extremely simple: the coordinate $Q$ is frozen at 
its initial value $Q(0)=Q_0$, while the momentum $P$ undergoes a uniform drift with slope 
$-\omega'(Q_0)\,I$. In the $(Q,P)$ plane the trajectories collapse to vertical lines.

\vspace{0.3cm}

For numerical illustration we fix the initial condition 
$(q(0),p(0),Q(0),P(0))=(1,0,1,0)$, so that $Q_0=1$ and the fast action is 
$I \approx 0.5$. With the choice $\omega(Q) = \sqrt{1+Q^2}$, its derivative is 
$\omega'(Q)=Q/\sqrt{1+Q^2}$, and therefore the averaged slope is 
$\dot P \approx -\tfrac{1}{\sqrt{2}}\,I \approx -0.35$. 
Consequently, the averaged dynamics produce a straight descent of $P(t)$ 
with $Q(t)$ frozen. 

\begin{figure}[h!]
    \centering
    \begin{subfigure}{0.32\textwidth}
        \includegraphics[width=\linewidth]{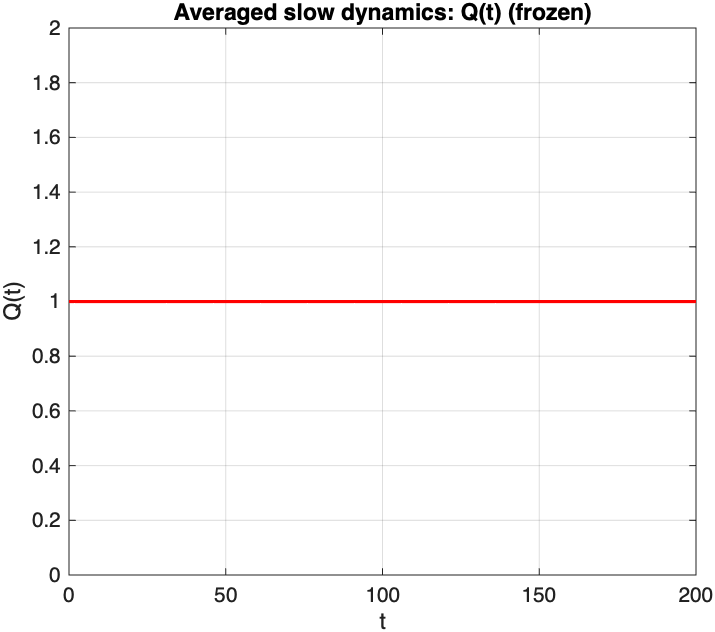}
        \caption{$Q(t)$ frozen}
    \end{subfigure}
    \hfill
    \begin{subfigure}{0.32\textwidth}
        \includegraphics[width=\linewidth]{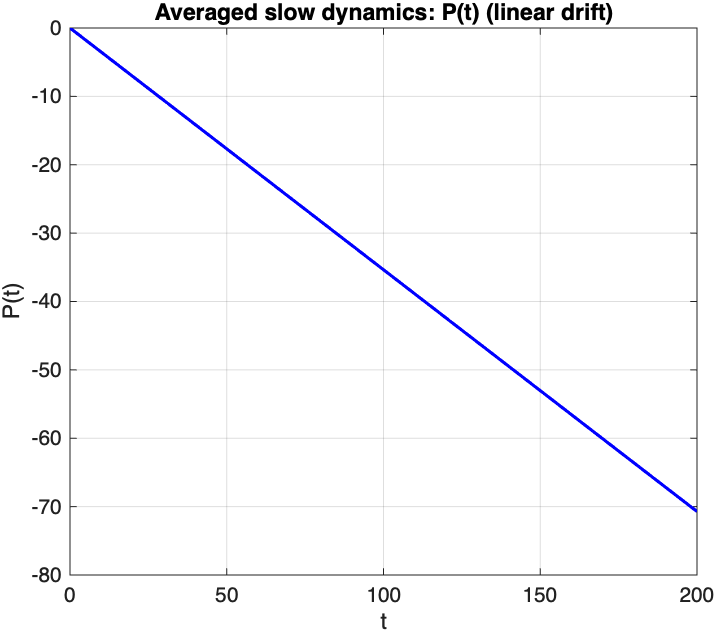}
        \caption{$P(t)$ drift}
    \end{subfigure}
    \hfill
    \begin{subfigure}{0.32\textwidth}
        \includegraphics[width=\linewidth]{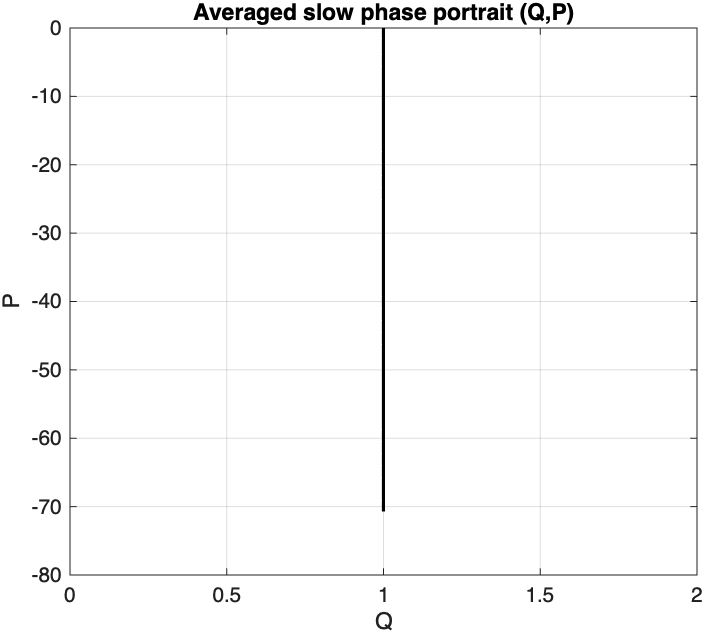}
        \caption{Phase portrait}
    \end{subfigure}
    \caption{Averaged slow dynamics: frozen $Q(t)$, linear drift in $P(t)$, and corresponding $(Q,P)$ phase portrait.}
\end{figure}

Note that in this averaged reduction, the small parameter $\varepsilon$ no longer 
appears explicitly, since the oscillatory forcing averages out. 
The only role of $\varepsilon$ is to control the validity timescale of the 
approximation: the averaged dynamics correctly approximate the full system only 
for times of order $t = \mathcal{O}(\varepsilon^{-1})$. 
In the full (non-averaged) dynamics, the slow variables $(Q,P)$ display oscillatory 
modulations whose frequency is proportional to $\varepsilon$, 
but in the averaged model these oscillations are suppressed, leaving only the 
leading linear drift. 
This contrast highlights the usefulness of averaging: it captures the 
dominant secular behaviour while discarding $\mathcal{O}(\varepsilon)$ fluctuations.

\section*{Acknowledgements}
Cristina Sardón acknowledges Programa Propio of Universidad Politécnica de Madrid for the granting of financial support for research purposes at UCSD the summer of 2025 and gratefully acknowledges Professor Melvin Leok for hosting him at UC San Diego and for his support during this work. Xuefeng Zhao gratefully acknowledges the support from the National Natural Science Foundation of China (Grant No. 12401234).

\end{document}